\newcommand{\lName}{1}
\newcommand{\donothing}[1]{#1}
\newcommand{\JACM}{\if\lName1\donothing{Journal of the {ACM}}\else{JACM}\fi}
\newcommand{\SICOMP}{\if\lName1\donothing{{SIAM} Journal on Computing}\else{SICOMP}\fi}
\newcommand{\ToC}{\if\lName1\donothing{Theory of Computing}\else{ToC}\fi}
\newcommand{\ToCGS}{\if\lName1\donothing{Theory of Computing Graduate Surveys}\else{ToC}\fi}
\newcommand{\TOCT}{\if\lName1\donothing{{ACM} Transactions on Computation Theory}\else{TOCT}\fi}
\newcommand{\CCjournal}{\if\lName1\donothing{Computational Complexity}\else{CC}\fi}
\newcommand{\CJTCS}{\if\lName1\donothing{Chicago Journal of Theoretical Computer Science}\else{CJTCS}\fi}
\newcommand{\TCS}{\if\lName1\donothing{Theoretical Computer Science}\else{TCS}\fi}
\newcommand{\IPL}{\if\lName1\donothing{Information Processing Letters}\else{IPL}\fi}
\newcommand{\JCSS}{\if\lName1\donothing{Journal of Computer and System Sciences}\else{JCSS}\fi}
\newcommand{\RSA}{\if\lName1\donothing{Random Structures and Algorithms}\else{RSA}\fi}
\newcommand{\JCTA}{\if\lName1\donothing{Journal of Combinatorial Theory, Series A}\else{JCTA}\fi}
\newcommand{\JCTB}{\if\lName1\donothing{Journal of Combinatorial Theory, Series B}\else{JCTB}\fi}
\newcommand{\PJM}{\if\lName1\donothing{Pacific Journal of Mathematics}\else{PJM}\fi}
\newcommand{\QICjournal}{\if\lName1\donothing{Quantum Information and Computation}\else{QIC}\fi}
\newcommand{\IJQI}{\if\lName1\donothing{International Journal of Quantum Information}\else{IJQI}\fi}
\newcommand{\PRA}{\if\lName1\donothing{Physical Review A}\else{PRA}\fi}
\newcommand{\PRL}{\if\lName1\donothing{Physical Review Letters}\else{PRL}\fi}
\newtheorem{theorem}{Theorem}
\newtheorem{lemma}[theorem]{Lemma}
\newtheorem{corollary}[theorem]{Corollary}
\newtheorem{definition}[theorem]{Definition}
\newtheorem{fact}[theorem]{Fact}
\theoremstyle{definition}
\newcommand{\eq}[1]{\hyperref[eq:#1]{(\ref*{eq:#1})}}
\renewcommand{\sec}[1]{\hyperref[sec:#1]{Section~\ref*{sec:#1}}}
\newcommand{\thm}[1]{\hyperref[thm:#1]{Theorem~\ref*{thm:#1}}}
\newcommand{\lem}[1]{\hyperref[lem:#1]{Lemma~\ref*{lem:#1}}}
\newcommand{\defn}[1]{\hyperref[def:#1]{Definition~\ref*{def:#1}}}
\newcommand{\prop}[1]{\hyperref[prop:#1]{Proposition~\ref*{prop:#1}}}
\newcommand{\cor}[1]{\hyperref[cor:#1]{Corollary~\ref*{cor:#1}}}
\newcommand{\fig}[1]{\hyperref[fig:#1]{Figure~\ref*{fig:#1}}}
\newcommand{\tab}[1]{\hyperref[tab:#1]{Table~\ref*{tab:#1}}}
\newcommand{\alg}[1]{\hyperref[alg:#1]{Algorithm~\ref*{alg:#1}}}
\newcommand{\app}[1]{\hyperref[app:#1]{Appendix~\ref*{app:#1}}}
\newcommand{\conj}[1]{\hyperref[conj:#1]{Conjecture~\ref*{conj:#1}}}
\newcommand{\chap}[1]{\hyperref[chap:#1]{Chapter~\ref*{chap:#1}}}
\newcommand{\fct}[1]{\hyperref[fct:#1]{Fact~\ref*{fct:#1}}}
\DeclareMathOperator{\poly}{poly}
\DeclareMathOperator{\polylog}{polylog}
\newcommand{\tO}{\tilde{O}}
\newcommand{\tOmega}{\tilde{\Omega}}
\DeclareMathOperator{\Dom}{Dom}
\newcommand{\B}{\{0,1\}}
\newcommand{\X}{\mathcal{X}}
\newcommand{\Y}{\mathcal{Y}}
\newcommand{\Z}{\mathcal{Z}}
\newcommand{\CC}{\textsc{CC}}
\DeclareMathAlphabet{\mathbbold}{U}{bbold}{m}{n}
\DeclareMathOperator{\Tr}{Tr}
\DeclareMathOperator{\bR}{\mathbb{R}}
\DeclareMathOperator{\bN}{\mathbb{N}}
\newcommand{\cB}{\mathcal{B}}
\newcommand{\tX}{\widetilde{X}}
\newcommand{\tY}{\widetilde{Y}}
\newcommand{\norm}[1]{{\left\lVert#1\right\rVert}}
\newcommand{\OR}{\textsc{OR}}
\newcommand{\PrOR}{\textsc{PrOR}}
\newcommand{\AND}{\textsc{AND}}
\newcommand{\PARITY}{\textsc{Parity}}
\newcommand{\Disj}{\textsc{Disj}}
\newcommand{\IP}{\textsc{IP}}
\DeclareMathOperator{\D}{D}
\DeclareMathOperator{\R}{R}
\DeclareMathOperator{\Q}{Q}
\DeclareMathOperator{\bs}{bs}
\DeclareMathOperator{\fbs}{fbs}
\DeclareMathOperator{\cbs}{cbs}
\DeclareMathOperator{\cfbs}{cfbs}
\DeclareMathOperator{\Adv}{Adv}
\DeclareMathOperator{\M}{M} 
\DeclareMathOperator{\CAdv}{CAdv}
\DeclareMathOperator{\RCC}{RCC}
\DeclareMathOperator{\QCC}{QCC}
\DeclareMathOperator{\QICZ}{QICZ}
\DeclareMathOperator{\IC}{IC}
\DeclareMathOperator{\QIC}{QIC}
\DeclareMathOperator{\HQIC}{HQIC}
\DeclareMathOperator{\crit}{crit}
\DeclareMathOperator{\adeg}{\widetilde{\deg}}
\begin{document}

\title{On Query-to-Communication Lifting
for Adversary Bounds}

\author{
Anurag Anshu\\
\small University of California, Berkeley\\
\small \texttt{anuraganshu@berkeley.edu}
\and
Shalev Ben{-}David\\
\small University of Waterloo\\
\small \texttt{shalev.b@uwaterloo.ca}
\and
Srijita Kundu\\
\small National University of Singapore \\
\small \texttt{srijita.kundu@u.nus.edu}
}

\date{}
\maketitle

\begin{abstract}
We investigate query-to-communication lifting theorems
for models related to the quantum adversary bounds.
Our results are as follows:
\begin{enumerate}
\item We show that the classical adversary bound
lifts to a lower bound on randomized communication complexity
with a constant-sized gadget. We also
show that the classical adversary bound
is a strictly stronger lower bound technique
than the previously-lifted measure
known as critical block sensitivity,
making our lifting theorem one of the strongest lifting
theorems for randomized communication complexity
using a constant-sized gadget.
\item Turning to quantum models, we show a connection
between lifting theorems for quantum adversary bounds
and secure 2-party quantum computation in a certain
``honest-but-curious'' model. Under the assumption
that such secure 2-party computation is impossible,
we show that a simplified version of the positive-weight
adversary bound lifts to a quantum communication lower bound
using a constant-sized gadget.
We also give an unconditional lifting theorem
which lower bounds bounded-round quantum communication protocols.
\item Finally, we give some new results in query complexity.
We show that the classical adversary and the positive-weight
quantum adversary are quadratically related. We also show
that the positive-weight quantum adversary is never larger
than the square of the approximate degree. Both relations
hold even for partial functions.
\end{enumerate}
\end{abstract}

{\small\tableofcontents}

\section{Introduction}

Communication complexity is an important model of computation
with deep connections to many parts of theoretical computer science
\cite{KN96}. In communication complexity, two parties,
called Alice and Bob, receive inputs $x$ and $y$ from
sets $\X$ and $\Y$ respectively, and wish to compute
some joint function $F\colon\X\times\Y\to\B$ on their inputs.
Alice and Bob cooperate together, and
their goal is to minimize the number of bits they must exchange
before determining $F(x,y)$.

Recently, a lot of attention has been devoted to
connections between communication complexity and query complexity.
In particular, query-to-communication ``lifting'' theorems are powerful tools which convert lower bounds in query complexity into lower bounds in communication complexity in a black-box manner.
Since query lower bounds are typically much easier to prove
than communication lower bounds, these tools are highly useful
for the study of communication complexity, and often
come together with new communication complexity results
(such as separations between different communication
complexity models). For example, see
\cite{Goo15,GLM+16,GPW18,GPW20,CFK+19}.

Lifting theorems are known for many models of computation,
including deterministic \cite{GPW18} and randomized
\cite{GPW20} algorithms. Notably, however, a lifting theorem
for quantum query complexity is not known; the closest
thing available is a lifting theorem for approximate degree
(also known as the polynomial method), which lifts to approximate
logrank \cite{She11}. This allows quantum query lower bounds
proved via the polynomial method to be turned into quantum
communication lower bounds, but a similar statement is not
known even for the positive-weight quantum adversary method
\cite{Amb02,SS06}.

In this work, we investigate lifting theorems for the adversary
method and related models. We prove a lifting theorem
for a measure called the classical adversary bound.
For the quantum adversary method, we show that there is a
surprising connection with the cryptographic notion
of secure 2-party computation. Specifically,
we show that a lifting theorem for a simplified
version of the positive-weight adversary method follows
from a plausible conjecture regarding the impossibility
of secure 2-party computation in a certain ``honest but curious''
quantum model. We also prove an unconditional
lifting theorem which lower bounds bounded-round
quantum algorithms.

Finally, we prove some query complexity results that may be
of independent interest: first, a quadratic relationship
between the positive-weight adversary bound and the
classical adversary bound; and second, we show that
the positive-weight adversary bound can never be larger than
the square of the approximate degree. This means that
the (positive) adversary method can never beat the polynomial
method by more than a quadratic factor. These results
hold even for partial functions.

\subsection{Lifting theorems}

The statement of a lifting theorem typically has the following form:
\[\M^{cc}(f\circ G)=\Omega(\M(f)).\]
Here $G\colon\X\times\Y\to\B$
is a (fixed) communication complexity function, called
a ``gadget'', which typically has low communication cost;
$f\colon\B^n\to\B$ is an arbitrary Boolean function; $\M(\cdot)$ is a
measure in query complexity, representing the cost of computing
the function $f$ in query complexity; and $\M^{cc}(\cdot)$ is a
measure in communication complexity. The notation $f\circ G$
denotes the block-composition of $f$ with $G$.
This is a communication complexity function defined
as follows: Alice gets input $(x_1,x_2,\dots, x_n)\in\X^n$,
Bob gets input $(y_1,y_2,\dots,y_n)\in\Y^n$, and they
must output $f(G(x_1,y_1),G(x_2,y_2),\dots,G(x_n,y_n))$.
Hence $f\circ G$ is a function with signature $\X^n\times\Y^n\to\B$.

There are two primary types of lifting theorems:
those that work with a constant-sized gadget $G$
(independent of $f$), and those that
work with a gadget $G$ whose size logarithmic in the
input size $n$ of $f$.\footnote{Sometimes, lifting theorems
use a gadget $G$ which is large -- polynomial in $n$ --
but which can still be computed using $O(\log n)$ communication.}
The latter type tend to be much more prevalent;
recent lifting theorems for deterministic and randomized
communication complexities all use log-sized or larger gadgets 
\cite{GPW18,WYY17,CKLM19,GPW20,CFK+19}. We remark, however,
that even with log-sized or larger gadgets, lifting theorems
are highly nontrivial to prove: a lifting theorem for
$\mathsf{BPP}$, which lifts randomized query lower bounds
to randomized communication lower bounds, was only established
in the last few years, while an analogous result
for $\mathsf{BQP}$ remains an open problem.

Lifting theorems which work with a constant-sized gadget
are even harder to prove,
but often turn out to be much more useful.
The reason is that common function families, like
disjointness (which we denote $\Disj_n$)\footnote{In the disjointness
function, Alice and Bob receive $n$-bit strings $x$ and $y$ and must
output $1$ if and only if there exists an index $i\in[n]$
such that $x_i=y_i=1$.}
or inner product (which we denote $\IP_n$)\footnote{In the
inner product function, Alice and Bob receive $n$-bit
strings $x$ and $y$, and must compute inner product
of those strings over $\mathbb{F}_2$.}, are
\emph{universal}. This means for every communication function
$G\colon\X\times\Y\to\B$, its communication matrix (that is,
its truth table) is a submatrix of the communication matrix
of (a sufficiently large instance of) the $\Disj$
function. In other words, every communication function
is a sub-function of $\Disj_k$ and $\IP_k$
for sufficiently large $k$.
If the size of $G$ is constant, then it is necessarily
contained in a $\Disj$ function of \emph{constant}
size (and similarly for $\IP$). Hence
lifting with any constant-sized gadget $G$
is enough to guarantee a lifting theorem with
a constant-sized disjointness gadget and constant-sized
inner product gadget (and similarly for every
other universal function family).
In short, lifting with a constant-sized gadget
implies lifting with (almost) any gadget of your choice.

In particular, a lifting theorem with a constant-sized gadget
immediately implies a lower bound for $\Disj_n$
and $\IP_n$ themselves. To see this,
suppose we had a lifting theorem 
\[\M^{cc}(f\circ G)=\Omega(\M(f))\]
for all Boolean functions $f$ and a fixed (constant-sized)
communication gadget $G$. Then $G$ is a sub-function
of $\Disj_k$ and of $\IP_k$ for some constant $k$.
Note that $\Disj_n=\OR_{n/k}\circ \Disj_k$
and that $\IP_n=\PARITY_{n/k}\circ \IP_k$.
Hence we get $\M^{cc}(\Disj_n)=\Omega(\M(\OR_{n/k}))$
and $\M^{cc}(\IP_n)=\Omega(\M(\PARITY_{n/k}))$.
Since $k$ is constant, this can potentially give
lower bounds on $\M^{cc}(\Disj_n)$ and on $\M^{cc}(\IP_n)$
that are tight up to constant factors, depending
on the measures $\M^{cc}(\cdot)$ and $\M(\cdot)$.

There have only been a handful of lifting theorems
which work with constant-sized gadgets.
One such result follows from Sherstov's work
for approximate degree and related measures \cite{She11}.
The part of that work which is most relevant to us
is the lifting of approximate degree to lower bounds
on approximate logrank, and hence on the quantum
communication complexity of the lifted function.
Sherstov's work means that if one can prove
a quantum lower bound for a query function $f$
using the \emph{polynomial method} \cite{BBC+01}, then this lower
bound will also apply to the quantum
communication complexity of $f\circ G$, where $G$
is a constant sized gadget. Such a lifting theorem
is not known to hold for the adversary methods
\cite{Amb02,SS06}, however (not even with a log-sized gadget).

Another lifting theorem with a constant-sized gadget
appears in \cite{HN12,GP18}. There, a query
measure called \emph{critical block sensitivity} \cite{HN12}
is lifted to a lower bound on randomized communication complexity.

\subsection{Adversary methods}

The quantum adversary bounds are extremely useful methods
for lower bounding quantum query complexity. The original
adversary method was introduced by Ambainis \cite{Amb02}.
It was later generalized in several ways, which were shown
to all be equivalent \cite{SS06}, and are known as the
positive-weight adversary bound, denoted $\Adv(f)$.
This bound has many convenient properties: it has
many equivalent formulations (among them a semidefinite
program), it is reasonably easy to use in practice,
and it behaves
nicely under many operations, such as composition.
The positive-weight adversary bound is one of the most
commonly used techniques for lower bounding quantum
query complexity.

A related measure is called the negative-weight
adversary bound, introduced in \cite{HLS07},
which we denote by $\Adv^{\pm}(f)$. This is a
strengthening of the positive adversary bound, and satisfies
$\Adv^{\pm}(f)\ge\Adv(f)$ for all (possibly partial)
Boolean functions $f$. Surprisingly, in \cite{Rei11,LMR+11},
it was shown that the negative-weight adversary is actually
\emph{equal} to quantum query complexity up to constant factors.

The quantum adversary methods have no known communication complexity
analogues. However, that by itself does not rule out a lifting
theorem: one might still hope to lift $\Adv(f)$ or
$\Adv^{\pm}(f)$ to lower bounds on quantum communication
complexity, similar to how critical block sensitivity
$\cbs(f)$ was lifted \cite{HN12,GP18}. Unfortunately,
no such lifting theorems are currently known, not even for
the positive-weight adversary method, and not even with
a large gadget size.

Interestingly, it is possible to define
a lower bound technique for \emph{randomized}
algorithms which is motivated by the (positive)
quantum adversary method. This measure was first introduced
in \cite{Aar06,LM08}, and different variants of it have been
subsequently studied \cite{AKPV18}. Here, we use the largest
of these variants, which we denote by $\CAdv(f)$
(in \cite{AKPV18}, it was denoted by $\operatorname{CMM}(f)$).
In \cite{AKPV18}, it was shown that
for total functions $f$, $\CAdv(f)$ is (up to constant
factors) equal to a measure called fractional block sensitivity,
which we denote $\fbs(f)$. However, for partial functions,
there can be a large separation between the two measures.
For more on fractional block sensitivity,
see \cite{Aar08,KT16}.

\subsection{Our contributions}

\subsubsection*{Lifting the classical adversary}

Our first contribution is a lifting theorem for the classical
adversary bound $\CAdv(f)$. We lift it to a lower bound on randomized
communication complexity using a constant-sized gadget.

\begin{theorem}\label{thm:CAdv_lift}
There is an explicitly given function
$G\colon \X\times\Y\to\B$ such that
for any (possibly partial) Boolean function $f$,
\[\RCC(f\circ G)=\Omega(\CAdv(f)).\]
\end{theorem}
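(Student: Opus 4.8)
The plan is to prove this lifting theorem by an adversary-style argument on the communication side, using the structure of $\CAdv$ combined with a carefully chosen gadget. I would first recall the semidefinite-program / weight-scheme formulation of $\CAdv(f)$: it is captured (up to constants) by a nonnegative ``adversary matrix'' $\Gamma$ indexed by inputs in $\Dom(f)$, supported on pairs $(x,x')$ with $f(x)\ne f(x')$, together with the usual normalization that for each index $i\in[n]$ the ``restriction'' of $\Gamma$ to pairs differing in coordinate $i$ has small spectral norm relative to $\|\Gamma\|$; for the classical adversary one instead bounds an $\ell_1$- or probability-type quantity rather than a spectral norm, which is exactly what makes it a randomized rather than quantum lower bound. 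The key conceptual step is to lift such a weight scheme for $f$ to a corresponding weight scheme (or, equivalently, a hard distribution together with a "progress measure" argument) for $f\circ G$ on the communication side, where progress is measured in terms of the number of bits exchanged rather than the number of queries.

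Concretely, I would proceed as follows. First, fix a constant-sized gadget $G\colon\X\times\Y\to\B$ with good "lifting-friendly" properties — most naturally one for which the canonical lifting machinery of \cite{GPW20} applies, i.e. $\X=\{0,1\}^b$, $\Y$ a suitable set of indices/shifts, so that inputs to $G$ behave like a block of pointer-chasing structure and the gadget value is effectively hidden until enough communication has occurred; I expect the inner-product gadget $\IP_b$ or a small index gadget on a constant number of bits to be the right choice, and the constant $b$ will be chosen at the end to make the simulation go through. Second, given a randomized communication protocol $\Pi$ for $f\circ G$ of cost $c$, I would convert it (via the usual internal-information / rectangle machinery, or a direct simulation) into an object on the query side: intuitively, a short communication transcript partitions $\X^n\times\Y^n$ into rectangles, and on each rectangle most of the $n$ blocks still "look unqueried," in the sense that the induced distribution on $G(x_i,y_i)$ is close to the gadget's balanced distribution. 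The heart of the argument is then a potential-function / adversary-weight bound: I would take the hard distribution and weight scheme $\Gamma$ realizing $\CAdv(f)$, push it forward through $G$ to a distribution on $\X^n\times\Y^n$, and show that after one round (or one bit) of communication, the expected $\Gamma$-weighted "distinguishing advantage" can drop by only a $1-\Omega(1/\CAdv(f))$ factor, so that reaching constant distinguishing advantage (which $\Pi$ must do to compute $f\circ G$) requires $c=\Omega(\CAdv(f))$ bits. The per-bit decrement bound is where the gadget's structure is used: revealing one bit can only "affect" the marginal on blocks in a controlled way, and the $\CAdv$ normalization on $\Gamma$ — the fact that for every coordinate $i$ the relevant restricted weight is at most $\|\Gamma\|/\CAdv(f)$ — is exactly what translates into the claimed decrement.

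The main obstacle, as usual in constant-gadget lifting, will be controlling the simulation without a large gadget to "hide" information: with only $O(1)$ bits per block, a single bit of communication can reveal a nontrivial amount about one block's gadget value, and one must argue that it cannot reveal much about \emph{many} blocks at once, nor about the \emph{combination} of gadget values that $f$ depends on. I would handle this by exploiting that $\CAdv$ is a robust, $\ell_1$-type measure: unlike a spectral bound, it is stable under the kind of convex-combination / averaging that arises from conditioning a randomized protocol on its transcript, so I can afford to lose constant factors at each conditioning step and absorb them into the $\Omega(\cdot)$. A secondary technical point is dealing with \emph{partial} $f$: since $f\circ G$ is only defined on inputs whose gadget values land in $\Dom(f)$, I must ensure the pushed-forward hard distribution is supported on such inputs, which it is by construction since $\Gamma$ is supported on $\Dom(f)$. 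Finally, I would verify that the explicit gadget $G$ I chose is genuinely constant-sized and that all constants are independent of $f$ and $n$, which makes the theorem usable (via the universality argument in the introduction) to deduce $\RCC(\Disj_n)=\Omega(\CAdv(\OR_{n/k}))$-type consequences.
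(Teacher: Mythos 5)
Your proposal takes the ``dual'' route --- a hard distribution plus a progress-measure argument showing each communicated bit decreases a $\Gamma$-weighted distinguishing advantage by at most a $1-\Omega(1/\CAdv(f))$ factor --- whereas the paper works entirely with the \emph{primal} (minimization) form of $\CAdv$: given a protocol $\Pi$ for $f\circ G$, it directly defines a weight scheme $q(z,i)$ as a per-coordinate conditional information cost of $\Pi$ against a product distribution $\mu_z$ over gadget preimages of $z$ (with BJKS-style dependency-breaking variables). The objective value $\sum_i q(z,i)$ is then automatically at most $\CC(\Pi)$ by the chain rule, and the feasibility constraint $\sum_{i:z_i\ne w_i}\min\{q(z,i),q(w,i)\}=\Omega(1)$ is verified by a reduction: if both information quantities were small, one could embed a single instance of $G$ into the composed problem (using the gadget's flippability and random self-reducibility to generate the surrounding inputs with no communication) and obtain a protocol for $G$ with $o(1)$ information cost on $0$-inputs or $1$-inputs, contradicting the $\Omega(1)$ information lower bound for $\AND_2$/$\OR_2$. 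The gadget is the versatile \textsc{VER} gadget of G\"o\"os--Pitassi, not an inner-product or index gadget, and no round-by-round simulation is performed.

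The genuine gap in your proposal is the per-bit decrement lemma, which you assert but do not prove, and which I do not believe can be proved by the machinery you invoke. With a constant-sized gadget a single communicated bit can fully determine one block's gadget value, and worse, it can be correlated with \emph{all} $n$ blocks simultaneously (e.g.\ Alice sends a parity of bits drawn from every block); your appeal to the ``robustness'' of $\CAdv$ under averaging does not address how the total damage across blocks is controlled, nor how the $\min\{q(x,i),q(y,i)\}$ structure of the $\CAdv$ constraint (as opposed to a per-coordinate norm bound) enters the decrement. This direct-sum issue is exactly what the paper's chain-rule decomposition of information cost and its dependency-breaking variables are designed to handle. Relatedly, invoking the \cite{GPW20} simulation framework is a dead end here: that machinery requires logarithmic-sized gadgets to hide information, which is why the paper abandons simulation entirely in favor of the information-complexity reduction, and why the specific versatility properties of the gadget (which your proposal never uses) are essential --- they are what let one plant a single worst-case instance of $G$ into coordinate blocks of the composed input without any communication.
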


Here $\RCC(f\circ G)$
denotes the randomized communication complexity of $f\circ G$
with shared randomness.
We note that \cite{HN12,GP18} provided a lifting
theorem that has a similar form, only with the measure $\cbs(f)$
in place of $\CAdv(f)$. To compare the two theorems,
we should compare the two query measures. We have the following
theorem.

\begin{lemma}\label{lem:CAdv_cbs}
For all (possibly partial) Boolean functions $f$,
$\CAdv(f)=\Omega(\cbs(f))$. Moreover, there is a family
of total functions $f$ for which $\CAdv(f)=\Omega(\cbs(f)^{3/2})$. 
\end{lemma}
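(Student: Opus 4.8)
The plan is to treat the lower bound $\CAdv(f)=\Omega(\cbs(f))$ and the polynomial separation separately.

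For the lower bound, the quickest observation is that a critical input is in particular an input, so $\cbs(f)\le\bs(f)$, and hence it is enough to prove $\CAdv(f)=\Omega(\bs(f))$ for every (possibly partial) $f$ — for total $f$ this even follows immediately from the equality $\CAdv(f)=\Theta(\fbs(f))$ of \cite{AKPV18} together with $\fbs(f)\ge\bs(f)$, but we want the partial case too. I would prove $\CAdv(f)=\Omega(\bs(f))$ directly from a weight-scheme (minimax / $\operatorname{CMM}$) formulation of $\CAdv$: take an input $z\in\Dom(f)$ achieving $\bs(f)$, with pairwise disjoint blocks $B_1,\dots,B_k$ such that $z^{B_i}\in\Dom(f)$ and $f(z^{B_i})\neq f(z)$, and place unit weight on each pair $(z,z^{B_i})$. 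Because $z$ and $z^{B_i}$ differ only on $B_i$ and the $B_i$ are pairwise disjoint, every coordinate is charged by at most one of these pairs, so the per-coordinate loads that appear in the denominator of the $\CAdv$ bound are all at most $1$, while the total weight at $z$ is $k$. Substituting into the formulation gives $\CAdv(f)=\Omega(k)$. The only thing that needs care is matching the precise normalization of whichever $\CAdv$ formulation is used; the combinatorics is exactly the classical proof that block sensitivity lower-bounds the adversary bound.

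For the separation, I would again use $\CAdv(f)=\Theta(\fbs(f))$ on total functions, reducing the goal to a family of total functions with $\fbs(f)=\Omega(\cbs(f)^{3/2})$. One option, if a clean instance is available, is to quote a known $\fbs$-vs-$\bs$ separation of exponent $3/2$: since $\cbs(f)\le\bs(f)$ this transfers verbatim. A more flexible option is to engineer $f$ so that every critical input (one possessing a unique minimal certificate) has small block sensitivity, while $\fbs(f)$ — a maximum over \emph{all} inputs rather than just the critical ones — remains large; this forces $\cbs(f)$ strictly below $\bs(f)$ and lets one hit exponent $3/2$ even with a weaker $\fbs$-vs-$\bs$ gap. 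Concretely I would isolate a small base gadget that realizes the desired local relation among $\fbs$, $\bs$ and $\cbs$, and then amplify by block-composition: $\fbs$ is multiplicative under composition, $\cbs$ is at most multiplicative, and the equality $\CAdv=\Theta(\fbs)$ is preserved under composition of total functions, so a single gadget with $\fbs\ge\cbs^{3/2}$ bootstraps to an infinite family.

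The step I expect to be the main obstacle is the upper bound on $\cbs$ in the separation: one must show that \emph{every} critical input of the base function has small block sensitivity, and that this behaviour survives composition. In contrast to $\fbs$ and $\bs$, which are certified by a single good input, controlling $\cbs$ requires a handle on the whole set of critical inputs and, for each, on the largest disjoint family of sensitive blocks. The matching $\fbs$ lower bound, the verification that the constructed function is total and composes correctly, and the normalization bookkeeping in the first part are all routine by comparison.
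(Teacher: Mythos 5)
The first half of your proposal rests on the claim that $\cbs(f)\le\bs(f)$ because ``a critical input is in particular an input.'' That inequality is backwards for partial functions, which is exactly the case the lemma is about. By definition, $\cbs(f)=\min_{f'}\max_{x\in\Dom(f)}\bs(x,f')$ where $f'$ ranges over \emph{total} completions of $f$, and the blocks are sensitive with respect to $f'$, not $f$. Since every completion agrees with $f$ on $\Dom(f)$, every block sensitive for $f$ remains sensitive for $f'$, so $\max_{x\in\Dom(f)}\bs(x,f')\ge\bs(f)$ for every completion; hence $\cbs(f)\ge\bs(f)$ always, and the gap can be large (this is the whole point of critical block sensitivity as a lower-bound technique for partial functions and relations). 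Consequently, proving $\CAdv(f)=\Omega(\bs(f))$ --- which your disjoint-blocks weight argument does establish --- is strictly weaker than the claimed $\CAdv(f)=\Omega(\cbs(f))$ and does not close the first part of the lemma. The route the paper takes is $\cbs(f)\le\cfbs(f)\le 2\CAdv(f)$, where the substantive step is the second inequality: from a feasible weight scheme $q(\cdot,\cdot)$ for $\CAdv(f)$ one must \emph{construct} a completion $f'$ (assigning each $z\notin\Dom(f)$ the value $f(z')$ of the domain point $z'$ minimizing $\sum_{i:z'_i\ne z_i}q(z',i)$) and then verify that $2q(x,\cdot)$ is feasible for the dual of $\fbs(x,f')$ at every $x\in\Dom(f)$. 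That construction of a completion from the weight scheme is the idea missing from your proposal.

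The second half of your proposal is essentially the intended argument and is fine: for total $f$ one has $\cbs(f)=\bs(f)$ and $\CAdv(f)=\Theta(\fbs(f))$, so a known family with $\fbs(f)=\Omega(\bs(f)^{3/2})$ gives the separation. (Your ``more flexible option'' of making the critical inputs have small block sensitivity is vacuous for total functions, where every input is critical and the only completion is $f$ itself, but the first option suffices.)
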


\lem{CAdv_cbs} says that $\CAdv(f)$ is a strictly stronger
lower bound technique than $\cbs(f)$, and hence \thm{CAdv_lift}
is stronger than the lifting theorem of \cite{GP18}.
This makes \lem{CAdv_cbs} one of the strongest known
lifting theorems for randomized communication complexity
which works with a constant-sized gadget\footnote{Sherstov's
lifting theorem for approximate degree \cite{She11} also
works with a constant-sized gadget, and is incomparable
to our result as a lower bound technique for randomized
communication complexity.}.

We note that the lifting theorem of \cite{GP18} for
the measure $\cbs(f)$ also works when $f$ is a \emph{relation},
which is a more general setting than partial functions;
indeed, most of their applications for the lifting theorem
were for relations $f$ rather than functions.
We extend \thm{CAdv_lift} to relations as well, and also
show that $\CAdv(f)=\Omega(\cbs(f))$ for all relations.
In fact, it turns out that for partial functions, $\CAdv(f)$
is equal to a fractional version of $\cbs(f)$, which we denote
$\cfbs(f)$; however, for relations, $\CAdv(f)$ is a stronger
lower bound technique than $\cfbs(f)$ (which in turn
is stronger than $\cbs(f)$). We also note that our
techniques for lifting $\CAdv(f)$ are substantially different
from those of \cite{HN12,GP18}.

\subsubsection*{Lifting quantum measures}

Our first quantum result says that $\CAdv(f)$ lifts
to a lower bound on bounded-round quantum communication
protocols. This may seem surprising, as $\CAdv(f)$
does not lower bound quantum algorithms in query complexity;
however, one can show that $\CAdv(f)$ does lower bound
\emph{non-adaptive} quantum query complexity, or even
quantum query algorithms with limited adaptivity.
This motivates the following result.

\begin{theorem}
There is an explicitly given function $G\colon\X\times\Y\to\B$
such that for any (possibly partial) Boolean function $f$,
\[\QCC^r(f\circ G)=\Omega(\CAdv(f)/r^2).\]
Here $\QCC^r(\cdot)$ denotes the quantum communication
complexity for an $r$-round quantum protocol with
shared entanglement.
\end{theorem}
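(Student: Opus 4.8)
The plan is to use the \emph{same} gadget $G$ as in \thm{CAdv_lift}, and to combine two ingredients: a bounded-round query-complexity lower bound in terms of $\CAdv$, and a round-preserving simulation of bounded-round quantum protocols by bounded-round quantum query algorithms for $f$.

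For the first ingredient, although $\CAdv(f)$ does not lower bound quantum query complexity $\Q(f)$ in general (indeed the query results of this paper show $\CAdv(f)$ can be quadratically larger than $\Adv(f)$), it does lower bound the query cost of quantum algorithms of limited adaptivity. Concretely, writing $\QS^r(f)$ for the number of queries needed by a quantum algorithm restricted to $r$ rounds of (arbitrarily wide) parallel queries, I would prove $\QS^r(f) = \Omega(\CAdv(f)/r^2)$. The argument adapts the adversary/potential method: using the weight-scheme (SDP) formulation of $\CAdv(f)$ one tracks a potential measuring the overlap of the algorithm's state against a worst-case pair $(x,y)$ with $f(x)\neq f(y)$; within a round of $q_t$ parallel queries the potential can move by at most $O(\sqrt{q_t/\CAdv(f)})$, and since a constant amount of potential must be accumulated over the $r$ rounds, Cauchy--Schwarz forces $\sum_t q_t = \Omega(\CAdv(f)/r)$ — a more careful accounting of how a round of parallel queries acts on a $\CAdv$ weight scheme (as opposed to a quantum adversary matrix) is where the extra factor of $r$ is lost. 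For $r=1$ this recovers the known fact that $\CAdv(f)$ lower bounds non-adaptive quantum query complexity.

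For the second ingredient, I would show that an $r$-round quantum protocol $\Pi$ for $f\circ G$ of cost $c$ can be converted, round by round, into an $O(r)$-round quantum query algorithm for $f$ making $O(c)$ queries, carrying the shared entanglement and internal quantum state of $\Pi$ along coherently. This step is where the special structure of $G$ is used — the same properties exploited in the proof of \thm{CAdv_lift}, which make constant-size-gadget lifting possible for $\CAdv$ even though it fails for $\mathsf{BPP}$: one round of communication in $\Pi$ is implemented by a single round of parallel queries, one into each of the $n$ gadget blocks, and soundness is analyzed against the lifted hard distribution coming from the $\CAdv$ weight scheme of $f$. Combining the two ingredients gives $c = \Omega(\QS^{O(r)}(f)) = \Omega(\CAdv(f)/r^2)$, as claimed.

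The main obstacle is the second ingredient in the quantum setting: unlike in the randomized case, we must faithfully simulate a quantum protocol — including shared entanglement and the coherent dependence of later messages on earlier ones — by queries to the gadget blocks while keeping the number of rounds under control, and then argue that too cheap such a simulation would violate the $\QS^r$ bound. A secondary point is pinning down the correct power of $r$ in the $\QS^r$ lower bound (hence in the theorem); the $r^2$ here need not be optimal, and any improvement would immediately sharpen the communication bound.
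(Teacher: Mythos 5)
There is a genuine gap, and it lies in your second ingredient. Your plan hinges on a round-preserving, cost-preserving simulation: converting an $r$-round quantum protocol for $f\circ G$ of communication cost $c$ into an $O(r)$-round quantum query algorithm for $f$ making $O(c)$ queries. No such simulation is known, and it is essentially the hard open problem in this area rather than a step one can assume: a simulation of this kind (even without the round restriction) would be a query-to-communication lifting theorem for quantum protocols in the style of \cite{GPW18,GPW20}, which the paper explicitly notes is open even for $\mathsf{BQP}$ with log-sized gadgets, let alone a constant-sized gadget. Your sketch is also internally inconsistent: you say the simulation makes $O(c)$ queries, but then implement each communication round by one parallel query into \emph{each} of the $n$ gadget blocks, which costs $\Omega(nr)$ queries and destroys the bound. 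The properties of the versatile gadget (flippability, random self-reducibility, non-triviality) that power \thm{CAdv_lift} are not used in the paper to simulate a protocol by a query algorithm at all, so appealing to ``the same properties'' does not supply the missing simulation. Your first ingredient, a bound of the form $\QS^r(f)=\Omega(\CAdv(f)/r^2)$ for $r$-round parallel-query quantum algorithms, is plausible (the paper itself remarks that $\CAdv$ lower bounds quantum algorithms of limited adaptivity, as motivation), but it is only sketched and is not the bottleneck.

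The paper's actual proof avoids any protocol-to-query-algorithm simulation. It works in the ``primal'' direction: from an $r$-round protocol $\Pi$ for $f\circ G$ it directly builds a candidate weight scheme for the minimization form of $\CAdv(f)$, setting $q'(z,i)$ to be the per-coordinate Holevo quantum information cost of $\Pi$ against the product distribution $\mu_z$ over gadget inputs evaluating to $z$, with dependency-breaking variables $D,U$. The objective value is controlled because $\sum_i q'(z,i)=\HQIC(\Pi,\mu_z)\le r\,\QIC(\Pi,\mu_z)\le r\,\QCC(\Pi)$ (one factor of $r$), and feasibility, i.e.\ $\sum_{i:z_i\ne w_i}\min\{q'(z,i),q'(w,i)\}=\tOmega(1/r)$, is proved via a hybrid input $v$ and a reduction that turns $\Pi$ into a low-information protocol for the gadget $G$ itself, against which the $\tOmega(1/r)$ round-dependent QIC lower bound for $\AND_2$/$\OR_2$ (\lem{ANDOR-quantum}) applies (the second factor of $r$). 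If you want to pursue your route, you would need to actually prove both the $\QS^r$ adversary bound and, much more seriously, the bounded-round quantum simulation theorem; the latter would be a substantially stronger result than the theorem you are asked to prove.
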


We note that since any $r$-round protocol has communication
cost at least $r$, we actually get a lower bound of
$\CAdv(f)/r^2+r$. Minimizing over $r$ yields
a lower bound of $\CAdv(f)^{1/3}$ even on unbounded-round
protocols. This may not seem very useful, since
$\CAdv(f)^{1/3}$ is smaller than $\adeg(f)$,
a measure we know how to lift \cite{She11}. However,
we can generalize this result to relations. For
relations, we do not know how to compare $\CAdv(f)^{1/3}$
to $\adeg(f)$, and therefore our lifting theorem gives
something new, even in the unbounded-round setting.

\begin{corollary}
There is an explicitly given function $G\colon\X\times\Y\to\B$
such that for any relation $f$,
\[\QCC(f\circ G)=\Omega(\CAdv(f)^{1/3}),\]
where $\QCC$ denotes the quantum communication complexity
with shared entanglement.
\end{corollary}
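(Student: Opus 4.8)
The plan is to deduce the corollary from the bounded-round lifting theorem stated just above, after first upgrading that theorem to relations, and then optimizing over the round parameter.

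First I would establish the relational version of the preceding theorem: with the same explicitly given gadget $G$, the bound $\QCC^r(f\circ G)=\Omega(\CAdv(f)/r^2)$ holds for every relation $f$, not only for (possibly partial) Boolean functions. This requires revisiting the proof of the theorem and checking that no step uses single-valuedness of $f$. The definition of $\CAdv(f)$ for relations is the natural extension of the partial-function definition (it coincides with the latter when $f$ is single-valued), and the round-by-round simulation underlying the theorem interacts with $f$ only through the requirement that the protocol output be correct on a given input --- which for relations simply becomes the requirement that the output lie in the allowed output set, a condition that $\CAdv$ for relations is set up to accommodate. I expect this generalization to be the main piece of work, though I anticipate it to be mostly bookkeeping: the adversary-style potential that is tracked across rounds is a statement about the gadget inputs and about $\CAdv$, and only the final correctness requirement changes when passing from functions to relations.

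Granting the relational bound, the corollary is immediate. An $r$-round quantum protocol has communication cost at least $r$ (each round carries at least one qubit), and conversely any protocol of cost $c$ uses at most $c$ rounds, hence is in particular a $c$-round protocol; so the relational theorem gives, for every $r$ and every relation $f$, that any $r$-round protocol for $f\circ G$ has cost $\Omega(\max\{r,\ \CAdv(f)/r^2\})$. The right-hand side, minimized over $r$, is $\Omega(\CAdv(f)^{1/3})$, with the minimum attained near $r=\CAdv(f)^{1/3}$, where the two terms in the maximum balance. Since every quantum protocol computing $f\circ G$ has some number of rounds, this yields $\QCC(f\circ G)=\Omega(\CAdv(f)^{1/3})$, as claimed. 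The only point needing care is the trivial inequality $r\le c$ for a cost-$c$ protocol (with the usual convention that empty or padded rounds are permitted), which is precisely what lets us turn the per-$r$ bound into an unconditional one; modulo the relational generalization carried out in the first step, nothing further is required.
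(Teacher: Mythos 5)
Your proposal is correct and is essentially the paper's own argument: the paper derives the corollary by combining the bounded-round theorem (whose proof in the body is already carried out for relations, via the condition $f(z)\cap f(w)=\varnothing$) with the observation that an $r$-round protocol has cost at least $r$, then balancing $r$ against $\CAdv(f)/r^2$ at $r\approx\CAdv(f)^{1/3}$. The relational generalization you flag as the main work is indeed the substantive point, and it is handled exactly as you anticipate—only the correctness condition changes.
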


We next turn our attention to lower bounding unbounded-round
quantum communication protocols by lifting a quantum
adversary method. Instead of aiming for the positive-weight
adversary bound, we instead work with a simplified version,
studied in \cite{ABK+20}, which we denote
$\Adv_1(f)$. This measure is a restriction of $\Adv$ to a
pairs of inputs with a single bit of difference.

We have $\Adv_1(f)\le \Adv(f)$, and \cite{ABK+20}
showed that $\Adv_1(f)=O(\adeg(f))$.
However, their proof of the latter is tricky,
and we do not use it here; we give a direct lifting
of $\Adv_1(f)$ (under a certain assumption),
and we argue that the techniques we use are likely
to generalize to lifting $\Adv(f)$ in the future.

We prove the following theorem.

\begin{theorem}\label{thm:Adv_one_lift}
For any (possibly partial) Boolean function $f$
and any communication function $G$ which contains
both $\AND_2$ and $\OR_2$ as subfunctions, we have
\[\QCC(f\circ G)=\Omega(\Adv_1(f)\QICZ(G)).\]
This also holds for relations $f$.
\end{theorem}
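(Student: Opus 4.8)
The plan is to combine the adversary method on the query side with a round-elimination-free, information-theoretic argument on the communication side, using the quantity $\QICZ(G)$ (the ``quantum information cost'' of the gadget in the honest-but-curious model hinted at in the introduction) as the bridge. Concretely, suppose $P$ is a quantum protocol with shared entanglement computing $f\circ G$ on inputs $(x,y)\in\X^n\times\Y^n$ with error at most $1/3$ and communication cost $c=\QCC(f\circ G)$. I would first set up the adversary dual object: by definition of $\Adv_1(f)$ there is a weight scheme supported on pairs $(u,v)$ of inputs to $f$ differing in exactly one coordinate $i$, together with an optimal adversary matrix $\Gamma$, such that $\Adv_1(f)=\norm{\Gamma}/\max_i\norm{\Gamma\circ D_i}$ in the usual ratio form. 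The standard adversary ``progress measure'' argument, applied to the communication protocol viewed as a sequence of unitaries, will show that to distinguish inputs $u$ from $v$ (as $f$-inputs) the protocol must make progress; and because $u$ and $v$ differ in only one $f$-coordinate, each message round can only make progress on the single ``active'' gadget coordinate $i$ where the $G$-input was flipped between the relevant embeddings.

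Next, I would localize the cost to that single coordinate. Since $G$ contains both $\AND_2$ and $\OR_2$ as subfunctions, for any pair $(u,v)$ differing in coordinate $i$ I can embed $G$-inputs so that changing the $i$-th gadget input from a ``$0$-producing'' to a ``$1$-producing'' pair forces the protocol, restricted to the registers touching coordinate $i$, to essentially solve a single copy of $G$ in a way that reveals enough information about both Alice's and Bob's local $G$-inputs. This is exactly the honest-but-curious security scenario: the protocol, while computing $f\circ G$, must ``leak'' information about $G(x_i,y_i)$ in the sense captured by $\QICZ(G)$. I would make this precise by a reduction: from $P$ together with the adversary weights, extract for each active coordinate a protocol computing $G$ whose quantum information cost is $O(c/\Adv_1(f))$ — the $\Adv_1(f)$ factor arising because the adversary bound certifies that at least $\Omega(\Adv_1(f))$ ``units of progress'' must be distributed across the rounds, and each unit, when it lands on coordinate $i$, contributes $\Omega(\QICZ(G))$ to the information cost of that coordinate's induced $G$-protocol. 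Averaging over coordinates and over the adversary distribution, and using the definition of $\QICZ$ as a lower bound on the information any such $G$-protocol must carry, yields $c=\Omega(\Adv_1(f)\cdot\QICZ(G))$.

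I expect the main obstacle to be the middle step: cleanly decomposing the global protocol's progress so that it is simultaneously (a) lower bounded in total by $\Adv_1(f)$ via the adversary argument, and (b) upper bounded coordinate-by-coordinate by the information-cost-of-$G$ quantity $\QICZ(G)$, without the two bookkeeping schemes interfering. The subtlety is that the adversary progress measure is phrased in terms of operator norms of Gram matrices of global states, whereas $\QICZ(G)$ is an information-theoretic per-gadget quantity; bridging these requires showing that a drop in the adversary Gram-matrix overlap along coordinate $i$ can only happen if the protocol's transcript carries $\Omega(\QICZ(G))$ bits of honest-but-curious information about the $i$-th gadget. I would handle this by passing through a hybrid argument over the weighted pairs and invoking the defining property of $\QICZ(G)$ (which, as the paper's framing suggests, is precisely the right information measure making this implication hold), and by checking that the $\AND_2/\OR_2$-containment hypothesis gives enough freedom in the embedding to realize the required worst-case $G$-instances at every coordinate. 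Extending to relations $f$ is then routine: the adversary bound and $\Adv_1$ are defined for relations via the same ratio, and nothing in the communication reduction used that $f$ was single-valued.
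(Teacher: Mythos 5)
Your proposal diverges from the paper's proof in a way that leaves two genuine gaps, one of which is the central technical difficulty the paper is built to overcome.

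First, the architecture. The paper does not use the dual (adversary-matrix / progress-measure) form of $\Adv_1$ at all; it works entirely with the primal minimization program. Given a protocol $\Pi$ for $f\circ G$ with communication $T$, it defines a candidate weight scheme directly as a per-coordinate quantum information cost,
\[ q'(z,i)=\sum_{t \text{ odd}}I(\tX_i\tY_i:C_t|\tX_{<i}\tY_{<i}B'_t)_{\mu_z} + \sum_{t\text{ even}}I(\tX_i\tY_i:C_t|\tX_{>i}\tY_{>i}A'_t)_{\mu_z}, \]
so that the chain rule gives $\sum_i q'(z,i)=\QIC(\Pi,\mu_z)\le T$ for free; the entire burden then falls on verifying the feasibility constraint for neighbouring $z,w$. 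Your progress-measure route would instead need each message round to make progress only on the single ``active'' coordinate, but in communication complexity a single message can carry information about all $n$ gadget coordinates simultaneously, so this round-by-round localization fails; the per-coordinate information decomposition above is precisely the substitute for it.

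Second, and more seriously: the feasibility constraint for $\Adv_1$ is a lower bound on the \emph{geometric mean} $\sqrt{q'(z,i)q'(w,i)}$. Embedding a $G$-input into coordinate $i$ yields a protocol $\Pi'$ for $G$ with $\QIC(\Pi',\mu_0)=q'(w,i)$ and $\QIC(\Pi',\mu_1)=q'(z,i)$, but $\QICZ(G)$ only lower bounds $\max\{\QIC(\Pi',\mu_0),\QIC(\Pi',\mu_1)\}$. If one of the two is tiny and the other large, the maximum is large while the product is arbitrarily small, so ``invoking the defining property of $\QICZ(G)$'' does not close the argument --- this is exactly where the proof would fail. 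The paper needs its product-to-sum reduction (\thm{am_gm}): from $\Pi'$ it builds a new protocol $\Pi''$ with $\QIC(\Pi'',\mu_0)+\QIC(\Pi'',\mu_1)=\tO\left(\sqrt{\QIC(\Pi',\mu_0)\QIC(\Pi',\mu_1)}\right)$, via a chain through $\OR_n\circ G$, Belovs's combinatorial group testing algorithm, and a direct-sum step; only then does the minimax characterization of $\QICZ(G)$ (\thm{QICZ_minimax}) apply. Your proposal contains no analogue of this step, and without it the claimed bound $\QCC(f\circ G)=\Omega(\Adv_1(f)\QICZ(G))$ does not follow.
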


At first glance, this theorem might look very strong: not
only does it lift the simplified adversary bound for a single
gadget $G$, it even does so for all $G$ with a dependence on $G$.
Unfortunately, there is a catch: the measure $\QICZ(G)$
may be $0$ for some communication
functions $G$. In fact, we cannot rule out the possibility
that $\QICZ(G)=0$ for \emph{all} communication functions $G$,
in which case \thm{Adv_one_lift} does not say anything. On
the other hand, note that if $\QICZ(G)>0$ for even a single
function $G$, then \thm{Adv_one_lift} gives a lifting theorem
for $\Adv_1(f)$ with a constant-sized gadget,
which works even for relations (since
that single $G$ can have no dependence on the input size of $f$).

We give an interpretation of the measure $\QICZ(G)$ in terms
of a cryptographic primitive called secure 2-party computation.
In such a primitive, Alice and Bob want to compute a function
$G$ on their inputs $x$ and $y$,
but they do not want to reveal their
inputs to the other party. Indeed, Alice wants to hide
everything about $x$ from Bob and Bob wants to hide
everything about $y$ from Alice, with the exception
of the final function value $G(x,y)$ (which they are
both expected to know at the end of the protocol).
We also seek information-theoretic security:
there are no limits on the computational power of Alice and Bob.
Since we are interested in a quantum version, we will
allow Alice and Bob to exchange quantum communication rather
than classical communication, potentially with shared entanglement.

Secure 2-party computation is known to be impossible
in general, even quantumly \cite{Lo97,Col07,BCS12,FKS+13,SSS14}.
However, in our case, we care about an ``honest but curious'' version
of the primitive, in which Alice and Bob trust each other to
execute the protocol faithfully, but they still do not
trust each other not to try to learn the others' input.
In the quantum setting, it is a bit difficult to define such
an honest-but-curious model: after all, if Alice and Bob
are honest, they might be forbidden by the protocol from
ever executing intermediate measurements, and the protocol
might even tell them to ``uncompute'' everything except
for the final answer, to ensure all other information
gets deleted. Hence it would seem that honest parties can trivially
do secure 2-party computation.

The way we will define quantum secure two-party computation
in the honest-but-curious setting will be analogous
to the information-based classical definition
(see, for example, \cite{BW15}). Classically,
the information leak that Alice and Bob must suffer
in an honest execution of the best possible
protocol is captured by $\IC(G)$, the information
cost of the function $G$. The measure $\IC(G)$
is the amount Alice learns about Bob's input
plus the amount Bob learns about Alice's input,
given the best possible protocol and the worst possible
distribution over the inputs; we note that this measure
includes the value of $G(x,y)$ as part of what Alice and Bob
learn about each others' inputs, whereas secure
two-party computation does not count learning $G(x,y)$
as part of the cost, but this is only a difference of at most
$2$ bits of information (one on Alice's side and one on Bob's side);
hence, up to an additive factor of $2$, $\IC(G)$
captures the information leak necessary in a two-party
protocol computing $G$.

For a quantum version of this, we will use $\QIC$,
a measure which is a quantum analogue
of $\IC$ and which was introduced in \cite{Tou15}.
However, we note that if Alice and Bob send the same bit
$G(x,y)$ back and forth $n$ times, this will add $\Theta(n)$
to the value of $\QIC$ for that protocol, due to subtleties
in the definition of $\QIC$ (this does not occur classically
with $\IC$). Hence, in the quantum setting, $\QIC$ does not
capture the two-party information leak as cleanly as $\IC$
did classically.

Instead, we modify the definition of $\QIC$ to a measure
we denote $\QICZ(G)$.
For this measure,
Alice and Bob want a protocol $\Pi$ such that for any
distribution $\mu$ that has support only on $0$-inputs
(or only on $1$-inputs), $\QIC(\Pi,\mu)$ is small.
This will ensure that Alice and Bob learn nothing
about each others' inputs when conditioned on the output
of the function. The two-party secure
computation question then becomes:
does such a secure protocol $\Pi$ exists for computing
any fixed communication function $G$?

Intuitively, we believe that the answer should be no,
at least for some communication functions $G$. This would
align with the known impossibility of various types of
secure 2-party quantum computation, though none of those
impossibility results seem to apply to our setting.
Interestingly, we have the following lemma, which
follows directly form the way we define $\QICZ(G)$.

\begin{lemma}
Suppose that our version of secure 2-party quantum computation
is impossible for a communication function $G$ which contains
both $\AND$ and $\OR$ as sub-functions. Then
$\QICZ(G)>0$, and hence $\Adv_1(\cdot)$ lifts to a
quantum communication lower bound with the gadget $G$.
\end{lemma}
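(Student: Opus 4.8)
The plan is to observe that, given the definitions laid out above together with \thm{Adv_one_lift}, both conclusions are essentially bookkeeping. First I would recall the precise definition of $\QICZ(G)$: it is the infimum, over all quantum protocols $\Pi$ that compute $G$ (bounded-error, unbounded-round, with shared entanglement), of $\sup_\mu \QIC(\Pi,\mu)$, where the supremum ranges over input distributions $\mu$ supported entirely on $0$-inputs of $G$ or entirely on $1$-inputs of $G$. Our notion that ``secure two-party quantum computation is possible for $G$'' is defined to mean exactly that this infimum is $0$ --- that is, that there is a sequence of protocols computing $G$ whose conditional information leak is driven arbitrarily close to $0$. Consequently ``secure two-party quantum computation is impossible for $G$'' is, by definition, the statement $\QICZ(G)>0$. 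This is the first conclusion, and no work is required beyond unwinding the definitions.

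For the second conclusion I would invoke \thm{Adv_one_lift}. Its hypothesis is that the gadget contains both $\AND_2$ and $\OR_2$ as subfunctions; the lemma assumes $G$ contains $\AND$ and $\OR$ as subfunctions, and since restricting all but two inputs of an $\AND$ (resp.\ $\OR$) gadget leaves $\AND_2$ (resp.\ $\OR_2$), the lemma's hypothesis implies the theorem's. Hence \thm{Adv_one_lift} applies and yields, for every (possibly partial) Boolean function or relation $f$,
\[\QCC(f\circ G)=\Omega\bigl(\Adv_1(f)\,\QICZ(G)\bigr).\]
Because $\QICZ(G)$ is a fixed positive real depending only on $G$ and not on $f$, the right-hand side is $\Omega(\Adv_1(f))$ with the implicit constant absorbing $\QICZ(G)$. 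Since $G$ is a single fixed communication function --- in particular of size independent of the input size of $f$ --- this is precisely a lifting theorem for $\Adv_1(\cdot)$ with the constant-sized gadget $G$, and it is valid for relations as well.

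The only genuine subtlety, and the step I expect to require the most care, is the first one: pinning down the honest-but-curious secure-computation task so that ``impossible'' and ``$\QICZ(G)>0$'' coincide exactly. Concretely, one must check that the security requirement is phrased as a limit (leak tending to $0$ along a sequence of protocols) rather than as exact zero leak achieved by a single protocol, and that the class of protocols quantified over in that requirement is literally the same class (bounded-error, unbounded-round, shared-entanglement) appearing in the definition of $\QICZ$. Once those two quantifications are aligned, ``secure computation is impossible for $G$'' and ``$\QICZ(G)>0$'' are formal negations of one another, and the remainder of the lemma is the direct application of \thm{Adv_one_lift} together with the trivial restriction argument producing $\AND_2$ and $\OR_2$ from $\AND$ and $\OR$.
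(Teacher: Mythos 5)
Your proposal is correct and matches the paper's own treatment: the paper states this lemma as following directly from the way $\QICZ(G)$ is defined (impossibility of the honest-but-curious secure-computation task is, by construction, the statement that the infimum defining $\QICZ(G)$ is positive), combined with an application of \thm{Adv_one_lift} to the fixed gadget $G$, whose $\QICZ(G)$ factor is then a positive constant independent of $f$. The subtlety you flag about aligning the quantifiers (leak tending to $0$ versus exact zero leak, over the same class of bounded-error, unbounded-round, entanglement-assisted protocols) is precisely why the paper phrases security via $\QICZ$ itself, so no additional argument is needed.
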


\subsubsection*{New query relations}

Finally, our study of the classical adversary bound
led to some new relations in query complexity that are likely
to be of independent interest.

\begin{theorem}\label{thm:Adv_adeg}
For all (possibly partial) Boolean functions $f$,
\[\Adv(f)= O(\adeg(f)^2).\]
\end{theorem}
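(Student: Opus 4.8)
The plan is to go through the SDP formulation of the positive-weight adversary bound and to construct, from an optimal adversary matrix $\Gamma$, a (near-)optimal feasible solution to the dual SDP for approximate degree — or, more directly, to exhibit a low-degree approximating polynomial whose degree is controlled by $\sqrt{\Adv(f)}$. Recall that $\Adv(f)$ is the optimum of $\max \|\Gamma\|$ over symmetric matrices $\Gamma$ supported on pairs $(x,y)$ with $f(x)\neq f(y)$, subject to $\|\Gamma\circ D_i\|\le 1$ for every coordinate $i$, where $D_i$ is the matrix that is $1$ on entries $(x,y)$ with $x_i\neq y_i$ and $0$ elsewhere. The dual (minimization) form gives nonnegative weight functions; either formulation should work, but the cleanest route is probably to pass through a known intermediate quantity. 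In particular, Spalek--Szegedy-style equivalences show $\Adv(f)$ equals (up to constants) a "spectral'' or "Kolmogorov-style'' weight bound, and Reichardt's characterization realizes $\sqrt{\Adv^{\pm}(f)}$ as (bounded-error) quantum query complexity; since $\Adv(f)\le\Adv^{\pm}(f)$, one might be tempted to route through $\Q(f)$ and then use $\Q(f)=\Omega(\adeg(f))$ — but that gives the wrong direction. So the argument must be genuinely about the positive-weight object.

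The concrete approach I would take: take an optimal $\Gamma$ with $\|\Gamma\|=\Adv(f)=:A$, and let $v$ be a top eigenvector, normalized so $\Gamma v=Av$ and $\|v\|_2=1$. Define $u_i = (D_i\circ\Gamma)v$ for each coordinate $i$; the constraint $\|D_i\circ\Gamma\|\le 1$ gives $\|u_i\|_2\le 1$, while $\sum_i u_i = \Gamma v = A v$ (since $\sum_i D_i$ is the all-ones matrix off the diagonal of agreeing inputs, and $\Gamma$ is supported on disagreeing pairs — one must be careful with the diagonal, but $f(x)\ne f(x)$ is vacuous so there is no diagonal mass). From $\sum_i u_i = Av$ and $\|u_i\|_2\le 1$ one extracts, by a dual/averaging argument, a distribution over coordinates and a way to build a feasible solution to the approximate-degree dual LP of value $\Omega(\sqrt A)$: morally, $\sum_i \langle v, u_i\rangle = A$ with each term at most $1$ in absolute value and at most $n$ terms, so many coordinates are "active,'' and a random restriction / symmetrization argument then produces a univariate polynomial of degree $O(\sqrt A)$ that cannot $1/3$-approximate $f$, contradicting $\adeg(f) = o(\sqrt A)$. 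Equivalently and perhaps more robustly, I would quote the known chain $\adeg(f)^2 = \Omega(\deg^{\pm}\!\text{-type bound})$ together with the fact that the positive adversary is dominated by such a polynomial-based quantity; the paper's own framing ("the adversary method can never beat the polynomial method by more than a quadratic factor'') strongly suggests the intended proof is: $\Adv(f) \le \Adv^{\pm}(f)$, and then a direct SDP-duality comparison $\Adv^{\pm}(f) = O(\adeg(f)^2)$ — but since $\Adv^{\pm}(f)=\Theta(\Q(f)^2)$ and $\Q(f)$ can exceed $\adeg(f)$, this last step is false for $\Adv^{\pm}$, which tells me the proof must use positivity of the weights in an essential way.

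Hence the real plan is the duality argument restricted to nonnegative weight schemes. Using the weight-scheme (dual) form of $\Adv(f)$: a weight scheme assigns $w(x,y)\ge 0$ to each $(x,y)$ with $f(x)\ne f(y)$, and $\Adv(f) = \max_w \min_{x} \sqrt{\big(\sum_y w(x,y)\big)\big(\sum_y w(x,y)/\sum_{i: x_i\ne y_i} (\text{load}) \big)^{-1}}$-type expression; I would take an optimal $w$, and for the claimed approximating polynomial build the symmetric bilinear form $\sum_{x,y} w(x,y)\,p(x)\,p(y)$ for a hypothetical degree-$d$ approximator $p$ of $f$, expand using $p(x)-p(y)$ telescoped over coordinates where $x,y$ differ, and bound each resulting term by the degree-$d$ structure (each single-coordinate difference of a degree-$d$ polynomial has a bounded "influence'' in the relevant norm). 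Summing over the $\le d$ differing coordinates on each pair and applying Cauchy--Schwarz against the adversary constraints yields $A \le O(d^2)$, i.e.\ $\Adv(f) = O(\adeg(f)^2)$.

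I expect the main obstacle to be the last inequality — correctly matching the combinatorial "telescoping over differing coordinates'' of a low-degree polynomial against the spectral adversary constraints $\|D_i\circ\Gamma\|\le 1$ with the right Cauchy--Schwarz grouping so that the degree enters squared and not linearly, and handling partial functions (where one must be careful that $w$ is supported only on $\Dom(f)\times\Dom(f)$ and that the approximating polynomial is only constrained on $\Dom(f)$). The symmetrization step that turns the multilinear polynomial on the hypercube into a univariate polynomial, and the treatment of the "promise'' region for partial $f$, are the places where a naive argument breaks; getting the quadratic (rather than some weaker power) will hinge on using the eigenvector/eigenvalue structure of $\Gamma$ rather than just its norm.
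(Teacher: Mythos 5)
There is a genuine gap here: none of the routes you sketch is carried through, and the step you yourself flag as ``the main obstacle'' --- getting the degree to enter \emph{squared} when a degree-$d$ approximator is played against the adversary constraints --- is exactly where the entire content of the proof lies. Your telescoping/Cauchy--Schwarz plan is a guess at a mechanism rather than an outline with one missing lemma, and at least one identity you lean on is false: $\sum_i D_i$ is not the all-ones matrix on disagreeing pairs but the Hamming-distance matrix, so $\sum_i (D_i\circ\Gamma)v \ne \Gamma v = Av$ in general. Likewise the ``weight-scheme dual form'' you invoke is left unspecified, and the symmetrization that is supposed to yield a univariate polynomial of degree $O(\sqrt{A})$ is asserted rather than constructed (an arbitrary adversary matrix supplies no symmetric structure to symmetrize over).

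The paper's actual argument is combinatorial and never touches the spectral formulation. It first passes from $\Adv(f)$ to the classical adversary: $\Adv(f)\le\CAdv(f)$ is immediate from the primal programs (the $\min$ constraint is harder to satisfy than the geometric-mean constraint), and then $\CAdv(f)\le\cfbs(f)$, the critical \emph{fractional} block sensitivity. The quadratic relation to degree is then $\adeg_\epsilon(f)=\Omega(\sqrt{(1-2\epsilon)\cfbs(f)})$, proved by (i) rounding the approximating polynomial $p$ at $1/2$ to manufacture a completion $f'$ of $f$ whose fractional block sensitivity $p$ must still pay for on a restricted promise set, and (ii) a partial-function version of $\adeg(f)=\Omega(\sqrt{\fbs(f)})$ obtained by composing $f$ with the promise-OR gadget $\PrOR_k$: a Chebyshev-type polynomial computes $\PrOR_k$ exactly in degree $O(\sqrt{k})$, while $\bs(f\circ\PrOR_k)\ge k\,\fbs(f)$, so the standard bound $\adeg=\Omega(\sqrt{\bs})$ applied to the composition, followed by $k\to\infty$, gives the claim. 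None of these ingredients --- the $\CAdv$/$\cfbs$ intermediary, the completion-by-rounding trick, or the $\PrOR$ composition with Chebyshev amplification --- appears in your proposal, so it cannot be credited as a proof of the theorem.
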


Here $\adeg(f)$ is the approximate degree of $f$
to bounded error.\footnote{This is the minimum
degree of an $n$-variate real polynomial $p$
such that $|p(x)|\in[0,1]$ for all $x\in\B^n$
and such that $|p(x)-f(x)|\le 1/3$ for all $x$
in the domain of $f$.}
This relationship is interesting, as it says that the
positive-weight adversary method can never beat the polynomial
method by more than a quadratic factor. Conceivably,
this can even be used as a lower bound technique
for the approximate degree of Boolean functions
(which is a measure that is often of interest even
apart from quantum lower bounds). In fact, we prove
a strengthening of \thm{Adv_adeg}.

\begin{theorem}
For all (possibly partial) Boolean functions $f$,
\[\adeg_\epsilon(f)\ge\frac{\sqrt{(1-2\epsilon)\CAdv(f)}}{\pi}.\]
\end{theorem}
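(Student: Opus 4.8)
The plan is to prove this as a direct strengthening of the standard polynomial-method lower bound via the adversary bound's SDP. Recall that $\CAdv(f)$ has a semidefinite-programming formulation: it is the optimal value of a program that assigns weights to pairs of inputs $(x,y)$ with $f(x)\neq f(y)$, subject to a spectral constraint for each coordinate $i$ (this is the classical analogue of the positive-weight adversary SDP, and for partial functions it equals $\cfbs(f)$ as noted in the excerpt). I would start by taking an optimal (or near-optimal) adversary matrix/weight scheme $\Gamma$ for $\CAdv(f)$, with $\|\Gamma\| = \CAdv(f)$ after normalization, and $\|\Gamma \circ D_i\| \le 1$ for every coordinate $i$, where $D_i$ is the indicator of the $i$-th coordinate differing. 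Then I would take a degree-$d$ polynomial $p$ that $\epsilon$-approximates $f$, with $d = \adeg_\epsilon(f)$, and build a one-variable "progress" polynomial by restricting to a line in the space of adversary eigenvectors.

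Concretely, let $\delta$ be the principal eigenvector of $\Gamma$, supported on $\Dom(f)$, and define $P(t)$ to be an appropriate quadratic form in $p$ evaluated along the direction $\delta$ — for instance something like $P(t) = \sum_{x,y} \Gamma[x,y] \delta_x \delta_y \cdot q_t(x) q_t(y)$, where $q_t$ interpolates between the all-ones function at $t=0$ and $p$ at $t=1$ by scaling the non-constant Fourier part. The two facts to extract are: (i) $P$ has degree at most $2d$ in $t$ (since $p$ has degree $d$ and it appears quadratically, but one actually gets degree $d$ by a more careful single-copy argument — I expect degree $d$, matching the $\pi$ rather than $2\pi$ in the bound); (ii) $P(0)$ is close to $\CAdv(f)$ while $|P(1)|$ is close to $0$, using that $p$ is $\epsilon$-close to $f$ on the domain and that $\Gamma$ is supported on $f$-disagreeing pairs, so the "diagonal block" structure kills the main term up to an $O(\epsilon)$ error; and (iii) $|P(t)|$ stays bounded (by roughly $\CAdv(f)$) for $t$ in some real interval, using the coordinate constraints $\|\Gamma \circ D_i\|\le 1$ together with the fact that $|p|\le 1$ pointwise — this is the step where the individual-coordinate spectral bounds get combined, one coordinate at a time, via a hybrid/telescoping argument over flipping bits. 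Then applying a Markov-brothers-type inequality (or the Coppersmith–Rivlin / Paturi bound) to $P$ on that interval forces $\deg P \gtrsim \sqrt{(1-2\epsilon)\CAdv(f)}$, and dividing by the appropriate constant gives the $\pi$.

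The main obstacle I anticipate is step (iii): controlling $|P(t)|$ uniformly using only the per-coordinate constraints $\|\Gamma\circ D_i\|\le 1$. The natural approach is a telescoping decomposition of $q_t(x)q_t(y) - q_t(x')q_t(y')$ along a path from $(x,y)$ to $(x',y')$ that flips one coordinate at a time, bounding each increment by $\|\Gamma\circ D_i\|$ times a factor coming from $|p|\le 1$; the delicate part is that the line direction $\delta$ and the polynomial $p$ interact, so one must be careful that the relevant "flip operator" really is $\Gamma\circ D_i$ and that the pointwise bound $|p|\le 1$ translates into an operator bound on the correct subspace. I would first verify this on a toy case (say $f = \OR_n$, where $\CAdv$ and $\adeg$ are both understood) to pin down the exact normalization and confirm the constant is $\pi$ and not $2\pi$ — the difference hinges on whether $P$ ends up degree $d$ or degree $2d$, which in turn depends on whether one works with a single copy of $p$ (using the eigenvector on just one side) or a symmetric quadratic form. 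Once the uniform bound and the degree count are nailed down, the extremal-polynomial inequality and the final arithmetic are routine, and \thm{Adv_adeg} follows by taking $\epsilon = 1/3$ together with $\CAdv(f) = \Omega(\Adv(f)^{?})$ — actually \thm{Adv_adeg} would instead follow from the companion relation $\Adv(f) = O(\CAdv(f)^2)$ stated earlier, plugged into this bound with constant $\epsilon$.
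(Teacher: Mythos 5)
Your plan rests on a formulation that $\CAdv(f)$ does not have: the program with an adversary matrix $\Gamma$ maximizing $\|\Gamma\|$ subject to $\|\Gamma\circ D_i\|\le 1$ is the \emph{quantum} positive-weight adversary $\Adv(f)$, not the classical adversary. The classical adversary replaces the geometric mean $\sqrt{q(x,i)q(y,i)}$ by $\min\{q(x,i),q(y,i)\}$ in the feasibility constraint, and neither it nor its dual has a spectral-norm characterization, so it is not an SDP of the kind your construction needs. Consequently, even if the progress-polynomial argument went through, it would at best give $\adeg_\epsilon(f)=\Omega(\sqrt{\Adv(f)})$; since the paper only guarantees $\CAdv(f)\le 2\Adv(f)^2$, this loses exactly the strengthening that distinguishes the present theorem from \thm{Adv_adeg}. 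Beyond that, the step you yourself flag as the obstacle --- the uniform bound on $|P(t)|$ over an interval coming from the per-coordinate constraints --- is the entire content of any such argument and is not supplied: the sketch does not specify the interval, does not explain how the $n$ coordinate bounds combine without an extra factor of $n$, and does not justify why pointwise boundedness of $p$ off the promise set controls the quadratic form along the eigenvector direction. As written, this is a proposal built on the wrong program with its central lemma missing.

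For comparison, the paper's proof is purely combinatorial and avoids all of this. It first shows $\CAdv(f)\le\cfbs(f)$ (\lem{CAdv_cfbs}), so it suffices to prove $\adeg_\epsilon(f)\ge\sqrt{(1-2\epsilon)\cfbs(f)}/\pi$ (\thm{adeg_cfbs}). There the approximating polynomial $p$ is rounded at $1/2$ to define a completion $f'$, then rescaled into an approximation of the restriction $f'_x$ at a critical input $x$, reducing the claim to $\adeg_\epsilon(g)=\Omega(\sqrt{\fbs(g)})$ for partial functions $g$ (\lem{adeg_fbs}). That lemma is proved by composing $g$ with $\PrOR_k$: an exact Chebyshev-based polynomial of degree $\lceil\pi\sqrt{k}/2\rceil$ computes $\PrOR_k$, one checks $\bs(g\circ\PrOR_k)\ge k\fbs(g)$, and the standard bound $\adeg=\Omega(\sqrt{\bs})$ together with $k\to\infty$ finishes; the constant $\pi$ comes from the Chebyshev degree, not from a Markov- or Coppersmith--Rivlin-type inequality applied to a univariate restriction. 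Finally, your closing remark is also off: \thm{Adv_adeg} follows from this theorem together with the trivial inequality $\Adv(f)\le\CAdv(f)$, not from ``$\Adv(f)=O(\CAdv(f)^2)$,'' which would only yield a fourth-power relation.
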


This version of the theorem is stronger, since
$\Adv(f)\le\CAdv(f)$. Finally, we prove a quadratic
relationship between the classical and quantum (positive-weight)
adversary bounds.

\begin{theorem}
For all (possibly partial) Boolean functions $f$,
\[\Adv(f)\le\CAdv(f)\le 2\Adv(f)^2.\]
\end{theorem}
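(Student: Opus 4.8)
The plan is to run everything through the minimax (weight-scheme) formulations of the two bounds and observe that a single weight scheme transfers between them in both directions. Recall that $\Adv(f)$ equals the minimax quantity obtained by assigning to each input $x\in\Dom(f)$ a probability distribution $p_x$ on the coordinate set $[n]$:
\[\Adv(f)\;=\;\min_{\{p_x\}}\ \max_{\substack{x,y\in\Dom(f)\\ f(x)\neq f(y)}}\ \Bigl(\sum_{i:\,x_i\neq y_i}\sqrt{p_x(i)p_y(i)}\Bigr)^{-1},\]
and that $\CAdv(f)$ is the analogous ``classical'' quantity in which the geometric mean $\sqrt{p_x(i)p_y(i)}$ is replaced by the overlap $\min(p_x(i),p_y(i))$. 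The inequality $\Adv(f)\le\CAdv(f)$ is then immediate: take a scheme optimal for $\CAdv(f)$; since $\sqrt{ab}\ge\min(a,b)$ for $a,b\ge 0$, the same scheme satisfies $\sum_{i:\,x_i\neq y_i}\sqrt{p_x(i)p_y(i)}\ge\sum_{i:\,x_i\neq y_i}\min(p_x(i),p_y(i))\ge 1/\CAdv(f)$ for every pair with $f(x)\neq f(y)$, hence witnesses $\Adv(f)\le\CAdv(f)$.

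For the reverse inequality the key step is an elementary estimate, a sub-normalized analogue of the Fuchs--van de Graaf bound $\mathrm{TV}\le\sqrt{1-F^2}$: for all nonnegative vectors $u,v$ with $\sum_i u_i\le 1$ and $\sum_i v_i\le 1$,
\[\Bigl(\sum_i\sqrt{u_iv_i}\Bigr)^{\!2}\ \le\ 2\sum_i\min(u_i,v_i).\]
To prove it, set $a_i=\sqrt{u_i}$, $b_i=\sqrt{v_i}$, $F=\sum_i a_ib_i$, $m=\sum_i\min(u_i,v_i)$, $\alpha=\sum_i u_i$, $\beta=\sum_i v_i$. Using $|u_i-v_i|=|a_i-b_i|(a_i+b_i)$ and Cauchy--Schwarz,
\[\alpha+\beta-2m\;=\;\sum_i|u_i-v_i|\;\le\;\Bigl(\sum_i(a_i-b_i)^2\Bigr)^{1/2}\Bigl(\sum_i(a_i+b_i)^2\Bigr)^{1/2}\;=\;\sqrt{(\alpha+\beta-2F)(\alpha+\beta+2F)},\]
and since $m\le\min(\alpha,\beta)$ the left-hand side is nonnegative, so squaring gives $4F^2\le(\alpha+\beta)^2-(\alpha+\beta-2m)^2=4m(\alpha+\beta-m)\le 8m$, as $\alpha+\beta-m\le 2$.

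Now take a scheme $\{p_x\}$ optimal for $\Adv(f)$ (a near-optimal one suffices, or invoke compactness on the product of simplices), so $\sum_{i:\,x_i\neq y_i}\sqrt{p_x(i)p_y(i)}\ge 1/\Adv(f)$ for all $x,y$ with $f(x)\neq f(y)$. For each such pair apply the estimate with $u_i=p_x(i)$ and $v_i=p_y(i)$ on coordinates $i$ with $x_i\neq y_i$ and $u_i=v_i=0$ otherwise; these are sub-normalized because $p_x,p_y$ are probability distributions, so we obtain $\sum_{i:\,x_i\neq y_i}\min(p_x(i),p_y(i))\ge\tfrac12\Adv(f)^{-2}$. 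Hence the very same scheme $\{p_x\}$ is a valid scheme for $\CAdv$ of value at most $2\Adv(f)^2$, proving $\CAdv(f)\le 2\Adv(f)^2$.

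The only genuine obstacle is spotting the fidelity-versus-overlap inequality above — and, relatedly, resisting the urge to build a new weight scheme for the reverse direction, since the optimal adversary scheme itself works. A naive term-by-term bound such as $\min(a,b)\ge ab$ loses a factor of $n$ once combined with Cauchy--Schwarz, whereas the Fuchs--van de Graaf-style argument does not, and its adaptation to sub-normalized vectors is exactly what absorbs the restriction to coordinates where $x$ and $y$ differ for free. The constant $2$ is essentially tight, already for $f=\OR_n$, where $\Adv(\OR_n)=\sqrt{n}$ and $\CAdv(\OR_n)=n$.
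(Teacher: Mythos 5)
Your proof is correct and takes essentially the same route as the paper: both reuse the (near-)optimal $\Adv$ scheme as the witness for $\CAdv$ and establish the geometric-mean-versus-min comparison via Cauchy--Schwarz together with the fact that the total weights are bounded (by $1$ in your normalized minimax formulation, by $\Adv(f)$ in the paper's weight-scheme formulation, which then rescales by $2\Adv(f)$). The only real difference is how the key inequality is derived: the paper writes $\sqrt{q(x,i)q(y,i)}=\sqrt{\min\{q(x,i),q(y,i)\}\max\{q(x,i),q(y,i)\}}$ and bounds $\sum_{i:x_i\ne y_i}\max\{q(x,i),q(y,i)\}\le 2\Adv(f)$ in two lines, whereas you prove the equivalent sub-normalized bound $F^2\le 2m$ by a Fuchs--van de Graaf-style argument.
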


We note that all of these new relations hold even for partial
functions. This is unusual in query complexity, where most
relations hold only for total functions, and where
most pairs of measures can be exponentially separated
in the partial function setting.

\subsection{Our Techniques}

We introduce several new techniques that we believe will
be useful in future work on adversary methods in communication
complexity.

\subsubsection*{A lifting framework for adversary methods}

One clear insight we contribute in this work is that
lifting theorems for adversary method can be fruitfully
attacked in a ``primal'' way, and using information cost.
To clarify, our approach is to take a protocol $\Pi$
for the lifted function $f\circ G$, and to convert it
into a solution to the primal (i.e.\ minimization) program
for the target adversary bound of $f$.

The primal program for an adversary method
generally demands a non-negative weight $q(z,i)$
for each input string $z\in\B^n$ and each index $i\in[n]$,
such that a certain feasibility constraint is satisfied
for each pair $(z,w)$ with $f(z)\ne f(w)$,
and such that $\sum_{i\in[n]}q(z,i)$ is small for each
input $z$. Our approach is to use an information cost
measure to define $q(z,i)$, where the information
is measured against a distribution $\mu_z$ over
$n$-tuples of inputs to $G$ that evaluate to $z$,
and where we only measure the information
transmitted by the protocol about the $i$-th input to $G$,
conditioned on the previous bits.

We show that this way of getting a solution
to the (minimization version of) the adversary bound for $f$
using a communication protocol for $f\circ G$ suffices
for lifting $\CAdv$ to a randomized communication lower bound
(with a constant-sized gadget), and that it also suffices for getting
some quantum lifting theorems.

\subsubsection*{Product-to-sum reduction}

One of the main tools we use in the proof of
the lifting theorem for $\Adv_1$ is what we call a product-to-sum
reduction for quantum information cost. We show that
if there is a protocol $\Pi$ which computes
some communication function $F$ such that the product
$\QIC(\Pi,\mu_0)\cdot\QIC(\Pi,\mu_1)$ is small
(where $\mu_0$ and $\mu_1$ are distributions over
$0$- and $1$-inputs to $F$),
then there is also a protocol $\Pi'$ which also computes
$F$ and for which $\QIC(\Pi,\mu_0) + \QIC(\Pi,\mu_1)$
is small. In particular, a lower bound for the latter
measure implies a lower bound for the former.
This is useful because the sum (or maximum) of the
two quantum information costs is a natural operation
on quantum information measures (to which lower bound tools
may apply), while the product is not; yet
the product of these information measures arises
naturally in the study of adversary methods
for a lifted query function.

To prove our product-to-sum reduction,
we employ a chain of reductions. First,
we show that if one of $\QIC(\Pi,\mu_0)$ or $\QIC(\Pi,\mu_1)$
is much smaller than the other, then we can use $\Pi$
to get a low-information protocol for $\OR\circ F$,
the composition of the $\OR$ function with $F$.
Next, we use an argument motivated by \cite{BBGK18}:
we use Belov's algorithm for the combinatorial
group testing problem \cite{Bel15} to use the low-information
cost protocol for $\OR\circ F$ to get a low-information
cost protocol for the task of computing $n$ copies of $F$.
Finally, we use an argument from \cite{Tou15} to get
a low-information cost protocol for $F$ itself.

\subsubsection*{Connection to secure two-party computation}

Another insight important for this work is that
lifting theorems for quantum adversary methods
are related to quantum secure two-party computation,
a cryptographic primitive. This connection
comes through the measure $\QICZ(G)$: for communication
gadgets $G$ for which $\QICZ(G)>0$, we know that
secure two-party computation of $G$ is impossible
(in an ``honest-but-curious'' setting, where we
require information-theoretic security);
yet for such $G$, we can then lift
$\Adv_1(f)$ to a lower bound on $\QCC(f\circ G)$.
We believe this result is likely to extend
to a lifting theorem for other quantum adversary methods
in the future -- though the dependence on $\QICZ(G)>0$
may still remain.

We provide a minimax theorem for $\QICZ(G)$, giving
an alternate characterization of the measure.
This minimax theorem is used in our lifting theorem,
and may also be useful for a future lower bound
on $\QICZ(G)$ for some communication function $G$,
which we view as an interesting open problem
arising from this work.

\subsubsection*{Insights into query complexity}

Our results for query complexity follow from the following
insights. First, we show that $\CAdv(f)$ is equivalent to
the measure $\cfbs(f)$ (a fractional version of critical block
sensitivity \cite{HN12}) by converting the primal
versions of the two programs to each other; this is not
difficult to do, and the main contribution comes from
(1) using the correct definition of $\CAdv(f)$
(out of the several definitions in \cite{AKPV18},
which are not equivalent to each other for partial functions),
and (2) using the correct definition of $\cfbs(f)$ (which
is a new definition introduced in this work).

Second, we show that the positive-weight adversary method
$\Adv(f)$ is smaller than, but quadratically related to,
$\CAdv(f)$. Once again, this result is not difficult,
but relies on using the correct definition of $\CAdv(f)$
and on using the primal versions (i.e.\ minimization versions)
of both programs. (Indeed, we use only the primal form
of all the adversary methods throughout this paper;
one of our insights is that this primal form
is more convenient for proving structural properties of the
adversary methods, including lifting theorems.)

Finally, we show that $\adeg(f)=\Omega(\sqrt{\cfbs(f)})$,
and hence $\adeg(f)=\Omega(\sqrt{\Adv(f)})$, and this holds
even for partial functions. We do this by essentially
reducing it to the task of showing $\adeg(f)=\Omega(\sqrt{\fbs(f)})$.
The latter is already known \cite{KT16}; however, it was
only known for total functions, whereas we need it to
hold for partial functions as well. The problem is that
the previous proof relied on recursively composing $f$
with itself, an operation which turns the fractional
block sensitivity $\fbs(f)$ into
the block sensitivity $\bs(f)$; unfortunately, this trick
works only for total Boolean functions. Instead,
we use a different trick for turning $\fbs(f)$
into $\bs(f)$: we compose $f$ with the promise-OR function,
and show that the block sensitivity of $f\circ\PrOR$
is proportional to the fractional block sensitivity of $f$.
We then convert an arbitrary polynomial approximating $f$
into a polynomial approximating $f\circ\PrOR$ by composing
it with a Chebyshev-like polynomial computing $\PrOR$;
finally, we appeal to the known result that the square root
of block sensitivity lower bounds approximate degree
to finish the proof.

\section{Preliminaries}
\subsection{Distance \& information measures}
We define all the distance and information measures for quantum states. The classical versions can be obtained by making the corresponding registers classical.

The $\ell_1$ distance between two quantum states $\rho$ and $\sigma$ is defined as
\[ \norm{\rho-\sigma}_1 = \Tr\sqrt{(\rho-\sigma)^\dagger(\rho-\sigma)}.\]

The entropy of a quantum state $\rho_A$ on register $A$ is defined as
\[H(A)_\rho = -\Tr(\rho\log\rho).\]
For a state $\rho_{AB}$ on registers $AB$, the conditional entropy of $A$ given $B$ is
\[ H(A|B)_\rho = H(AB)_\rho - H(B)_\rho.\]
Conditional entropy satisfies the following continuity bound \cite{AF04}: if $\rho$ and $\sigma$ on registers $AB$ satisfy $\norm{\rho-\sigma}_1 \leq \epsilon$, then
\[ |H(A|B)_\rho - H(A|B)_\sigma| \leq 4\epsilon\log|A| + 2h(\epsilon) \]
where $h(.)$ is the binary entropy function.
For $\rho_{ABC}$, we define the mutual information and conditional mutual information as
\[ I(A:B)_\rho = H(A)_\rho - H(A|B)_\rho \qquad I(A:B|C) = H(A|C)_\rho - H(A|BC)_\rho. \]
Mutual information satisfies
\[ 0 \leq I(A:B|C)_\rho \leq \min\{\log|A|,\log|B|\}\]
and the chain rule
\[ I(A:BC)_\rho = I(A:B)_\rho + I(A:C|B)_\rho.\]

\subsection{Query complexity}\label{sec:prelim_query}

In query complexity, the primary object of study
are Boolean functions, which are functions
$f\colon\B^n\to\B$ where $n$ is a positive integer. Often,
we will actually study partial Boolean functions, which
are defined on only a subset of $\B^n$. We will use
$\Dom(f)$ to denote the domain of $f$; this is a subset
of $\B^n$.

For a (possibly partial) Boolean function $f$, we use
$\D(f)$, $\R(f)$, and $\Q(f)$ to denote its deterministic
query complexity, randomized query complexity (to bounded error),
and quantum query complexity (to bounded error), respectively.
For the definition of these measures, see \cite{BdW02},
though we won't use these definitions in this work.

\subsubsection{Block sensitivity and its variants}

We will use the following definitions.

\paragraph{Block notation.} For a Boolean string $x\in\B^n$
and a set $B\subseteq[n]$, we let $x^B$ denote the
string with the bits in $B$ flipped; that is,
$x^B_i=x_i$ for all $i\notin B$ and $x^B_i=1-x_i$ for all $i\in B$.
The set $B$ is called a \emph{block}.

\paragraph{Sensitive block.} For a (possibly partial) Boolean
function $f$ on $n$ bits and an input $x\in\Dom(f)$, we say that
a set $B\subseteq[n]$ is a \emph{sensitive block} for $x$
(with respect to $f$) if $x^B\in\Dom(f)$ and $f(x^B)\ne f(x)$.

\paragraph{Block sensitivity.} The \emph{block sensitivity}
of a string $x\in\B^n$ with respect to a (possibly partial)
Boolean function $f$ satisfying $x\in\Dom(f)$ is the maximum
integer $k$ such that there are $k$ blocks
$B_1,B_2,\dots,B_k\subseteq[n]$ which are all sensitive
for $x$ and which are all disjoint. This is denoted
$\bs(x,f)$.

\paragraph{Block sensitivity of a function.} The
\emph{block sensitivity} of a (possibly partial) Boolean function
$f$ is the maximum value of $\bs(x,f)$ over $x\in\Dom(f)$.
This is denoted $\bs(f)$.
Block sensitivity was originally introduced by Nisan
\cite{Nis91}, and is discussed in the survey by Buhrman and
de Wolf \cite{BdW02}.

\paragraph{Fractional block sensitivity.} The
\emph{fractional block sensitivity} of a string $x\in\B^n$
with respect to a (possibly partial) Boolean function $f$
satisfying $x\in\Dom(f)$ is the maximum possible
sum of weights $\sum_B w_B$, where the weights $w_B\ge 0$
are assigned to each sensitive block of $x$ and must satisfy
$\sum_{B:i\in B} w_B\le 1$ for all $i\in[n]$.
This is denoted by $\fbs(x,f)$. The fractional block sensitivity
of a function $f$, denoted $\fbs(f)$, is the maximum value of
$\fbs(x,f)$ over $x\in\Dom(f)$.
Fractional block sensitivity was defined by \cite{Aar08},
but see also \cite{KT16}.

\paragraph{Critical block sensitivity.}
For a (possibly partial) Boolean function $f$, we say that
a total Boolean function $f'$ is a \emph{completion} of $f$
if $f'(x)=f(x)$ for all $x\in\Dom(f)$. The \emph{critical block
sensitivity} of $f$, denoted $\cbs(f)$, is defined as
\[\min_{f'}\max_{x\in\Dom(f)}\bs(x,f'),\]
where the minimum is taken over completions $f'$ of $f$.
This measure was defined by \cite{HN12}. It equals $\bs(f)$
for total functions, but may be larger for partial functions.

\paragraph{Critical fractional block sensitivity.}
For a (possibly partial) Boolean function $f$, we define
its \emph{critical fractional block sensitivity}, denoted
$\cfbs(f)$, as
\[\min_{f'}\max_{x\in\Dom(f)}\cfbs(x,f'),\]
where the minimum is taken over completions $f'$ of $f$.
This measure has not previously appeared in the literature.

\subsubsection{Adversary bounds}
\label{sec:prelim_adv}

\paragraph{Positive adversary bound.}
For a (possibly partial) Boolean function $f$,
we define the positive-weight adversary bound,
denoted $\Adv(f)$, as the minimum of the following program.
We will have one non-negative weight $q(x,i)$
for each $x\in\Dom(f)$ and each $i\in[n]$. We call
such a weight scheme feasible if, for all
$x,y\in\Dom(f)$ with $f(x)\ne f(y)$,
we have
\[\sum_{i:x_i\ne y_i} \sqrt{q(x,i)q(y,i)}\ge 1.\]
Then $\Adv(f)$ is defined as the minimum
of $\max_{x\in\Dom(f)}\sum_{i\in[n]} q(x,i)$
over feasible weight schemes $q(\cdot,\cdot)$.
A different version of the positive-weight
adversary bound was defined in \cite{Amb02},
though the version we've currently defined
appears in \cite{LM08} and \cite{SS06}
(in the latter, our definition is equivalent
to $\operatorname{MM}(f)$).

\paragraph{Classical adversary bound.}
For a (possibly partial) Boolean function $f$,
we define the classical adversary bound,
denoted $\CAdv(f)$, as the minimum of the following
program. We will have one non-negative
weight $q(x,i)$ for each $x\in\Dom(f)$ and each $i\in[n]$,
as before. We call such a weight scheme feasible
if, for all $x,y\in\Dom(f)$ with $f(x)\ne f(y)$,
we have
\[\sum_{i:x_i\ne y_i}\min\{q(x,i),q(y,i)\}\ge 1.\]
Then $\CAdv(f)$ is defined as the minimum
of $\max_{x\in\Dom(f)}\sum_{i\in[n]} q(x,i)$
over feasible weight schemes $q(\cdot,\cdot)$.
Observe that this definition is the same as
that of $\Adv(f)$, except that the feasibility
constraint sums up the minimum of $q(x,i)$ and
$q(y,i)$ instead of the geometric mean. This feasibility
constraint is harder to satisfy, and hence we have
$\CAdv(f)\ge\Adv(f)$.
A different version of the classical adversary
was defined in \cite{Aar06}, though the version
we've currently defined appears in \cite{LM08}
and \cite{AKPV18} (in the latter, our definition
is equivalent to $\operatorname{CMM}(f)$).

\paragraph{Singleton adversary bound.}
\cite{ABK+20} introduced a simplified version
of the quantum adversary bound, which we denote
$\Adv_1(f)$. As in the other adversaries,
this will be the minimum over a program that has
one non-negative weight $q(x,i)$ for each pair
of input $x\in\Dom(f)$ and index $i\in[n]$.
The objective value will once again be
$\max_{x\in\Dom(f)}\sum_{i\in[n]} q(x,i)$.
The only difference is the constraints:
instead of placing a constraint
for each $x,y\in\Dom(f)$ with $f(x)\ne f(y)$,
we only place this constraint for such $x,y$
that have Hamming distance exactly $1$.
Observe that this is a relaxation
of the constraint in the definition of $\Adv(f)$,
and hence $\Adv_1(f)\le \Adv(f)$ for all (possibly partial)
Boolean functions $f$.

\subsection{A generalization to relations}

So far, we've defined our query measures for partial
Boolean functions. However, in many cases we will be interested
in studying \emph{relations}, which are a generalization
of partial Boolean functions.

In query complexity, a \emph{relation} is a subset
of $\B^n\times \Sigma$, where $\Sigma$ is some finite
output alphabet. We will equate a relation
$f\subseteq \B^n\times\Sigma$ with a function
that maps $\B^n$ to subsets of $\Sigma$, so that
for a string $x\in\B^n$, the notation $f(x)$ denotes
$\{\sigma\in\Sigma:(x,\sigma)\in f\}$.
An algorithm which computes a relation $f$ to error
$\epsilon$ must have the guarantee that for inputs $x\in\B^n$,
the algorithm outputs a symbol in $f(x)$ with probability
at least $1-\epsilon$.

Relations are generalizations of partial functions.
This is because we can represent a partial function
$f$ with domain $\Dom(f)\subseteq\B^n$ by a relation
$f'$ such that $f'(x)=\{f(x)\}$ for $x\in\Dom(f)$
and $f'(x)=\{0,1\}$ for $x\notin\Dom(f)$. In other words,
the relational version $f'$ of the partial function $f$
will accept all input strings (it will be a total function),
but it will consider every output symbol to be valid
when given an input not in $\Dom(f)$. This essentially
makes the inputs not in $\Dom(f)$ become trivial, and hence
makes the relation $f'$ intuitively equivalent to the partial
function $f$.

We will generalize several of our query measures to relations.

\paragraph{Critical (fractional) block sensitivity.}
The original definition of $\cbs(f)$ from \cite{HN12} actually
defined it for relations.
We say that a total function $f'\colon\B^n\to\Sigma$
is a \emph{completion} of a relation $f\subseteq \B^n\times\Sigma$
if $(x,f'(x))\in f$ for all $x\in\B^n$. In other words,
$f'$ is a completion if it gives a fixed, valid
output choice for each input to $f$. Next, we say an input
$x\in\B^n$ is \emph{critical} if it has a unique valid
output symbol in $f$; that is, if $|f(x)|=1$.
We let $\crit(f)$ denote the set of all critical inputs
to $f$.
(Note that if $f$ is the relational version of a partial function,
then $\crit(f)$ is equal to the domain of the partial function.)
We then define
\[\cbs(f)\coloneqq \min_{f'}\max_{x\in\crit(f)}\bs(x,f')\]
\[\cfbs(f)\coloneqq \min_{f'}\max_{x\in\crit(f)}\fbs(x,f'),\]
where the minimizations are over completions $f'$ of $f$.
Observe that if $f$ is the relational version
of a partial function, these definitions
match the previous ones.

\paragraph{Adversary bounds.}
The adversary bounds easily generalize to relations:
both the positive adversary bound and the classical adversary
bound will still be minimizations over weight schemes
$q(x,i)$, with a non-negative weight assigned to each pair of
input in $\B^n$ and $i\in[n]$. The objective value to be
minimized is the same as before:
$\max_{x\in\B^n}\sum_{i\in[n]} q(x,i)$.
As for the constraints, we previously had one constraint
for each pair of inputs $x,y$ with $f(x)\ne f(y)$.
For relations, we will replace this condition with the condition
$f(x)\cap f(y)=\varnothing$ (that is, $x$ and $y$ have
disjoint allowed-output-symbol sets). Hence the new constraint
for $\Adv(f)$ becomes that for all pairs $x,y\in\B^n$
with $f(x)\cap f(y)=\varnothing$, we have
\[\sum_{i:x_i\ne y_i}\sqrt{q(x,i)q(y,i)}\ge 1.\]
Similarly, the constraint for $\CAdv(f)$ is that for all pairs
$x,y\in\B^n$ with $f(x)\cap f(y)=\varnothing$, we have
\[\sum_{i:x_i\ne y_i}\min\{q(x,i),q(y,i)\}\ge 1,\]
and the constraint for $\Adv_1(f)$ is similar.

\subsubsection{Degree measures}

\paragraph{Degree of a function.} For a
(possibly partial) Boolean function $f$, we define
its \emph{degree} to be the minimum degree of a real
polynomial $p$ which satisfies $p(x)=f(x)$
for all $x\in\Dom(f)$ as well as $p(x)\in[0,1]$ for
all $x\in\B^n$. We denote this by $\deg(f)$.

\paragraph{Approximate degree.} For a (possibly partial)
Boolean function $f$, we define its \emph{approximate degree}
to error $\epsilon$
to be the minimum degree of a real polynomial $p$
which satisfies $|p(x)-f(x)|\le \epsilon$
for all $x\in\Dom(f)$
as well as $p(x)\in[0,1]$ for all $x\in\B^n$.
We denote this by $\adeg_\epsilon(f)$.
When $\epsilon=1/3$, we omit it and write $\adeg(f)$.

These measures are both defined and discussed in the
survey by Buhrman and de Wolf \cite{BdW02}.
We note that for partial functions, some authors do not
include the requirement that the polynomial approximating
the function is bounded outside of the promise set.
Without this requirement, one gets a smaller measure.
In this work we will only use degree and approximate
degree to refer to the bounded versions of these measures.

We also note that approximate degree can be \emph{amplified}:
if a polynomial $p$ approximates a function $f$ to error
$\epsilon$, then we can modify $p$ to get a polynomial $q$
which approximates $f$ to error $\epsilon'<\epsilon$
and which has degree that is at most a constant factor
larger than the degree of $p$ (this constant
factor will depend on $\epsilon$ and $\epsilon'$).

\subsubsection{Known relationships between measures}

See \fig{relations} for a summary of relationships
between these measures (for partial functions).

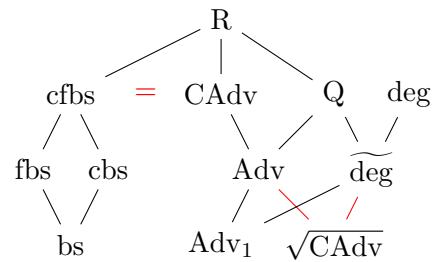
\begin{wrapfigure}{R}{0.35\textwidth}
\centering
  \begin{tikzpicture}[x=0.5cm,y=0.5cm]
     \node (cfbs) at(1,4){$\cfbs$};
     \node (fbs) at(0,2){$\fbs$};
     \node (cbs) at(2,2){$\cbs$};
     \node (bs) at(1,0){$\bs$};
     \node (eq) at(3,4){\color{red}$=$};
     \node (CAdv) at(5,4){$\CAdv$};
     \node (Adv) at(6,2){$\Adv$};
     \node (Advo) at(5,0){$\Adv_1$};
     \node (R) at(5,6){$\R$};
     
     \node (sqCAdv) at(8,0){$\sqrt{\CAdv}$};
     \node (adeg) at(9,2){$\adeg$};
     \node (Q) at(8,4){$\Q$};
     \node (deg) at(10,4){$\deg$};

     \path[-] (sqCAdv) edge[red] (Adv);
     \path[-] (sqCAdv) edge[red] (adeg);
     \path[-] (Adv) edge (Q);
     \path[-] (adeg) edge (Q);
     \path[-] (Q) edge (R);

     \path[-] (bs) edge (cbs);
     \path[-] (bs) edge (fbs);
     \path[-] (fbs) edge (cfbs);
     \path[-] (cbs) edge (cfbs);
     \path[-] (Advo) edge (Adv);
     \path[-] (Advo) edge (adeg);
     \path[-] (Adv) edge (CAdv);
     \path[-] (cfbs) edge (R);
     \path[-] (CAdv) edge (R);
     \path[-] (adeg) edge (deg);
  \end{tikzpicture}
    \caption{Relations between query complexity
    measures used in this work, applicable to partial functions.
    An upwards line from $\M_1(f)$ to $\M_2(f)$ means
    that $\M_1(f)=O(\M_2(f))$ for all (possibly partial) Boolean
    functions $f$. {\color{red}Red} indicates new
    relationships proved in this work.
    We warn that some of these relationships
    are false for relations; in particular, $\CAdv$ may
    be strictly larger than $\cfbs$ and its square root
    may be incomparable to $\adeg$ for relations.
    \label{fig:relations}}
\end{wrapfigure}

It is not hard to see that $\bs(f)$ is the smallest
of the block sensitivity measures, and $\cfbs(f)$ is
the largest. We know \cite{Aar08,HN12}
that $\fbs(f)$ and $\cbs(f)$ both lower bound $\R(f)$
for all (possibly partial) Boolean functions $f$;
in \sec{query}, we show that $\cfbs(f)$ is also
a lower bound.

We know \cite{LM08,SS06,AKPV18} that
$\Q(f)=\Omega(\Adv(f))$ and $\R(f)=\Omega(\CAdv(f))$.
Although this it not ordinarily stated for relations,
both lower bounds hold when $f$ is a relation as well.
In \sec{query}, we show that
$\CAdv(f)=\Theta(\cfbs(f))$
for all partial functions $f$,
and we also show that
$\CAdv(f)=O(\Adv(f)^2)$
which holds for both partial functions and relations.

Approximate degree lower bounds quantum query complexity:
$\Q(f)=\Omega(\adeg(f))$.
It is known \cite{BBC+01} that approximate
degree is lower-bounded by $\sqrt{\bs(f)}$.
Tal \cite{Tal13} showed that for total functions,
$\adeg(f)=\Omega(\sqrt{\fbs(f)})$.
In \sec{query}, we extend this result to partial
functions, and also prove that
$\adeg(f)=\Omega(\sqrt{\cfbs(f)})$.

In conclusion, $\CAdv(f)$ turns out to be the same
as $\cfbs(f)$ for partial functions,
and its square root lower bounds both
$\Adv(f)$ and $\adeg(f)$, both of which are lower bounds
on $\Q(f)$. Without taking square roots,
$\CAdv(f)$ is a lower bound on $\R(f)$
but not on $\Q(f)$.

When we move from partial functions
to relations, the measure $\CAdv(f)$ appears to get stronger
in comparison to the other measures: it is
strictly larger than $\cfbs(f)$, and appears to be
incomparable to $\adeg(f)$ (though defining the
latter for relations is a bit tricky, and we don't
do so in this work).

\subsection{Communication complexity}

In the communication model, two parties, Alice and Bob, are given inputs $x\in\X$ and $y\in\Y$ respectively, and in the most general case the task is to jointly compute a relation $f\subseteq \X\times\Y\times\Z$ by communicating with each other. In other words, on input $(x,y)$,
Alice and Bob must output a symbol
$z\in\Z$ such that $(x,y,z)\in f$.
Without loss of generality, we can assume Alice sends the first message, and Bob produces the output of the protocol.

In the classical randomized model, Alice and Bob are allowed to use shared randomness $R$ (and also possibly private randomness $R_A$ and $R_B$) in order to achieve this. The cost of a communication protocol $\Pi$, denoted by $\CC(\Pi)$ is the number of bits exchanged between Alice and Bob. The randomized communication complexity of a relation $f$ with error $\epsilon$, denoted by $\R^\CC_\epsilon(f)$, is defined as the minimum $\CC(\Pi)$ of a randomized protocol $\Pi$ that computes $f$ with error at most $\epsilon$ on every input.

\paragraph{Classical information complexity.} The information complexity of a protocol with inputs $X,Y$ according to $\mu$, shared randomness $R$ and transcript $\Pi$ is given by
\[ \IC(\Pi,\mu) = I(X:\Pi|YR)_\mu + I(Y:\Pi|XR)_\mu.\]
For any $\mu$ we have, $\IC(\Pi,\mu) \leq \CC(\Pi)$.

\paragraph{} In a quantum protocol $\Pi$, Alice and Bob initially share an entangled state on registers $A_0B_0$, and they get inputs $x$ and $y$ from a distribution $\mu$. The global state at the beginning of the protocol is
\[ \ket{\Psi^0} = \sum_{x,y}\sqrt{\mu(x,y)}\ket{xxyy}_{X\tX Y\tY}\otimes\ket{\Theta^0}_{A_0B_0}\]
where the registers $\tX$ and $\tY$ purify $X$ and $Y$ and are inaccessible to either party. In the $t$-th round of the protocol, if $t$ is odd, Alice applies a unitary $U_t: A'_{t-1}C_{t-1} \to A'_tC_t$, on her input, her memory register $A'_{t-1}$ and the message $C_{t-1}$ from Bob in the previous round (where $A'_0=XA_0$ and $C_0$ is empty), to generate the new message $C_t$, which she sends to Bob, and new memory register $A'_t$. Similarly, if $t$ is even, then Bob applies the unitary $U_t: B'_{t-1}C_{t-1} \to B'_tC_t$ and sends $C_t$ to Alice. It is easy to see that $B'_t=B'_{t-1}$ for odd $t$, and $A'_t=A'_{t-1}$ for even $t$. We can assume that the protocol is safe, i.e., for all $t$, $A'_t=XA_t$ and $B'_t = YB_t$, and $U_t$ uses $X$ or $Y$ only as control registers. The global state at the $t$-th round is then
\[ \ket{\Psi^t} = \sum_{x,y}\sqrt{\mu(x,y)}\ket{xxyy}_{X\tX Y\tY}\otimes\ket{\Theta^t}_{A_tB_tC_t|xy}.\]
\cite{LT17} (Proposition 9) showed that making a protocol safe does not decrease its QIC (defined below), so we shall often work with protocols of this form.

The quantum communication cost of a protocol $\Pi$, denoted by $\QCC(\Pi)$, is the total number of qubits exchanged between Alice and Bob in the protocol, i.e., $\sum_t\log|C_t|$. The quantum communication complexity of $f$ with error $\epsilon$, denoted by $\Q^\CC(f)$, is defined as the minimum $\QCC(\Pi)$ of a quantum protocol $\Pi$ that computes $f$ with error at most $\epsilon$ on every input.

\paragraph{Quantum information complexity.} Given a quantum protocol $\Pi$ as described above with classical inputs distributed as $\mu$, its quantum information complexity is defined as
\[ \QIC(\Pi,\mu) = \sum_{t \text{ odd}}I(\tX\tY:C_t|YB'_t)_{\Psi^t} + \sum_{t\text{ even}}I(\tX\tY:C_t|XA'_t)_{\Psi^t}.\]
The Holevo quantum information complexity is defined as
\begin{align*}
\HQIC(\Pi,\mu) & = \sum_{t\text{ odd}}I(X:B'_tC_t|Y)_{\Psi^t} + \sum_{t\text{ even}}I(Y:A'_tC_t|X)_{\Psi^t} \\
 & = \sum_{t\text{ odd}}I(X:B_tC_t|Y)_{\Psi^t} + \sum_{t\text{ even}}I(Y:A_tC_t|X)_{\Psi^t} \quad \text{(for safe protocols).}
\end{align*}
For brevity, we shall often only use the classical input distribution $\mu$ as the subscript, or drop the subscript entirely, for these information quantities.

It was proved in \cite{LT17}, that for an $r$-round protocol $\Pi^r$, HQIC and QIC satisfy the following relation:
\[ \QIC(\Pi^r, \mu) \geq \frac{1}{r}\HQIC(\Pi^r, \mu) \geq \frac{1}{2r}\QIC(\Pi^r,\mu).\]
Moreover, for any $\mu$, $\QIC(\Pi,\mu) \leq \QCC(\Pi)$.

\section{Lifting the classical adversary}

\subsection{The gadget and its properties}

The gadget we use is the same one used in \cite{GP18},
called \textsc{VER} in that work. This is the function
$\textsc{VER}\colon\B^2\times\B^2\to\B$ defined by $G(x,y)=1$
if and only if $x+y$ is equivalent to $2$ or $3$ modulo $4$,
where $x,y\in\B^2$ are interpreted as binary representations
of integers in $\{0,1,2,3\}$. This gadget has the property
of being \emph{versatile}, which means that it satisfies
the following three properties:
\begin{enumerate}
\item Flippability: given any input $(x,y)$, Alice
and Bob can perform a local operation on their respective inputs (without communicating)
to get $(x',y')$ such that $G(x',y')=1-G(x,y)$.
\item Random self-reducibility: given any input $(x,y)$,
Alice and Bob can use shared randomness (without communicating)
to generate $(x',y')$ which is uniformly distributed over
$G^{-1}(G(x,y))$. That is, Alice and Bob can convert any $0$-input
into a random $0$-input and any $1$-input into a random $1$-input,
without any communication. More formally,
if the domain of $G$ is $\X\times \Y$,
we require a probability distribution $\nu_G$
over pairs of permutations in $S_{\X}\times S_{\Y}$
such that for each $(x,y)\in\X\times\Y$,
sampling $(\sigma_A,\sigma_B)$ from $\nu_G$
and constructing the pair $(\sigma_A(x),\sigma_B(y))$
gives the uniform distribution over $G^{-1}(G(x,y))$.
\item Non-triviality: the function $G$ contains $\AND_2$
as a sub-function (and by flippability, it also contains
$\OR_2$ as a sub-function).
\end{enumerate}

These three properties were established in \cite{GP18} for
the function \textsc{VER}; a gadget which satisfies
them is called \emph{versatile}.
Our lifting proof will work for any versatile gadget $G$.
We will need the following simple lemma, which allows
us to generate $n$-tuples of inputs to $G$ that evaluate
to either a string $s\in\B^n$ or its complement
$\hat{s}\in\B^n$. We use the notation $G^{-1}(s)$
to denote the set of all $n$-tuples of inputs to $G$
that together evaluate to $s\in\B^n$; this abuses
notation slightly (we would technically need to write
$(G^{\oplus n})^{-1}(s)$, where $G^{\oplus n}$ is
the function we get by evaluating $n$ independent inputs
to $G$).

\begin{lemma}\label{lem:s-sampling}
Let $s\in\{0,1\}^m$ be a given string and $G$ be a versatile gadget. Then there is a protocol with no communication using shared randomness between Alice and Bob, who receive inputs $(a,b)$ in the domain of $G$ such that
\begin{itemize}
 \item If $G(a,b)=0$, Alice and Bob produce output strings $(x,y)$ that are uniformly distributed in $G^{-1}(s)$
 \item If $G(a,b)=1$, Alice and Bob produce output strings $(x,y)$ that are uniformly distributed in $G^{-1}(\hat{s}) = G^{-1}(s\oplus1^m)$.
\end{itemize}
\end{lemma}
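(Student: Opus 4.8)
The plan is to build the protocol bit-by-bit, using the three versatility properties of $G$ coordinate-wise. Fix the target string $s\in\{0,1\}^m$. Alice and Bob receive a single pair $(a,b)$ in the domain of $G$; call $c=G(a,b)\in\{0,1\}$, which is the bit they wish to ``broadcast'' into the pattern $s$ (if $c=0$) or $\hat s$ (if $c=1$). For each coordinate $j\in[m]$ we want to produce a pair $(x_j,y_j)$ in the domain of $G$ with $G(x_j,y_j)=s_j\oplus c$, and we want the joint distribution over $j$, conditioned on $c$, to be uniform on $G^{-1}(s)$ (resp.\ $G^{-1}(\hat s)$). The key point is that $G(x_j,y_j)=s_j\oplus c$ can be achieved from $(a,b)$ using only local operations: if $s_j=0$, Alice and Bob leave their inputs alone (so $G=c$); if $s_j=1$, they apply the flippability operation to get a pair evaluating to $1-c=c\oplus 1$. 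This gives a no-communication protocol producing some pair in $G^{-1}(s_j\oplus c)$ for each $j$, but not yet a \emph{uniform} one and not yet \emph{independent} across coordinates.

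Next I would fix up the distribution using random self-reducibility. After the flippability step, for each coordinate $j$ Alice and Bob hold a pair evaluating to the correct bit $s_j\oplus c$; now they independently sample $(\sigma_A^{(j)},\sigma_B^{(j)})$ from the distribution $\nu_G$ (using fresh shared randomness for each $j$) and replace $(x_j,y_j)$ by $(\sigma_A^{(j)}(x_j),\sigma_B^{(j)}(y_j))$. By the defining property of $\nu_G$, conditioned on $c$ each coordinate's output is now uniform on $G^{-1}(s_j\oplus c)$, and since distinct coordinates use independent randomness, the $m$ coordinates are mutually independent. Hence, conditioned on $G(a,b)=0$, the output $(x,y)$ is uniform over $\prod_j G^{-1}(s_j) = G^{-1}(s)$, and conditioned on $G(a,b)=1$ it is uniform over $\prod_j G^{-1}(1-s_j) = G^{-1}(\hat s)$, which is exactly $G^{-1}(s\oplus 1^m)$. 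No communication was used: both the flippability operations and the $\nu_G$-sampling are local given shared randomness.

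The only subtlety worth flagging — and it is minor — is making sure that the flippability operation and the random-self-reducibility operation are both genuinely \emph{local} and \emph{input-independent in the relevant sense}: flippability is a fixed local map applied to whatever pair Alice and Bob currently hold (it does not need to know $c$, only that it flips the $G$-value), and $\nu_G$ is sampled from shared randomness and applied as a pair of permutations, so neither step leaks or requires communication. One should also note that the protocol's behavior is determined entirely by $s$ (public) and the received pair $(a,b)$ via $c=G(a,b)$; in particular Alice only needs to know whether $s_j=1$ to decide whether to flip her half, which is public information. I do not expect any real obstacle here; the whole lemma is a direct repackaging of the three versatility axioms applied in parallel across the $m$ coordinates, and the ``hard part,'' such as it is, is just being careful that the composition of flippability followed by $\nu_G$-randomization yields the uniform distribution on the correct fiber for \emph{each} value of $c$ simultaneously.
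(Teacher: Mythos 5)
Your proof is correct and uses exactly the same two ingredients as the paper's proof---coordinate-wise flippability to set the $j$-th gadget's value to $s_j\oplus G(a,b)$, and $m$ independent samples of $\nu_G$ to uniformize---the only difference being the order of the two steps: the paper first applies $\nu_G$ to produce a uniform element of $G^{-1}((G(a,b))^m)$ and then flips the coordinates where $s_j=1$, while you flip first and randomize last. Your ordering is, if anything, slightly cleaner: since the $\nu_G$-randomization is the final step and is applied in each coordinate to a fixed (deterministic) pair, uniformity on the correct fiber follows immediately from the definition of random self-reducibility, and independence across coordinates comes from using fresh shared randomness per coordinate. The paper's ordering instead relies on the (easy, but implicit) fact that the flipping map sends the uniform distribution on $G^{-1}(c)$ to the uniform distribution on $G^{-1}(1-c)$, i.e.\ that it acts as a bijection between the two fibers---true for \textsc{VER}, where the flip is an involution, but not literally guaranteed by the flippability axiom as stated. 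So your argument matches the paper's in substance and, on this minor point, requires one fewer verification.
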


\begin{proof}
Alice and Bob share independent instances of the permutations $\nu_G$, $\sigma_A$ and $\sigma_B$ as randomness. Applying independent instances of $\nu_G$, Alice and Bob can produce $(x',y')$ that are uniformly distributed in $G^{-1}((G(a,b))^m)$: this is done by applying $m$ independent instances of $\sigma_A$ and $\sigma_B$ from $\nu_G$ to $a$ and $b$ respectively. Now Alice and Bob know where $s$ differs from $0^m$. By applying independent instances of the local flipping operation on $x'$ and $y'$ at these locations, they can negate the output of $G$. It is clear the resultant string $(x,y)$ is uniformly distributed in $G^{-1}(s)$ if $G(a,b)=0$ and in $G^{-1}(\hat{s})$ if $G(a,b)=1$.
\end{proof}

We additionally have the following lemma, which
uses the non-triviality property of a versatile gadget.

\begin{lemma}\label{lem:ANDOR-classical}
If $G$ is a constant-sized non-trivial gadget (containing $\AND_2$ and $\OR_2$ as subfunctions), and $\mu_0$ and 
$\mu_1$ are uniform distributions over its $0$- and $1$-inputs, 
then any classical protocol $\Pi$ for computing $G$ with bounded 
error has $\IC(\Pi,\mu_0), \IC(\Pi,\mu_1) = \Omega(1)$.
\end{lemma}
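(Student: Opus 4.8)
The plan is to reduce the statement to the well-known fact that the two-bit $\AND$ function (and symmetrically $\OR$) has constant information complexity against an appropriate hard distribution. Concretely, suppose $\Pi$ computes $G$ with bounded error. Since $G$ contains $\AND_2$ as a subfunction, there are inputs $a_0,a_1\in\X$ for Alice and $b_0,b_1\in\Y$ for Bob such that $G(a_i,b_j)=\AND(i,j)$ for all $i,j\in\{0,1\}$. Restricting $\Pi$ to these four inputs gives a bounded-error protocol $\Pi'$ for $\AND_2$. The key point is that the information cost of a restricted protocol, measured against a distribution supported on the restricted inputs, is at most the information cost of the original protocol against the image distribution: $I(X:\Pi|YR)$ only decreases when we fix $X$ to a subset of its support, and similarly for Bob, because conditioning on a fixed value of $Y$ (resp.\ $X$) cannot increase the information about the other party. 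So it suffices to lower-bound $\IC(\Pi',\tau)$ for a suitable distribution $\tau$ on $\{0,1\}^2$, and then transfer the resulting distribution back through the embedding into $\mu_0$ or $\mu_1$.

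First I would fix the hard distribution for $\AND_2$. Take $\tau$ to put weight $1/3$ on each of $(0,0),(0,1),(1,0)$ — i.e.\ uniform over the three $0$-inputs of $\AND$. Under the embedding $(i,j)\mapsto(a_i,b_j)$, this pushes forward to the uniform distribution over three of the $0$-inputs of $G$, which is dominated (up to a constant factor in the weights) by $\mu_0$; since $\IC$ is essentially an average quantity, $\IC(\Pi,\mu_0) = \Omega(\IC(\Pi',\tau))$. The standard argument (see e.g.\ Braverman--Rao, or the original Bar-Yossef--Jayram--Kumar--Sivakumar style bound for $\AND$) shows that any bounded-error protocol for $\AND_2$ leaks $\Omega(1)$ information against $\tau$: intuitively, on input $(0,1)$ and on input $(0,0)$ Bob's inputs differ, and on input $(1,0)$ and $(0,0)$ Alice's inputs differ, yet a correct protocol must behave differently on $(1,1)$; a cut-and-paste / rectangle argument forces either Alice's transcript distribution to depend noticeably on her bit when Bob holds $0$, or Bob's to depend on his bit when Alice holds $0$, which is exactly a constant lower bound on $I(X:\Pi|Y=0,R)+I(Y:\Pi|X=0,R)$, hence on $\IC(\Pi',\tau)$. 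Running the symmetric argument with the $\OR_2$ subfunction and the uniform distribution over three of the $1$-inputs of $G$ gives $\IC(\Pi,\mu_1)=\Omega(1)$.

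The main obstacle — really the only delicate point — is the bookkeeping in the reduction: making sure that restricting to a subfunction and restricting the distribution genuinely does not increase $\IC$, and that the push-forward of the hard $\AND$/$\OR$ distribution is comparable to $\mu_0$, $\mu_1$ (rather than, say, supported on inputs of measure zero under the uniform distribution, which it is not, since all of $G$'s $0$-inputs and $1$-inputs get uniform weight). Both are routine: the monotonicity of conditional mutual information under fixing a variable to a sub-support handles the first, and since $G$ has constant size the ratio between the push-forward weights and the uniform weights $\mu_0,\mu_1$ is a constant, handling the second. One could also bypass the embedding entirely and argue directly: if $\IC(\Pi,\mu_0)=o(1)$, then by a direct-sum / message-compression argument $\Pi$ could be compressed to $o(1)$ communication on $\mu_0$, and a similar statement on $\mu_1$; but a protocol cannot distinguish $0$-inputs from $1$-inputs of a non-constant function with $o(1)$ total communication, a contradiction. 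I would present the embedding version as the cleaner route and only invoke the compression argument if the embedding bookkeeping becomes awkward.
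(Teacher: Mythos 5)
Your proposal is correct and follows essentially the same route as the paper: embed $\AND_2$ (resp.\ $\OR_2$) into $G$, invoke the Bar-Yossef--Jayram--Kumar--Sivakumar $\Omega(1)$ information-cost lower bound for $\AND_2$ against the uniform distribution on its three $0$-inputs, and use the fact that $\mu_0$ (resp.\ $\mu_1$) places constant weight on the embedded hard distribution since $G$ is constant-sized. The paper's proof is just a terser version of this; your extra bookkeeping (concavity/averaging of $\IC$ over the components of $\mu_0$) fills in exactly the step the paper leaves implicit.
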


\begin{proof}
$G$ contains the $\AND_2$ function, and $\mu_0$ puts uniform 
$\Omega(1)$ weight on the 0-inputs of the $\AND_2$ subfunction. 
\cite{BJKS04} showed that any protocol computing the $\AND_2$ 
function must have $\Omega(1)$ information cost with
respect to the distribution that puts
$1/3$ weight on all $0$-inputs of the $\AND_2$ function. Hence any 
protocol for $G$ must also have $\IC(\Pi,\mu_0)=\Omega(1)$. 
Similarly, by considering the fact that $G$ contains the $\OR_2$ 
function, we can show that $\IC(\Pi,\mu_1)=\Omega(1)$.
\end{proof}

Although we only need a single versatile gadget,
such as \textsc{VER}, we will briefly remark that there is
actually an infinite family of versatile gadgets, and that
this family is universal (i.e.\ every communication function
is a sub-function of some gadget in the family).

\begin{lemma}
There is a universal family of versatile gadgets.
\end{lemma}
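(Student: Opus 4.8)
The plan is to exhibit an explicit infinite family $\{G_k\}_{k\ge 2}$ of communication functions, each on domain $\X_k\times\Y_k$ of growing size, verify that each $G_k$ is versatile (flippable, random self-reducible, non-trivial), and show that the family is universal in the sense that every communication function $F\colon\X\times\Y\to\B$ is a subfunction of $G_k$ for all sufficiently large $k$. A natural candidate generalizing \textsc{VER} is the ``inner-product-like'' or ``mod-$m$ addition'' gadget: let $\X_k=\Y_k=\B^k$, interpret inputs as integers in $\{0,\dots,2^k-1\}$, and set $G_k(x,y)=1$ iff $x+y \bmod 2^k$ lies in the top half $\{2^{k-1},\dots,2^k-1\}$ (so $G_2$ is exactly \textsc{VER}). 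Equivalently one can use the high-order bit of $x+y\bmod 2^k$. The key structural feature is that the group $\mathbb{Z}_{2^k}$ acts by translation, $G_k^{-1}(0)$ and $G_k^{-1}(1)$ are each unions of cosets, and the map $(x,y)\mapsto(x,y+2^{k-1})$ swaps the two preimages.

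First I would verify the three versatility properties for $G_k$. \emph{Flippability}: Bob replaces $y$ by $y+2^{k-1}\bmod 2^k$; this is a local operation with no communication and flips the value of $G_k$, since it shifts $x+y$ by exactly half the modulus. \emph{Random self-reducibility}: Alice and Bob use shared randomness to pick a uniform $r\in\mathbb{Z}_{2^k}$ and replace $(x,y)$ by $(x+r,\,y-r)$; this leaves $x+y$ unchanged hence preserves $G_k(x,y)$, and within each fiber of constant sum $s$ it maps $(x,y)$ to a uniformly random pair with that sum. To get the \emph{full} uniform distribution over $G_k^{-1}(G_k(x,y))$ — not just over a fixed-sum fiber — I would additionally have them shift $x$ (or $y$) by a random element of $2^{k-1}\mathbb{Z}_{2^k}$, i.e.\ randomize the low-order information that distinguishes sums within the same half; one checks that the combined action is transitive on $G_k^{-1}(0)$ and on $G_k^{-1}(1)$ and hence the induced distribution is uniform. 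This random reduction is realized by a distribution $\nu_{G_k}$ over $S_{\X_k}\times S_{\Y_k}$ (translations are permutations), as required by the definition. \emph{Non-triviality}: restricting $x$ and $y$ to the two values $\{0,2^{k-1}\}$ (one bit each) yields, up to relabeling, the $\AND_2$ function (the sum reaches the top half iff both ``high bits'' are set), so $G_k$ contains $\AND_2$, and by flippability it contains $\OR_2$.

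Next I would prove universality. Given an arbitrary $F\colon\X\times\Y\to\B$ with $|\X|,|\Y|\le 2^{k-2}$, I want to find injections $\alpha\colon\X\to\X_k$ and $\beta\colon\Y\to\Y_k$ with $G_k(\alpha(x),\beta(y))=F(x,y)$ for all $(x,y)$. The cleanest route is to use the fact, already noted in the introduction, that $\Disj_k$ (and $\IP_k$) are universal, so it suffices to embed a suitable disjointness or inner-product instance into $G_k$; alternatively, one can directly design the embedding from the structure of $\mathbb{Z}_{2^k}$: choose $\alpha(x)$ and $\beta(y)$ so that the high-order bit of $\alpha(x)+\beta(y)$ records $F(x,y)$ while the remaining bits carry ``addresses'' that never produce a carry into the top bit, so the pair $(x,y)$ is recoverable and the target value is realized. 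The main obstacle I anticipate is precisely this universality argument: making the embedding work while respecting the additive structure of the gadget requires controlling carries, and one must be careful that the same construction works for \emph{every} $F$ of bounded size, not just for structured ones. If a direct embedding proves fiddly, the fallback is to define the family via a known universal family (e.g.\ $G_k:=$ a versatile ``twist'' of $\IP_k$ or $\Disj_k$) and check versatility there instead — trading a harder universality proof for a harder versatility proof. Either way, once both halves are in place the lemma follows immediately.
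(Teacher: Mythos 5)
There is a genuine gap here: both halves of your main construction fail, not just the universality half you flagged as the anticipated obstacle. For random self-reducibility, note that every pair $(\sigma_A,\sigma_B)$ in the support of $\nu_{G_k}$ must preserve the value of $G_k$ on \emph{every} input, since the output distribution must be supported on $G_k^{-1}(G_k(x,y))$. Your extra randomization --- shifting $x$ by a random element of $\{0,2^{k-1}\}$ --- shifts the sum by $2^{k-1}$ and therefore \emph{flips} $G_k$, so it cannot be used at all. Worse, no fix exists for $k\ge 3$: row $x$ of $G_k$ is the indicator of a block of $2^{k-1}$ consecutive residues, and a permutation of $\mathbb{Z}_{2^k}$ mapping every such block onto another such block must be of the form $y\mapsto \pm y+c$; consequently every value-preserving pair acts on the sum $s=x+y$ either as the identity or as a single fixed reflection $s\mapsto -s-2^{k-1}-1$. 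From a fixed input one can thus reach at most two of the $2^{k-1}$ sums occurring in its class, so the output distribution cannot be uniform over $G_k^{-1}(G_k(x,y))$ once $k\ge 3$. (A smaller slip: restricting $x,y$ to $\{0,2^{k-1}\}$ gives $\XOR$, not $\AND_2$; the $\AND_2$ embedding should instead use, e.g., $x,y\in\{0,2^{k-2}\}$.)

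Universality also fails outright rather than being merely ``fiddly'': the $4\times 4$ identity matrix (equality on a four-element domain, i.e.\ $2$-bit equality) is not a submatrix of any $G_k$. Indeed, embedding it requires columns $y_1,\dots,y_4$ such that each $y_i$ can be separated from the other three by a block of $2^{k-1}$ consecutive residues; if the cyclic gaps between the $y_j$ are $g_1,\dots,g_4$ (each $\ge 1$, summing to $2^k$), the four separation conditions read $g_{i+1}+g_{i+2}+1\le 2^{k-1}$ (indices mod $4$), and summing them gives $2\cdot 2^k+4\le 2\cdot 2^k$, a contradiction. So your family contains no universal member. The fallback you gesture at is in fact the paper's actual route: it takes $G_n=\PARITY_n\circ\textsc{VER}$, gets flippability by flipping a single inner gadget, random self-reducibility by flipping a uniformly random even-size subset of the $n$ inner gadgets and then re-randomizing each \textsc{VER} instance within its value class, non-triviality since $G_n$ contains \textsc{VER} and hence $\AND_2$, and universality because $G_n$ contains $\PARITY_n\circ\AND_2=\IP_n$, which is universal. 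As written, your proposal does not carry out that fallback, and its primary construction does not satisfy the lemma.
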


\begin{proof}
For ease of notation, let $G$ denote \textsc{VER}.
For each $n\in\bN^+$, we define $G_{n}$ to be
$\PARITY_n\circ G$. Note that $G_{n}$ has the signature
$\B^{2n}\times\B^{2n}\to\B$. We observe that $G_{n}$
is versatile for each $n\in\bN^{+}$. This is because,
given a single input $((x_1,x_2,\dots,x_n),(y_1,y_2,\dots,y_n))$
to $G_{n}$ with $x_i,y_i\in\{0,1,2,3\}$ for each $i\in[n]$,
Alice and Bob can locally generate the uniform
distribution over all inputs with the same $G_{n}$-value.
They can do this by first negating a random subset of the
positions $i$ of even size
(using the flippability property of \textsc{VER}),
and then converting each of the $n$ resulting inputs to $G$
into a random input to $G$ with the same $G$-value.

Suppose $z$ is the $n$-bit string with $z_i=G(x_i,y_i)$.
Then flipping a random even subset of the bits of $z$
is equivalent to generating a random string $w$ that has
the same parity as $z$. It follows that the above procedure
generates a random input to $G_{n}$ that has the same $G_{n}$
value as the original input, meaning that $G_{n}$
is random self-reducible. By flipping any single gadget
$G$ within $G_{n}$, we can negate $G_n$, so it is also
flippable. Finally, since $G_n$ contains $G$ as a sub-function,
it also contains $\AND$ as a sub-function, so $G_n$
is versatile for each $n\in\bN^{+}$.

It remains to show that $\{G_n\}_n$ is universal.
We note that since $G$ contains $\AND$ as a sub-function,
and since $G_n=\PARITY_n\circ G$, the function $G_n$
contains $\PARITY_n\circ\AND$ as a sub-function. The latter
is the inner product function $\IP_n$ on $n$ bits,
which is well-known to be universal. Hence $G_n$ is also
universal.
\end{proof}

\subsection{The lifting theorem}

\begin{theorem}
Let $G$ be a constant-sized versatile gadget such as
$\textsc{VER}$, and let $f\colon\B^n\to\Sigma$ be a relation.
Then $\RCC(f\circ G) =\Omega(\CAdv(f))$.
\end{theorem}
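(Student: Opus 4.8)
The plan is to take an arbitrary randomized protocol $\Pi$ for $f\circ G$ with bounded error and low communication cost, and to build from it a feasible weight scheme $q(\cdot,\cdot)$ for the $\CAdv(f)$ minimization program whose objective value is $O(\CC(\Pi))$. Following the ``primal, information-cost'' framework described in the introduction, for each $z\in\B^n$ I would fix an input distribution $\mu_z$ on $n$-tuples of inputs to $G$ that evaluates to $z$: concretely, I would sample a string $s\in\B^n$ from some fixed distribution (a scaled-down or amplified version of the uniform distribution) and, using \lem{s-sampling}, have Alice and Bob locally produce a uniformly random element of $G^{-1}(s)$ or $G^{-1}(\hat s)$ according to a bit telling them whether $G(a,b)$ should be $0$ or $1$. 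The point of \lem{s-sampling} is that it lets us couple, for any pair $z,w\in\B^n$ with $f(z)\cap f(w)=\varnothing$, the distributions $\mu_z$ and $\mu_w$ through the same underlying $s$ whenever $w=\hat s$ and $z=s$ — but for general $z,w$ we need to handle differing coordinates, so really we should sample $s$ so that $\mu_z$ can be transformed into $\mu_w$ by local flips only on the coordinates where $z$ and $w$ differ. I would then define $q(z,i)$ to be (a constant multiple of) the information the transcript of $\Pi$ reveals about the $i$-th gadget's hidden input, measured under $\mu_z$ and conditioned on the previous gadgets' inputs and the relevant player's register; summing over $i$ gives an information-cost-type quantity bounded by $\IC(\Pi,\mu_z)\le\CC(\Pi)$, which takes care of the objective.

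The heart of the argument is verifying feasibility: for every $z,w$ with $f(z)\cap f(w)=\varnothing$ we must show $\sum_{i:z_i\ne w_i}\min\{q(z,i),q(w,i)\}\ge 1$. Here is where flippability and random self-reducibility of the gadget are essential. Since $z$ and $w$ agree outside $B=\{i:z_i\ne w_i\}$, Alice and Bob can locally transform a sample from $\mu_z$ into a sample from $\mu_w$ by flipping exactly the gadgets in $B$; thus the protocol $\Pi$, restricted appropriately, must distinguish $z$-inputs from $w$-inputs using only the gadgets in $B$, because it correctly computes $f\circ G$ (which differs on these two families). If all the weights $q(z,i)$ and $q(w,i)$ for $i\in B$ were tiny, then the transcript would carry almost no information about any of the flipped gadgets under either $\mu_z$ or $\mu_w$; a hybrid/cut-and-paste argument (swapping the gadgets in $B$ one at a time and bounding the drift in the transcript distribution by the small information terms via Pinsker) would then show that $\Pi$ cannot tell $\mu_z$ from $\mu_w$, contradicting bounded error. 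Quantitatively I expect this to give $\sum_{i\in B}(q(z,i)+q(w,i))=\Omega(1)$, and then \lem{ANDOR-classical} (non-triviality of the gadget forcing $\Omega(1)$ information on each single gadget that is genuinely ``used'') is what upgrades this to a bound on $\sum_{i\in B}\min\{q(z,i),q(w,i)\}$ rather than just the sum — this is precisely the reason the $\CAdv$ constraint uses $\min$ rather than the geometric mean, and why the argument yields $\CAdv$ and not merely $\Adv$.

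The main obstacle I anticipate is making the hybrid argument in the previous paragraph rigorous while keeping the information quantities matched up on both the $z$ and $w$ sides simultaneously. The subtlety is that $q(z,i)$ is measured against $\mu_z$ conditioned on gadgets $<i$, while $q(w,i)$ is against $\mu_w$; to chain the hybrids we need to interpolate between $\mu_z$ and $\mu_w$ and control each step by whichever of the two weights is relevant, and ensure the conditioning registers (Alice's vs.\ Bob's side, previous messages) are consistent round by round. One has to be careful that the transformation between $\mu_z$ and $\mu_w$ really is local and does not touch coordinates outside $B$, which is exactly guaranteed by \lem{s-sampling} plus flippability; and one has to pick the sampling distribution on $s$ so that $\mu_z$ has full support on $G^{-1}(z)$ while still being close to $\mu_w$ after $|B|$ local flips. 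Once the hybrid bound $\sum_{i\in B}(q(z,i)+q(w,i))=\Omega(1)$ is in hand, converting to the $\min$ form via the non-triviality lemma and rescaling all weights by the hidden constant to make the feasibility constant equal to $1$ is routine. The relation case adds nothing new beyond using $f(z)\cap f(w)=\varnothing$ in place of $f(z)\ne f(w)$, which is already how the constraint is phrased for relations.
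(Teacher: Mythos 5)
Your setup (distributions $\mu_z$ over $G^{-1}(z)$, weights $q(z,i)$ given by per-gadget conditional information, objective bounded by $\IC(\Pi,\mu_z)\le\CC(\Pi)$) matches the paper's framework, but the feasibility step has a genuine gap exactly at the point you flag as the crux. Your hybrid/cut-and-paste argument, swapping the gadgets in $B$ one at a time and controlling transcript drift via Pinsker, can only yield $\sum_{i\in B}(q(z,i)+q(w,i))=\Omega(1)$: each swap of gadget $i$ is charged to the information about gadget $i$ under whichever hybrid distribution you are currently at, and nothing forces that charge to be the \emph{smaller} of $q(z,i)$ and $q(w,i)$. The claim that \lem{ANDOR-classical} then ``upgrades'' the sum bound to $\sum_{i\in B}\min\{q(z,i),q(w,i)\}\ge\Omega(1)$ is unsubstantiated and cannot work as stated: the sum can be $\Omega(1)$ while the min-sum is arbitrarily small (take $q(z,i)$ large and $q(w,i)$ tiny on every $i\in B$), and \lem{ANDOR-classical} is a statement about protocols computing the single gadget $G$, so to invoke it you must first exhibit such a protocol whose information cost is controlled by the min-side quantities --- which is precisely the missing construction.

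The paper resolves this with an idea absent from your proposal: partition $B$ into $\cB^1$ (indices where $q'(z,i)$ attains the min) and $\cB^2$ (where $q'(w,i)$ does), and introduce the hybrid input $v$ agreeing with $w$ on $\cB^1$, with $z$ on $\cB^2$, and with both outside $B$. Disjointness of $f(z)$ and $f(w)$ forces $\Pi$ on $\mu_v$ to output something in $f(z)$ with probability at most $1/2$ or in $f(w)$ with probability at most $1/2$; in the first case, $z$ and $v$ differ exactly on $\cB^1$, and one builds (via \lem{s-sampling}, embedding the single-gadget input into the $\cB^1$ coordinates and privately sampling the rest using the dependency-breaking variables $D,U$) a bounded-error protocol $\Pi'$ for $G$ with $\IC(\Pi',\mu_0)$ bounded by the block information about $X_{\cB^1},Y_{\cB^1}$ under $\mu_z$ --- i.e.\ by the min-side terms --- and symmetrically in the second case with $\mu_w$, $\cB^2$, and $\mu_1$. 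Only then does \lem{ANDOR-classical} (the BJKS bound for $\AND_2$/$\OR_2$) give the $\Omega(1)$ lower bound on the min-sum. Relatedly, your sketch omits the dependency-breaking variables $D,U$ altogether; they are needed both for the superadditivity manipulations that aggregate the per-coordinate terms into block information and for Alice and Bob to privately sample the coordinates outside the block in $\Pi'$, so without them even the parts of your argument that are on the right track do not go through as written.
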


\begin{proof}
Let $\Pi$ be a randomized protocol for $f\circ G$ which uses $T$ rounds
of communication (with one bit sent each round), and successfully computes $f\circ G$ with probability at least $1-\epsilon/2$ for each input. Consider inputs $XY$ distributed according to $\mu_z = \mu_{z_1}\otimes\ldots\otimes\mu_{z_n}$, where each $\mu_{z_i}$ is the uniform distribution over $(x_i,y_i)$ in $G^{-1}(z_i)$. Suppose $\Pi$ uses public randomness $R$ which is independent of the inputs $XY$. We introduce the dependency-breaking random variables $D$ and $U$ \cite{BJKS04} in the following way: $D$ is independent of $X, Y, R$ and is uniformly distributed on $\B^n$. For each $i \in [n]$, if $D_i=0$, then $U_i = X_i$, and if $D_i=1$, then $U_i=Y_i$. Defined this way, given $D_iU_i$, $X_i$ and $Y_i$ are independent under $\mu_z$. We shall use this algorithm to give a weight scheme $q'(z,i)$:
\[q'(z,i)=I(X_i:\Pi|X_{<i} YDUR)_{\mu_z} + I(X_i:\Pi|Y_{<i}XDUR)_{\mu_z}\]
where $X_{<i}$ denotes $X_1\ldots X_{i-1}$, and similarly for $Y_{<i}$. Clearly $q$ is non-negative, and we shall show that
\begin{align*}
& \sum_{i:z_i\neq w_i}\min\{q'(z,i),q'(w,i)\} = \Omega(1)
\end{align*}
for all $z,w$ such that $f(z) \cap f(w) = \varnothing$, where the constant in the $\Omega(1)$ is universal. Using this constant to normalize $q'(z,i)$, we get $q(z,i)$ which is a valid weight scheme. Since for any fixed value of $DU=du$, $I(X:\Pi|YR)_{\mu_{zdu}}$ is an information cost,
\[ \sum_{i \in [n]}q'(z,i) = I(X:\Pi|YDUR)_{\mu_z} + I(Y:\Pi|XDUR)_{\mu_z} \leq \CC(\Pi),\]
we have for any protocol $\Pi$,
\[ \CC(\Pi) \geq \Omega\left(\sum_{i \in [n]}q(z,i)\right) \geq \Omega\left(\min_{\{q(z,i)\}}\sum_{i \in [n]}q(z,i)\right) \]
where the minimization is over all valid weight schemes. This proves the result.

Let $z$ and $w$ be two inputs to $f$ such that $f(z) \cap f(w) =\varnothing$. Suppose $z$ and $w$ differ on indices in the block $\cB$. Let $\cB^1$ be the subset of indices in $\cB$ where $\min\{q'(z,i),q'(w,i)\}$ is achieved by $q'(z,i)$, and $\cB^2$ be the subset where the minimum is achieved by $q'(w,i)$. For an index $i\in \cB^1$, let $\cB^1_i$ denote $\cB^1 \cap [i-1]$, and $\cB^2_i$ denote $\cB^2 \cap [i-1]$. We also use $\cB^{1,c}$ to denote $[n]\setminus\cB^1$, and $\cB^{1,c}_i$ to denote $[i-1]\setminus\cB^1_i$. Then,
\begin{align}
&\sum_{i: z_i \neq w_i}\min\left\{q'(z,i), q'(w,i)\right\} \nonumber \\
& = \sum_{i\in \cB^1}(I(X_i:\Pi|X_{<i}YDUR)_{\mu_z} + I(Y_i:\Pi|Y_{<i}XDUR)_{\mu_z}) \nonumber \\
& \quad + \sum_{i\in \cB^2}(I(X_i:\Pi|X_{<i}YDUR)_{\mu_w} + I(Y_i:\Pi|Y_{<i}XDUR)_{\mu_w}) \nonumber \\
& \overset{(1)}= \frac{1}{2}\sum_{i\in \cB^1}(I(X_i:\Pi|X_{<i}YD_{-i}U_{-i}R)_{\mu_z} + I(Y_i:\Pi|Y_{<i}XD_{-i}U_{-i}R)_{\mu_z}) \nonumber \\
& \quad + \frac{1}{2}\sum_{i\in \cB^2}(I(X_i:\Pi|X_{<i}YD_{-i}U_{-i}R)_{\mu_w} + I(Y_i:\Pi|Y_{<i}XD_{-i}U_{-i}R)_{\mu_w}) \nonumber \\
& \overset{(2)}\geq \frac{1}{2}\sum_{i\in \cB^1}(I(X_i:\Pi|X_{\cB^1_i}Y_{\cB^1}D_{\cB^{1,c}}U_{\cB^{1,c}}R)_{\mu_z} + I(Y_i:\Pi|Y_{\cB^1_i}X_{\cB^1}D_{\cB^{1,c}}U_{\cB^{1,c}}R)_{\mu_z}) \nonumber \\
& \quad + \frac{1}{2}\sum_{i\in \cB^2}(I(X_i:\Pi|X_{\cB^2_i}Y_{\cB^2}D_{\cB^{2,c}}U_{\cB^{2,c}}R)_{\mu_w} + I(Y_i:\Pi|Y_{\cB^2_i}X_{\cB^2}D_{\cB^{2,c}}U_{\cB^{2,c}}R)_{\mu_w}) \nonumber \\
& = \frac{1}{2}(I(X_{\cB^1}:\Pi|Y_{\cB^1}D_{\cB^{1,c}}U_{\cB^{1,c}}R)_{\mu_z} + I(Y_{\cB^1}:\Pi|X_{\cB^1}D_{\cB^{1,c}}U_{\cB^{2,c}}R)_{\mu_z}) \nonumber \\
& \quad + \frac{1}{2}(I(X_{\cB^2}:\Pi|Y_{\cB^2}D_{\cB^{2,c}}U_{\cB^{2,c}}R)_{\mu_w} + I(Y_{\cB^2}:\Pi|X_{\cB^2}D_{\cB^{2,c}}U_{\cB^{2,c}}R)_{\mu_w}). \label{inf_ineq}
\end{align}
Above, equality $(1)$ follows by using the definition of $D_iU_i$. The inequality $(2)$ follows from the fact that given $Y_i$, $X_i$ is independent of all other $X_j$-s, $Y_j$-s $D_j$-s and $U_j$-s under both the $z$ and $w$ distributions, hence $I(Y_i:\Pi|Y_{<i}X(DU)_{-i}R)_{\mu_z} \geq I(Y_i:\Pi|Y_{\cB^1_i}X_{\cB^1}(DU)_{\cB^{1,c}}R)_{\mu_z}$, and equivalent inequalities hold for the other terms.

Consider $v \in \{0,1\}^n$ which agrees with $w$ on the bits in $\cB^1$, with $z$ on the bits in $\cB^2$, and with both of them outside $\cB$. Since $f(z)$ and $f(w)$ are disjoint, at least one of the following must be true:
\begin{enumerate}
    \item $\Pr_{(x,y)\sim \mu_v}[\Pi(x,y) \in f(z)] \leq \frac{1}{2}$
    \item $\Pr_{(x,y)\sim \mu_v}[\Pi(x,y) \in f(w)] \leq \frac{1}{2}$.
\end{enumerate}
In case 1, we shall give a protocol $\Pi'$ that computes $G$ correctly with probability at least $1-\epsilon$ in the worst case, such that
\[\IC(\Pi', \mu_0) = O( I(X_{\cB^1}:\Pi|Y_{\cB^1}D_{\cB^{1,c}}U_{\cB^{1,c}}R)_{\mu_z} + I(Y_{\cB^1}:\Pi|X_{\cB^1}D_{\cB^{1,c}}U_{\cB^{1,c}}R)_{\mu_z}).\]
Similarly, in case 2, we can use $\Pi$ to give a protocol $\Pi''$ for $G$, such that
\[ \IC(\Pi'', \mu_1) = O(I(X_{\cB^2}:\Pi|Y_{\cB^2}D_{\cB^{2,c}}U_{\cB^{2,c}}R)_{\mu_w} + I(Y_{\cB^2}:\Pi|X_{\cB^2}D_{\cB^{2,c}}U_{\cB^{2,c}}R)_{\mu_w}).\]
Due to equation \eqref{inf_ineq} and \lem{ANDOR-classical}, this proves the theorem.

In fact we only show how to construct the protocol $\Pi'$ in case 1; the construction of $\Pi''$ is identical. Since $z$ is in the domain of $f$ and $\Pi$ has worst case correctness for $f\circ G$, we must have $\Pr_{(x,y)\sim\mu_z}[\Pi(x,y) \in f(z)] \geq 1-\epsilon/2$. Therefore, in case 1, $\Pi$ can distinguish between samples from $\mu_z$ and $\mu_v$ on average: on getting a sample from $\mu_z$ or $\mu_v$, we can run $\Pi$ to see if it gives an output in $f(z)$ or not, and output $z$ or $v$ accordingly. This average distinguishing probability can be boosted by running $\Pi$ multiple times. 

In $\Pi'$, Alice and Bob will share $RD_{\cB^{1,c}}U_{\cB^{1,c}}R_AR_B$ as randomness, where we use $R_A$ and $R_B$ to denote Alice and Bob's part of the shared randomness from  \lem{s-sampling}, required to generate $z_{\cB^1}$ if $G(a,b)=0$ and $v_{\cB^1}$ if $G(a,b)=1$. On input $(a,b)$ to $G$, Alice and Bob do the following $k$ times for $k=\frac{2}{\epsilon^2}\ln(1/\epsilon)$ :
\begin{itemize}
\item Alice sets $x_{\cB^1} = R_A(a)$ and Bob sets $y_{\cB^1} = R_B(b)$.
\item Alice samples $x_{\cB^{1,c}}$ and Bob samples $y_{\cB^{1,c}}$ from private randomness, so that $G^{|\cB^{1,c}|}(x_{\cB^{1,c}},y_{\cB^{1,c}})$ $= z_{\cB^{1,c}}$. They can do this since given $(DU)_{\cB^{1,c}}$, $X_{\cB^{1,c}}$ and $Y_{\cB^{1,c}}$ are independent under $\mu_z$ and $\mu_v$.
\item They run $\Pi$ on this $(x,y)$ and generate the corresponding output.
\end{itemize}
(There are $k$ independent instances of the $D, U, R_A, R_B$ variables for each run above, but we denote all of them the same way for brevity.) The final output of $\Pi'$ is 1 if the number of runs which have given an output in $f(z)$ is at least $(1-\epsilon)k$, and 0 otherwise.

Clearly if $G(a,b)=0$, then $(x,y)$ generated this way is uniformly distributed in the support of $\mu_z$, and if $G(a,b)=1$, then $(x,y)$ is uniform in the support of $\mu_v$. Calling the protocol in the $i$-th round of $\Pi'$, $\Pi_i$, notice that the transcript of each $\Pi_i$ is independent of $AR_A$, where $A$ is the random variable for Alice's input, given the generated $X_{\cB^1}$ (and this holds true even conditioned on $BR_BD_{\cB^{1,c}}U_{\cB^{1,c}}R$). Moreover, both $X_{\cB^1}$ and $\Pi_i$ are independent of $BR_B$ given $Y_{\cB^1}$ and of $Y_{\cB^1}$ given $BR_B$ (even conditioned on $D_{\cB^1}U_{\cB^1}R$). Let $\mu_{0,z}$ denote the distribution of $ABR_AR_B(DU)_{\cB^{1,c}}$ when $G(a,b)=0$, which induces the distribution $\mu_{z_{\cB^1}}$ on $X_{\cB^1}Y_{\cB^1}$. Hence,
\begin{align*}
I(A:\Pi_i|BR_AR_B(DU)_{\cB^1}R)_{\mu_{0,z}} & \leq I(AR_A:\Pi_i|BR_B(DU)_{\cB^{1,c}}R)_{\mu_{0,z}} \\
 & \overset{(1)}{\leq} I(X_{\cB^1}:\Pi_i|BR_B(DU)_{\cB^{1,c}}R)_{\mu_{0,z}} \\
 & \overset{(2)}{\leq} I(X_{\cB^1}:\Pi_i|Y_{\cB^1}BR_B(DU)_{\cB^{1,c}}R)_{\mu_{0,z}} \\
 & \overset{(3)}{=} I(X_{\cB^1}:\Pi_i|Y_{\cB^1}(DU)_{\cB^{1,c}}R)_{\mu_{0,z}} \\
 & \overset{(4)}{=} I(X_{\cB^1}:\Pi|Y_{\cB^1}(DU)_{\cB^{1,c}}R)_{\mu_z}.
\end{align*}
where inequality (1) follows from the fact that $I(U':V|W) \leq I(U:V|W)$ if $U'$ is independent of $V$ given $UW$, (2) follows from the fact that $I(U:V|W) \leq I(U:V|WW')$ if $V$ is independent of $W'$ given $W$, and (3) follows from $I(U:V|WW')=I(U:V|W')$ if $U$ and $V$ are both independent of $W$ given $W'$. Finally, (4) follows from the definition of $\Pi_i$ and the the fact that the variables $AB$ don't appear in the expression, so we can switch from $\mu_{0,z}$ to $\mu_z$. Similarly,
\[ I(B:\Pi_i|AR_AR_B(DU)_{\cB^{1,c}}R)_{\mu_{0,z_{\cB^1}}} \leq I(Y_{\cB^1}:\Pi|X_{\cB^1}(DU)_{\cB^{1,c}}R)_{\mu_{z_{\cB^1}}},\]
which lets us conclude that
\[\IC(\Pi_i,\mu_0) \leq I(X_{\cB^1}:\Pi|Y_{\cB^1}(DU)_{\cB^{1,c}}R)_{\mu_z} + I(Y_{\cB^1}:\Pi|X_{\cB^1}(DU)_{\cB^{1,c}}R)_{\mu_z}.\]
The final $\IC(\Pi',\mu_0)$ is then at most $k(I(X_{\cB^1}:\Pi|Y_{\cB^1}(DU)_{\cB^{1,c}}R)_{\mu_z} + I(Y_{\cB^1}:\Pi|X_{\cB^1}(DU)_{\cB^{1,c}}R)_{\mu_z})$.

Now let us analyze the worst case error made by $\Pi'$. Since the output of $\Pi_i$ on $(a,b)$ is expected output of $\Pi$ on $(x,y)$ uniformly sampled from either $\mu_z$ or $\mu_v$, $\Pi_i$ produces an output in $f(z)$ on $(a,b)$ such that $G(a,b)=0$ with probability at least $1-\epsilon/2$, and on $(a,b)$ such that $G(a,b)=1$ with probability at most $\frac{1}{2}$. Hence by the Hoeffding bound, the probability of $(1-\epsilon)k$ many 0 outputs in the first case is at least
\[ 1-e^{-\epsilon^2k/2} \geq 1- \epsilon\]
and in the second case is at most
\[ e^{-2(1/2-\epsilon)^2k} \leq \epsilon.\]
Hence the probability of error on either input is at most $\epsilon$.
\end{proof}

\section{Quantum bounded-round lifting}

The following result, analogous to \lem{ANDOR-classical} except with round dependence, holds in the quantum case.

\begin{lemma}\label{lem:ANDOR-quantum}
Let $G$ be a constant-sized gadget which contains
$\AND_2$ and $\OR_2$ as sub-functions, and $\mu_0$ and $\mu_1$ be uniform distributions over its $0$- and $1$-inputs.
Then any $r$-round quantum protocol $\Pi$ for computing $G$
with bounded error has
$\QIC(\Pi,\mu_0), \QIC(\Pi,\mu_1) = \tOmega(1/r)$.
\end{lemma}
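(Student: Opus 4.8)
The plan is to prove the $\mu_0$ statement in full (the $\mu_1$ statement being symmetric via the $\OR_2$ sub-function of $G$), reducing it by a restriction argument on the input distribution to the single-coordinate quantum-information estimate that underlies the bounded-round quantum communication lower bound for set disjointness.

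\textbf{Step 1 (isolate an $\AND_2$ sub-rectangle).} Since $G$ contains $\AND_2$ as a sub-function, fix Alice-inputs $a_0,a_1$ and Bob-inputs $b_0,b_1$ with $G$ restricted to $\{a_0,a_1\}\times\{b_0,b_1\}$ equal to $\AND_2$; thus $(a_0,b_0),(a_0,b_1),(a_1,b_0)$ are $0$-inputs of $G$ and $(a_1,b_1)$ is a $1$-input. As $G$ has constant size, $|G^{-1}(0)|=O(1)$, so each of these three $0$-inputs carries $\mu_0$-mass at least a universal constant $c>0$.

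\textbf{Step 2 (restrict the distribution).} Every summand of $\QIC(\Pi,\mu_0)$ is of the form $I(\tX\tY:C_t\mid YB'_t)$ or $I(\tX\tY:C_t\mid XA'_t)$. In the first type, conditioning on the classical value of $Y$ makes $\tY$ a fixed pure state, so the term equals $\sum_y \mu_0(y)\,I(\tX:C_t\mid B'_t)_{|Y=y}$; discarding the (non-negative) terms with $y\notin\{b_0,b_1\}$ and then applying the chain rule together with non-negativity of conditional mutual information to the \emph{classical} event $\{X\in\{a_0,a_1\}\}$ (a deterministic function of the classical register $X$) bounds the term below by $\sum_{y\in\{b_0,b_1\}}\Pr_{\mu_0}[Y=y,\ X\in\{a_0,a_1\}]\cdot I(\tX:C_t\mid B'_t)_{|Y=y,\,X\in\{a_0,a_1\}}$, each coefficient being $\Omega(c)$ by Step 1 (note $(a_1,b_1)\notin\supp\mu_0$). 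Treating the second type symmetrically (condition on $X$, restrict $Y$) one obtains $\QIC(\Pi,\mu_0)\ge\Omega(1)\cdot\QIC(\Pi',\nu_0)$, where $\Pi'$ is $\Pi$ with Alice's inputs relabelled to $\{a_0,a_1\}$ and Bob's to $\{b_0,b_1\}$ — an $r$-round bounded-error quantum protocol for $\AND_2$ with the same communication registers and unitaries — and $\nu_0$ is the distribution on the three $0$-inputs of $\AND_2$ induced by $\mu_0$, which places mass $\Theta(1)$ on each. (The conditioned states of $\Pi$ and $\Pi'$ match because $\Pi'$ uses the identical unitaries and $\nu_0$ is exactly the renormalized restriction of $\mu_0$ to that rectangle.)

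\textbf{Step 3 (the AND lower bound) and the obstacle.} It remains to show $\QIC(\Pi',\nu_0)=\tOmega(1/r)$ for every $r$-round bounded-error quantum protocol $\Pi'$ for $\AND_2$ and every distribution $\nu_0$ supported on all three $0$-inputs with $\Theta(1)$ masses. This is precisely the per-coordinate estimate at the core of the bounded-round quantum communication lower bound for disjointness of Braverman, Garg, Ko, Mao, and Touchette: one passes through $\QIC(\Pi'^r,\nu_0)\ge\frac1r\HQIC(\Pi'^r,\nu_0)$ and runs a quantum cut-and-paste argument showing that $\HQIC(\Pi'^r,\nu_0)=o(1)$ would let one splice the near-behaviors of $\Pi'$ on $(1,0)$ and on $(0,1)$ into a simulation of its behavior on $(1,1)$, contradicting correctness at $(1,1)$; the quantum splicing is lossy by a $\polylog$ factor, which is why the bound is $\tOmega$ rather than $\Omega$. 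We cite (or re-run) this argument, and combining with Step 2 yields $\QIC(\Pi,\mu_0)=\tOmega(1/r)$. The $\mu_1$ bound is identical through an $\OR_2$ sub-rectangle of $G$, since computing $\OR_2$ to bounded error is the same as computing $\AND_2$ to bounded error after complementing both inputs and the output, an operation that leaves $\QIC$ unchanged. The only genuinely difficult ingredient is Step 3: classically, cut-and-paste is immediate from the rectangle property of transcript distributions, but quantumly the ``message history'' is a state possibly entangled with \emph{both} parties, so one must substitute quantum-information surrogates (Markov-chain/recovery maps, round-charging message compression) for the rectangle property — exactly the technically heavy content, and the reason we lean on the existing disjointness machinery. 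Steps 1 and 2 are routine conditional-mutual-information manipulations.
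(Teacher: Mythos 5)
Your proposal is correct and follows exactly the route the paper intends: the paper's own "proof" is a two-line remark saying the argument mirrors the classical case (restrict $\mu_0$ to the $\AND_2$ sub-rectangle, on which it places $\Omega(1)$ mass per $0$-input) and then invokes the $\tOmega(1/r)$ per-copy $\QIC$ lower bound for $\AND_2$/$\OR_2$ from Braverman--Garg--Ko--Mao--Touchette. Your Steps 1--2 simply make explicit the conditional-mutual-information restriction that the paper leaves implicit, and your Step 3 is the same citation the paper relies on.
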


The lemma has a similar proof to the classical case, and invokes the near-optimal lower bound for the quantum information
cost of the $\AND_2$ and $\OR_2$ functions due to \cite{BGK+18}.

\begin{theorem}
If $G$ is a constant-sized versatile gadget,
then $\QCC^r(f\circ G) = \tOmega(\CAdv(f)/r^2)$.
\end{theorem}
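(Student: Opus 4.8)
The plan is to mimic the proof of the classical lifting theorem (\thm{CAdv_lift} specialized to the versatile gadget $G$) essentially line by line, but tracking quantum information cost $\QIC$ instead of classical information cost $\IC$, and paying attention to the loss of a factor of $r$ per round-reduction step. I start from an $r$-round quantum protocol $\Pi$ for $f\circ G$ that is correct with probability $1-\epsilon/2$ on every input, and I define inputs $XY \sim \mu_z = \mu_{z_1}\otimes\cdots\otimes\mu_{z_n}$ with $\mu_{z_i}$ uniform on $G^{-1}(z_i)$. As before, introduce the dependency-breaking variables $D$ (uniform in $\B^n$, independent of everything) and $U$ (with $U_i = X_i$ if $D_i=0$, else $U_i=Y_i$), which in the quantum protocol are just classical registers correlating with the inputs. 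I then define the weight scheme
\[ q'(z,i) = I(\tX_i:\Pi|\tX_{<i}\tY DUR)_{\mu_z}\text{-type terms} \]
using the appropriate round-indexed quantum information terms summed over the rounds, exactly parallel to the classical $q'(z,i) = I(X_i:\Pi|X_{<i}YDUR) + I(X_i:\Pi|Y_{<i}XDUR)$; the key point is that $\sum_i q'(z,i) \le \QIC(\Pi,\mu_z) \le \QCC(\Pi)$ because of the subadditivity/chain-rule structure of $\QIC$ over the blocks, so a lower bound on $\sum_i q'(z,i)$ over all valid weight schemes gives a lower bound on $\QCC(\Pi)$.

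The core combinatorial step is unchanged: given $z,w$ with $f(z)\cap f(w)=\varnothing$ and difference-block $\cB$, split $\cB$ into $\cB^1$ (where the $\min$ in $\CAdv$ is achieved by the $z$-weight) and $\cB^2$ (achieved by the $w$-weight), form the hybrid input $v$, and argue that on $\mu_v$ the protocol $\Pi$ fails to land in at least one of $f(z),f(w)$ with probability $\ge 1/2$. In case $1$, I extract a quantum protocol $\Pi'$ for the gadget $G$ by the same construction as in the classical proof: Alice and Bob use shared randomness (\lem{s-sampling}) to generate the $\cB^1$-coordinates of $(x,y)$ evaluating to $z_{\cB^1}$ or $v_{\cB^1}$ according to their $G$-input, fill in the $\cB^{1,c}$-coordinates from private randomness conditioned on $(DU)_{\cB^{1,c}}$, run $\Pi$, and repeat $k = \tilde O(1/\epsilon^2)$ times with a majority-style decision rule. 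The $\QIC$ of each run is bounded, via the same chain of data-processing inequalities as in the classical argument (all of which hold for quantum conditional mutual information: $I(U':V|W)\le I(U:V|W)$ when $U'$ is generated from $UW$, dropping conditioning on registers independent of $V$, etc.), by the quantum-information version of $I(X_{\cB^1}:\Pi|Y_{\cB^1}(DU)_{\cB^{1,c}}R)_{\mu_z}$ summed appropriately; the $k$ repetitions multiply this by a constant. Then \lem{ANDOR-quantum} forces $\QIC(\Pi',\mu_0)=\tOmega(1/r')$ where $r'$ is the round count of $\Pi'$. Since $\Pi'$ runs $k$ copies of $\Pi$ sequentially, $r' = O(kr) = \tilde O(r/\epsilon^2)$, so we get $\tOmega(1/r)$ on the relevant sum of quantum information terms, and hence (after normalizing by this universal $\tilde\Omega(1/r)$ factor to get a feasible $\CAdv$ weight scheme $q(z,i)$) we conclude $\QCC(\Pi) \ge \sum_i q'(z,i) = \tilde\Omega(\CAdv(f)/r)$.

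The extra factor of $r$ (giving $/r^2$ rather than $/r$ in the final bound) comes in because I should be careful about which round-structured quantity I put into $q'$. The cleanest route is: use $\HQIC$-style terms (which are genuinely additive across blocks) for the weight scheme, note $\sum_i q'(z,i) \le \HQIC(\Pi,\mu_z) \le r\cdot \QCC(\Pi)$ by the \cite{LT17} relation $\QIC \ge \frac1r \HQIC$ together with $\QIC\le\QCC$, and then the per-gadget lower bound from \lem{ANDOR-quantum} is $\tOmega(1/r)$, yielding $\CAdv(f)\cdot\tOmega(1/r) \le \HQIC(\Pi,\mu_z) \le r\QCC(\Pi)$, i.e. $\QCC(\Pi) = \tOmega(\CAdv(f)/r^2)$. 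I expect the main obstacle to be the bookkeeping in the extraction of $\Pi'$: verifying that the quantum analogues of all the data-processing steps (1)--(4) in the classical proof go through for a genuine quantum protocol (using safe protocols and the purifying registers $\tX,\tY$ from the $\QIC$ definition), and confirming that the $k$-fold sequential repetition composes correctly as a single $O(kr)$-round protocol whose $\QIC$ is bounded by $k$ times the single-run bound — quantum information cost does not a priori behave additively under sequential composition the way classical $\IC$ does, so this step needs the chain rule for $\QIC$ applied carefully round-by-round, or alternatively an appeal to a direct-sum-type property of $\QIC$ across the independent repetitions.
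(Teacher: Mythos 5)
Your proposal is correct and follows essentially the same route as the paper: an $\HQIC$-based weight scheme $q'(z,i)$ built from the round-indexed terms $I(X_i:B_tC_t|X_{<i}YDU)$, the block decomposition with the hybrid input $v$, extraction of a $kr$-round protocol $\Pi'$ for $G$ via \lem{s-sampling}, and then \lem{ANDOR-quantum} plus the \cite{LT17} relation $\QIC\ge\frac1r\HQIC\ge\frac1{2r}\QIC$ to collect the two factors of $r$. Your instinct to switch from the purification-conditioned ($\QIC$-style) terms to $\HQIC$-style terms is exactly right — the paper explicitly notes that conditioning on the quantum memory registers would break the independence needed for the data-processing chain, and it bounds the sequential repetition simply by $\HQIC(\Pi',\mu_0)=k\,\HQIC(\Pi_i,\mu_0)$, resolving the additivity concern you raise.
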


\begin{proof}
For an $r$-round quantum protocol $\Pi$ that computes $f\circ G$ to error at most $\epsilon/2$, we define
\begin{align*}
q'(z,i) & = \sum_{t\text{ odd}}I(X_i:B_tC_t|X_{<i}YDU)_{\mu_z} + \sum_{t\text{ even}}I(Y_i:A_tC_t|Y_{<i}XDU)_{\mu_z}
\end{align*}
where the the distribution $\mu_z$ and correlation-breaking variables $DU$ are as in the classical case. Clearly,
\begin{align*}
\frac{1}{r}\sum_{i=1}^n q'(z,i) & = \frac{1}{r} \sum_{t\text{ odd}}I(X:B_tC_t|YDU)_{\mu_z} + \sum_{t\text{ even}}I(Y:A_tC_t|XDU)_{\mu_z} \\
& = \frac{1}{r}\HQIC(\Pi,\mu_z) \leq \QIC(\Pi,\mu_z) \leq \QCC(\Pi).
\end{align*}
Clearly $q'(z,i)$ is non-negative, and for all $z,w$ such that $f(z)\cap f(w) = \varnothing$, we shall show that
\begin{equation}\label{eq:QIC-zw} \sum_{i:z_i\neq w_i}\min\{q'(z,i),q'(w,i)\} = \widetilde{\Omega}(1/r).\end{equation}
Thus, defining $q(z,i)$ as our weight scheme by normalizing $q'(z,i)$ with the $r$ factor, we get the required result.

Showing \eqref{eq:QIC-zw} proceeds very similar to the classical case. For two inputs $z, w$ to $f$ such that $f(z) \cap f(w) = \varnothing$, which differ on the bits in block $\cB$, let $\cB^1 \subseteq \cB$ be the indices where $\min\{q'(z,i), q'(w,i)\}$ is achieved by $q'(z,i)$, and $\cB^2 \subseteq \cB$ be the indices where it is achieved by $q'(w,i)$. By the same chain of inequalities as in the classical case, we have
\begin{align*}
& \sum_{i:z_i\neq w_i}\min\{q'(z,i), q'(w,i)\} \\
& \geq \frac{1}{2}\left(\sum_{t\text{ odd}}I(X_{\cB^1}:B_tC_t|Y_{\cB^1}D_{\cB^{1,c}}U_{\cB^{1,c}})_{\mu_z} + \sum_{t\text{ even}}I(Y_{\cB^1}:A_tC_t|X_{\cB^1}D_{\cB^{1,c}}U_{\cB^{2,c}})_{\mu_z}\right) \\
& \quad + \frac{1}{2}\left(\sum_{t\text{ odd}}I(X_{\cB^2}:B_tC_t|Y_{\cB^2}D_{\cB^{2,c}}U_{\cB^{2,c}})_{\mu_w} + \sum_{t\text{ even}}I(Y_{\cB^2}:A_tC_t|X_{\cB^2}D_{\cB^{2,c}}U_{\cB^{2,c}})_{\mu_w}\right).
\end{align*}
Note that if we had used a QIC-based definition, instead of an HQIC-based definition, for $q'(z,i)$, where we conditioned on the $B_t, A_t$ registers, the above chain of inequalities would not have been valid, since $X_i$ is not independent of $X_jY_jD_jU_j$ at $j\neq i$ conditioned on $B_t$, and the same holds for $Y_i$.

Define the hybrid input $v$ which agrees with $w$ on the bits in $\cB^1$, with $z$ on the bits in $\cB^2$ and with both outside $\cB$. At least one of the following is true of $v$:
\begin{enumerate}
    \item $\Pr_{(x,y)\sim \mu_v}[\Pi(x,y) \in f(z)] \leq \frac{1}{2}$
    \item $\Pr_{(x,y)\sim \mu_v}[\Pi(x,y) \in f(w)] \leq \frac{1}{2}$.
\end{enumerate}
In case 1, we shall give a protocol $\Pi'$ that computes $G$ correctly with probability at least $1-\epsilon$ in the worst case, such that
\[\HQIC(\Pi', \mu_0) = O\left(\sum_{t\text{ odd}}I(X_{\cB^1}:B_tC_t|Y_{\cB^1}D_{\cB^{1,c}}U_{\cB^{1,c}})_{\mu_z} + \sum_{t\text{ even}}I(Y_{\cB^1}:A_tC_t|X_{\cB^1}D_{\cB^{1,c}}U_{\cB^{1,c}})_{\mu_z}\right).\]
Similarly, in case 2, we can use $\Pi$ to give a protocol $\Pi''$ for $G$, such that
\[ \HQIC(\Pi'', \mu_1) = O\left(\sum_{t\text{ odd}}I(X_{\cB^2}:B_tC_t|Y_{\cB^2}D_{\cB^{2,c}}U_{\cB^{2,c}})_{\mu_w} + \sum_{t\text{ even}}I(Y_{\cB^2}:A_tC_t|X_{\cB^2}D_{\cB^{2,c}}U_{\cB^{2,c}})_{\mu_w}\right).\]
The number of rounds in $\Pi'$ and $\Pi''$ will be $kr$, for $k=\frac{2}{\epsilon^2}\ln(1/\epsilon)$. This proves the theorem due to Lemma \ref{lem:ANDOR-quantum}, and the fact that $\HQIC(\Pi', \mu) =\Omega( \QIC(\Pi',\mu))$ for any $\mu$.

We only describe the protocol $\Pi'$. In $\Pi'$, Alice and Bob will share the initial entangled state of $\Pi$, as well as $D_{\cB^{1,c}}U_{\cB^{1,c}}R_AR_B$ as randomness, where $R_A$ and $R_B$ are Alice and Bob's parts of the shared randomness from Lemma \ref{lem:s-sampling}. Note that sharing randomness is equivalent to sharing an entangled state whose Schmidt coefficients are equal to the square roots of the corresponding probabilities, and locally measuring this state to get classical variables to use. We denote the inputs of $\Pi'$ by $(x',y')$ here to avoid confusion with the memory registers. On input $(x',y')$, Alice and Bob do the following $k$ times in $\Pi'$:
\begin{itemize}
\item Alice sets $x_{\cB^1} = R_A(x')$ and Bob sets $y_{\cB^1} = R_B(y')$.
\item Alice samples $x_{\cB^{1,c}}$ and Bob samples $y_{\cB^{1,c}}$ from private randomness (this can be done unitarily), so that $G^{|\cB^{1,c}|}(x_{\cB^{1,c}},y_{\cB^{1,c}})$ $= z_{\cB^{1,c}}$. They can do this since given $(DU)_{\cB^{1,c}}$, $X_{\cB^{1,c}}$ and $Y_{\cB^{1,c}}$ are independent under $\mu_z$ and $\mu_v$.
\item They run $\Pi$ on this $(x,y)$ and generate the corresponding output.
\end{itemize}
The final output of $\Pi'$ is 1 if the number of runs which have given an output in $f(z)$ is at least $(1-\epsilon)k$, and 0 otherwise.

Let $\mu_{0,z}$ denote the distribution of $X'Y'R_AR_B(DU)_{\cB^{1,c}}$ when $G(x',y')=0$, which induces $\mu_{z_{\cB^1}}$ on $X_{\cB^1}Y_{\cB^1}$. Let $C_{t,i}$ denote the message and $A_{t,i}, B_{t,i}$ the memory registers of the $i$-th run of $\Pi$ in $\Pi'$, which we denote by $\Pi_i$. (There are also independent $D, U, R_A, R_B$ variables for each run, but we drop the $i$ dependence here.) For every $i$, and an odd round $t$, we have similar to the classical case,
\[ I(X':B_{t,i}C_{t,i}|Y'R_AR_B(DU)_{\cB^{1,c}})_{\mu_{0,z}} \leq I(X_{\cB^1}:B_tC_t|Y_{\cB^1}(DU)_{\cB^{1,c}})_{\mu_z} \]
Similarly, for even $t$,
\[ I(Y':A_{t,i}C_{t,i}|X'R_AR_B(DU)_{\cB^{1,c}})_{\mu_{0,z}} \leq I(Y_{\cB^1}:A_tC_t|X_{\cB^1}(DU)_{\cB^{1,c}})_{\mu_z} \]
which gives us
\[ \HQIC(\Pi_i, \mu_0) \leq \sum_{t\text{ odd}}I(X_{\cB^1}:B_tC_t|Y_{\cB^1}(DU)_{\cB^{1,c}})_{\mu_z} + \sum_{t\text{ even}}I(Y_{\cB^1}:A_tC_t|X_{\cB^1}(DU)_{\cB^{1,c}})_{\mu_z}.\]
Finally, $\HQIC(\Pi', \mu_0) = k\HQIC(\Pi_i,\mu_0)$.

Since $z$ is in the domain of $f$ and $\Pi$ is correct for $f\circ G$ with probability at least $1-\epsilon/2$, we have $\Pr_{(x,y) \sim \mu_z}[\Pi(x,y) \in f(z)] \geq 1-\epsilon/2$, and the probability when $(x,y)$ is sampled according to $\mu_v$ instead is at most $\frac{1}{2}$. Therefore, by the definition of $\Pi'$ and the Hoeffding bound, $\Pi'$ is correct for $G$ with probability at least $1-\epsilon$. This completes the proof.
\end{proof}

\section{Towards a full quantum adversary lifting theorem}
\label{sec:Adv_one}

In this section, we will prove a conditional
lifting theorem for a somewhat weak
quantum adversary method, $\Adv_1$. The goal
of this section is primarily to introduce some tools
that we believe will be helpful in eventually
proving a lifting theorem for the positive-weight
quantum adversary method (hopefully
with a constant-sized gadget such as the \textsc{VER}).
Specifically, we prove a product-to-sum reduction
for quantum information cost in \sec{am_gm},
which should be helpful for handling the
$\sqrt{q(z,i)q(w,i)}$ terms that occur in the positive-weight
adversary method; indeed, we use this product-to-sum
reduction for our $\Adv_1$ lifting theorem.
We also show how lifting theorems for quantum adversary
methods are related to 2-party secure communication.

We now introduce the definition of $\QICZ(G)$,
our measure of the information leak that must
happen in any purported 2-party secure computation of $G$.

\begin{definition}
Let $G\colon\X\times\Y\to\B$ be a communication function.
Let $P$ be the set of all communication protocols
which solve $G$ to worst-case error $1/3$.
Let $\Delta_0$ be the set of all probability
distributions over $G^{-1}(0)$, and let
$\Delta_1$ be the set of all probability distributions
over $G^{-1}(1)$. We define
\[\QICZ(G)\coloneqq\adjustlimits
\inf_{\Pi\in P}\sup_{\mu\in\Delta_0\cup\Delta_1}
\QIC(\Pi,\mu).\]
\end{definition}

We note that since $\QIC(\Pi,\cdot)$ is a continuous
function of distributions \cite{BGK+18}, the inner
supremum is actually attained as a maximum.
We can now state our lifting theorem, as follows.

\begin{restatable}{theorem}{Advliftformal}\label{thm:Adv_lift_formal}
Let $f\colon\B^n\to\Sigma$ be a relation
(where $n\in\bN^{+}$ and $\Sigma$ is a finite alphabet)
and let $G\colon\X\times\Y\to\B$ be a communication
function which contains both $\AND_2$ and $\OR_2$
as subfunctions. Then
\[\QCC(f\circ G)=\tOmega(\Adv_1(f)\QICZ(G)).\]
\end{restatable}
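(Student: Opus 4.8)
The plan is to take a quantum protocol $\Pi$ computing $f\circ G$ to error $\epsilon/2$ with cost $\QCC(\Pi)$ and to extract from it a feasible solution $q$ to the minimization program defining $\Adv_1(f)$ whose objective value is $\tO(\QCC(\Pi)/\QICZ(G))$; rearranging then gives the theorem. I would follow the same ``primal'' template as the classical $\CAdv$ lifting proof, but with two extra ingredients: the minimax theorem for $\QICZ(G)$, which lets me fix a worst-case pair of distributions $\mu_0^\ast\in\Delta_0$, $\mu_1^\ast\in\Delta_1$ with $\QICZ(G)=\inf_{\Pi'\in P}\max\{\QIC(\Pi',\mu_0^\ast),\QIC(\Pi',\mu_1^\ast)\}$; and the product-to-sum reduction for quantum information cost from \sec{am_gm}. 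With $\mu_0^\ast,\mu_1^\ast$ fixed, for each $z\in\B^n$ let $\mu_z=\bigotimes_j\nu_{z_j}$ be the input distribution to $f\circ G$ with $\nu_0=\mu_0^\ast$, $\nu_1=\mu_1^\ast$, and introduce the correlation-breaking variables $D,U$ exactly as in the classical proof, so that conditioned on $D_jU_j$ the $j$-th gadget inputs are independent under $\mu_z$. I would use the QIC-style conditional information of $\Pi$ about the $i$-th gadget input,
\[ q'(z,i)=\sum_{t\text{ odd}}I(\tX_i\tY_i:C_t\mid \tX_{<i}\tY_{<i}\,YDU\,B_t)_{\mu_z}+\sum_{t\text{ even}}I(\tX_i\tY_i:C_t\mid \tX_{<i}\tY_{<i}\,XDU\,A_t)_{\mu_z}, \]
and take $q$ to be $q'$ normalized by $\tTheta(\QICZ(G))$.

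\textbf{Objective.} By the chain rule over $\tX_1\tY_1\cdots\tX_n\tY_n$, the sum $\sum_i q'(z,i)$ telescopes to $\sum_{t\text{ odd}}I(\tX\tY:C_t\mid YDU\,B_t)+\sum_{t\text{ even}}I(\tX\tY:C_t\mid XDU\,A_t)$, and since $D,U$ are produced from the classical inputs by fresh randomness independent of the protocol, conditioning additionally on them does not increase these terms; hence $\sum_i q'(z,i)\le\QIC(\Pi,\mu_z)\le\QCC(\Pi)$. The point worth stressing is that, in contrast with the bounded-round theorem, a QIC-style definition (conditioning on the partner's memory register) is legitimate here: the obstruction flagged there arises only when a weight must be split over a nontrivial block in the feasibility step, whereas an $\Adv_1$ constraint involves only pairs at Hamming distance one, so no such split occurs --- which is exactly what avoids the $\HQIC$ detour that would be fatal for unbounded rounds.

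\textbf{Feasibility.} Fix $z,w\in\B^n$ with $f(z)\cap f(w)=\varnothing$ differing in exactly coordinate $i$; replacing $G$ by $\neg G$ if necessary (which has the same value of $\QICZ$ and still contains $\AND_2$ and $\OR_2$), I may assume $z_i=0$, $w_i=1$. From $\Pi$ I would extract a protocol $\Pi_G$ for $G$: on input $(a,b)$, Alice and Bob use $(DU)_{-i}$ as shared randomness and private randomness to generate the (now conditionally independent) off-$i$ gadget inputs evaluating to $z_{-i}=w_{-i}$, put $(a,b)$ in coordinate $i$, run $\Pi$, and answer $0$ iff $\Pi$'s output lies in $f(z)$, repeating $O(1)$ times to amplify --- the distinguishing gap is constant, since on an input evaluating to $z$ (resp.\ $w$) the protocol $\Pi$ answers in $f(z)$ (resp.\ outside $f(z)$, since $f(w)$ is disjoint from it) with probability $\ge 1-\epsilon/2$. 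Matching up which registers of $\Pi$ become the purifier, the memory and the input of $\Pi_G$, and running the same chain of conditional-mutual-information monotonicity steps as in the classical proof, I get $\QIC(\Pi_G,\mu_0^\ast)=O(q'(z,i))$ and $\QIC(\Pi_G,\mu_1^\ast)=O(q'(w,i))$, so $\QIC(\Pi_G,\mu_0^\ast)\cdot\QIC(\Pi_G,\mu_1^\ast)=O(q'(z,i)\,q'(w,i))$. Feeding $\Pi_G$ into the product-to-sum reduction of \sec{am_gm} yields a protocol $\Pi_G'\in P$ with $\QIC(\Pi_G',\mu_0^\ast)+\QIC(\Pi_G',\mu_1^\ast)=\tO\big(\sqrt{q'(z,i)\,q'(w,i)}\,\big)$, whence $\QICZ(G)\le\max\{\QIC(\Pi_G',\mu_0^\ast),\QIC(\Pi_G',\mu_1^\ast)\}=\tO\big(\sqrt{q'(z,i)\,q'(w,i)}\,\big)$. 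Thus $\sqrt{q'(z,i)\,q'(w,i)}=\tOmega(\QICZ(G))$, which after normalization is exactly the $\Adv_1$ constraint $\sqrt{q(z,i)q(w,i)}\ge 1$ for the Hamming-distance-one pair $z,w$. Combined with the objective bound this gives $\Adv_1(f)=\tO(\QCC(\Pi)/\QICZ(G))$; since the argument uses nothing beyond $f(z)\cap f(w)=\varnothing$, it applies to relations $f$ as well.

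\textbf{The hard part.} The delicate step is the sub-protocol extraction together with $\QIC(\Pi_G,\mu_0^\ast)=O(q'(z,i))$: one must keep $\Pi_G$ safe, correctly identify Alice's and Bob's views in $\Pi_G$ (Bob's view comprises his input $y_i$, the $y_{-i}$ he samples and its randomness, his share of $(DU)_{-i}$, and his memory $B_t$ from $\Pi$), and then verify that the conditioning in $q'(z,i)$ dominates all of this --- which is precisely where the correlation-breaking variables $D,U$ and the safety of $\Pi$ do the work, just as classically but now with quantum conditional mutual informations. A secondary point requiring care is the exact statement of the minimax theorem for $\QICZ(G)$ and the continuity/convexity properties of $\QIC(\Pi,\cdot)$ underlying it. Finally, the product-to-sum reduction, although citable from \sec{am_gm}, is the heaviest ingredient feeding into the argument, and is where the $\tOmega$ (rather than $\Omega$) comes from.
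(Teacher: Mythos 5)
Your overall architecture is the same as the paper's (primal weight scheme built from per-coordinate quantum information of $\Pi$, the minimax characterization of $\QICZ$, a per-pair reduction to a protocol for $G$, the product-to-sum reduction, then normalization), but there is a concrete gap where you invoke \thm{am_gm}: that theorem requires $\mu_0,\mu_1$ to be \emph{nontrivial} (every correct protocol has $\QIC>1/\poly(r)$ against them), and the raw minimax distributions $\mu_0^\ast,\mu_1^\ast$ from \thm{QICZ_minimax} need not be. The minimax only controls the sum $\inf_{\Pi'}\bigl(\QIC(\Pi',\mu_0^\ast)+\QIC(\Pi',\mu_1^\ast)\bigr)$; an individual term can be tiny or exactly zero (against a point-mass distribution the purification registers $\tX'\tY'$ are trivial, so the corresponding $\QIC$ vanishes). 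If $\QIC(\Pi_G,\mu_0^\ast)=0$ then your $q'(w,i)=0$, the product is zero, and no normalization can satisfy $\sqrt{q(z,i)q(w,i)}\ge1$; even when both terms are positive, the $\polylog r$ factors in \thm{am_gm} are only controlled under the $1/\poly(r)$ lower bound. The paper repairs exactly this by mixing $\mu_0'$ with the uniform distribution over $0$-inputs of the $\AND_2$ embedded in $G$, and $\mu_1'$ with the $1$-inputs of the embedded $\OR_2$, so that by \cite{BGK+18} every correct protocol has $\QIC=\Omega(1/r)$ against the mixtures while the minimax lower bound survives up to a factor of $2$. Tellingly, your argument never actually uses the hypothesis that $G$ contains $\AND_2$ and $\OR_2$; that hypothesis is there precisely for this step.

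A secondary divergence is the per-pair reduction itself. You retain the correlation-breaking variables $D,U$ and have Alice and Bob classically sample the off-$i$ gadget inputs, then assert $\QIC(\Pi_G,\mu_b^\ast)=O(q'(\cdot,i))$ ``by the same chain as in the classical proof.'' The paper does something cleaner: since $z_{-i}=w_{-i}$ there is no hybrid input and no block to split, so $D,U$ are dropped entirely; the off-$i$ inputs and their purifications are placed in the shared entangled state (Touchette's embedding), with $\tX_{<i}\tY_{<i}$ on Alice's side and $\tX_{>i}\tY_{>i}$ on Bob's, and $q'(z,i)$ is defined with exactly those conditioning registers so that $\QIC(\Pi',\mu_0)=q'(w,i)$ holds with equality, with no monotonicity chain at all. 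Your hybrid conditioning on $\tX_{<i}\tY_{<i}$ together with the classical $Y$, $D$, $U$ obliges you to prove (i) that conditioning on $DU$ does not increase the telescoped terms (provable, since $U$ is a function of the classical registers $\tX\tY YD$, but it is an extra argument you only assert), and (ii) the comparison between the $\QIC$ of $\Pi_G$ — which conditions on the sampled $Y_{-i}$, the randomness used to sample it, and the $DU$ shares — and $q'(z,i)$; this is precisely the step you label ``the hard part'' and leave unproved. It is plausible your route can be pushed through, but as written it is a sketch at the crux, whereas the paper's embedding removes the difficulty; if you keep your construction, that chain of quantum conditional-mutual-information inequalities must be written out explicitly.
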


\subsection{A minimax for \texorpdfstring{$\QICZ$}{QICZ}}

Before attacking the proof of \thm{Adv_lift_formal},
we first prove a minimax theorem for the measure
$\QICZ(G)$, giving an alternate characterization of it.
To do so, we invoke Sion's minimax theorem \cite{Sio58}.

\begin{fact}[Sion's minimax]\label{fct:Sion}
Let $V_1$ and $V_2$ be real topological vector spaces,
and let $X\subseteq V_1$ and $Y\subseteq V_2$ be convex.
Let $\alpha\colon X\times Y\to\bR$ be semicontinuous and
saddle. If either $X$ or $Y$ is compact, then
\[\adjustlimits\inf_{x\in X}\sup_{y\in Y}\alpha(x,y)
=\adjustlimits\sup_{y\in Y}\inf_{x\in X}\alpha(x,y).\]
\end{fact}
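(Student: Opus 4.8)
Only the inequality $\inf_{x}\sup_{y}\alpha(x,y)\le\sup_{y}\inf_{x}\alpha(x,y)$ needs proof: the reverse holds for an arbitrary $\alpha$, since for any fixed $x_0\in X$, $y_0\in Y$ we have $\sup_{y}\alpha(x_0,y)\ge\alpha(x_0,y_0)\ge\inf_{x}\alpha(x,y_0)$, and we then take $\inf$ over $x_0$ and $\sup$ over $y_0$. Here ``semicontinuous and saddle'' is read in the usual way for Sion's theorem: $\alpha(\cdot,y)$ is lower semicontinuous and quasiconvex on $X$ for each $y$, and $\alpha(x,\cdot)$ is upper semicontinuous and quasiconcave on $Y$ for each $x$. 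Replacing $\alpha$ by $-\alpha$ and swapping the two coordinates if necessary, I may assume that it is $X$ that is compact. Suppose for contradiction that some real $c$ satisfies $\sup_{y}\inf_{x}\alpha(x,y)<c<\inf_{x}\sup_{y}\alpha(x,y)$.

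\textbf{Step 1: reduce to finitely many columns.} For every $x\in X$, $\sup_{y}\alpha(x,y)\ge\inf_{x'}\sup_{y}\alpha(x',y)>c$, so there is a $y$ with $\alpha(x,y)>c$. Since $\alpha(\cdot,y)$ is lower semicontinuous, each $V_y:=\{x\in X:\alpha(x,y)>c\}$ is open, and the family $\{V_y\}_{y\in Y}$ covers the compact set $X$; extract a finite subcover $V_{y_1},\dots,V_{y_m}$. Then $\inf_{x}\max_{1\le j\le m}\alpha(x,y_j)\ge c$. So it suffices to prove the following finite case, which contradicts this: for all $y_1,\dots,y_m\in Y$,
\[\inf_{x\in X}\ \max_{1\le j\le m}\alpha(x,y_j)\ \le\ \sup_{y\in\operatorname{conv}\{y_1,\dots,y_m\}}\ \inf_{x\in X}\alpha(x,y)\ \le\ \sup_{y\in Y}\inf_{x\in X}\alpha(x,y).\]

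\textbf{Step 2: the finite case, by induction on $m$.} The case $m=1$ is the triviality $\inf_x\alpha(x,y_1)\le\sup_y\inf_x\alpha(x,y)$. For the inductive step, suppose the asserted inequality fails for some $y_1,\dots,y_m$ and fix $c$ strictly between $\sup_{\operatorname{conv}}\inf_x\alpha$ and $\inf_x\max_j\alpha(x,y_j)$, so that for every $x$ there is a $j$ with $\alpha(x,y_j)>c$. Applying the induction hypothesis to $\{y_2,\dots,y_m\}$ and to $\{y_1,y_3,\dots,y_m\}$ (and using that a finite maximum of lower semicontinuous functions attains its infimum on the compact $X$), I get $a,b\in X$ with $\alpha(a,y_j)<c$ for all $j\ge2$ and $\alpha(b,y_j)<c$ for all $j\in\{1,3,\dots,m\}$. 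Quasiconvexity of each $\alpha(\cdot,y_j)$ along the segment $[a,b]$ gives $\alpha(x,y_j)<c$ for every $x\in[a,b]$ and every $j\ge3$; hence on $[a,b]$ the only indices that can make $\alpha(x,y_j)>c$ are $j=1,2$, and the two open subsets $\{x\in[a,b]:\alpha(x,y_1)>c\}$ and $\{x\in[a,b]:\alpha(x,y_2)>c\}$ cover the connected set $[a,b]$ while omitting the endpoints $b$ and $a$ respectively; so they meet, producing $x^*\in[a,b]$ with $\alpha(x^*,y_1)>c$ and $\alpha(x^*,y_2)>c$. Quasiconcavity of $\alpha(x^*,\cdot)$ (its super-level set at height $c$ is convex and contains $y_1,y_2$) then forces $\alpha(x^*,y)>c$ for all $y\in[y_1,y_2]$, while still $\alpha(x^*,y_j)<c$ for $j\ge3$. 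Finally, feeding points of $[y_1,y_2]$ together with $y_3,\dots,y_m$ back into the induction hypothesis, and combining the resulting witnesses using upper semicontinuity of $\alpha(x,\cdot)$ (to pass to a finite subcover of $[y_1,y_2]$) and quasiconvexity in $x$ once more, one extracts a single $\bar x\in X$ with $\alpha(\bar x,y_j)<c$ for all $1\le j\le m$ --- contradicting $\inf_x\max_j\alpha(x,y_j)\ge c$. This completes the induction.

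\textbf{Main obstacle.} The reductions (weak duality, the open-cover passage to finitely many $y$'s, unpacking ``semicontinuous and saddle'') are routine; the real content is the inductive step of Step 2 --- arranging the bookkeeping so that the one-dimensional connectedness argument on the segment $[a,b]\subseteq X$ and the compactness/upper-semicontinuity argument on the segment $[y_1,y_2]\subseteq Y$ fuse finitely many ``partial'' witnesses into a single point of $X$ beating all $m$ columns. This is the core of Sion's original proof \cite{Sio58}; I would follow Komiya's streamlined presentation, which organizes it precisely as the induction on $m$ above. (An alternative would be to replace Step 2 by an appeal to the KKM (Knaster--Kuratowski--Mazurkiewicz) lemma on the simplex $\operatorname{conv}\{y_1,\dots,y_m\}$, verifying the covering condition from quasiconcavity and extracting the common point from quasiconvexity plus compactness; the two routes are of comparable difficulty.)
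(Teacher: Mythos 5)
You should first note that the paper does not prove this statement at all: it is imported as a known Fact with a citation to Sion, so there is no internal proof to compare against and your proposal stands on its own. Your reading of ``semicontinuous and saddle'' in the quasiconvex/quasiconcave sense is harmless (it is more general than the paper's convex/concave definition), and your reductions are sound up to a point: weak duality, the WLOG that $X$ is compact, the open-subcover passage to finitely many $y_j$'s, and the production of $x^*\in[a,b]$ with $\alpha(x^*,y_1)>c$, $\alpha(x^*,y_2)>c$, $\alpha(x^*,y_j)<c$ for $j\ge 3$ are all correct. The genuine gap is the last sentence of Step 2, which is exactly where the content of Sion's theorem lives. The induction hypothesis does give, for each $w\in[y_1,y_2]$, a witness $x_w$ with $\alpha(x_w,w)<c$ and $\alpha(x_w,y_j)<c$ for $j\ge3$, and upper semicontinuity in $y$ plus compactness of the segment does give finitely many witnesses $x_1,\dots,x_k$ whose ``good'' open sets cover $[y_1,y_2]$. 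But quasiconvexity only yields $\alpha(\bar x,y)\le\max_i\alpha(x_i,y)$ for $\bar x\in\mathrm{conv}\{x_1,\dots,x_k\}$, and at $y=y_1$ (or $y_2$) some of the $x_i$ have $\alpha(x_i,y)>c$ --- indeed, under your standing hypothesis every point of $X$ exceeds $c$ at some column, so no single point beating all $m$ columns can be produced by this kind of majorization; the nonexistence of such a point is precisely the hypothesis you are trying to contradict, so the argument must use ``for every $y$ in the hull there is some $x$ with $\alpha(x,y)<c$'' in a structurally different way than taking convex combinations of witnesses.

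The actual finish (Sion's two-point lemma, streamlined as Komiya's Lemma~1) is a second \emph{connectedness} argument, on the $Y$-side segment, not a compactness/subcover argument: pick $c<c'<\inf_x\max_j\alpha(x,y_j)$, let $A$ and $B$ be the closed convex \emph{disjoint} sets where $\alpha(\cdot,y_1)\le c'$ resp.\ $\alpha(\cdot,y_2)\le c'$ (intersected with the $c'$-sublevel sets of $y_3,\dots,y_m$ in your setting); each nonempty closed convex set $C_w=\{x:\alpha(x,w)\le c,\ \alpha(x,y_j)\le c\ \text{for }j\ge3\}$ is connected and, by quasiconcavity in $y$, contained in $A\cup B$, hence lies wholly in $A$ or wholly in $B$; the induced partition of the connected segment $[y_1,y_2]$ into $I=\{w:C_w\subseteq A\}\ni y_1$ and $J=\{w:C_w\subseteq B\}\ni y_2$ is then shown, using upper semicontinuity in $y$ and a careful interplay of the two thresholds, to consist of two closed sets --- contradicting connectedness. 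Your $x^*$ step, while correct, plays no role in that argument. As written, the proposal defers the theorem's core to Komiya while sketching a mechanism that is not his and does not close; either carry out that lemma (or the KKM route you mention) in full, or simply cite Sion as the paper does.
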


To understand the statement of this theorem,
we need a few definitions:

\begin{enumerate}
\item A real-valued function $\phi$ is \emph{convex}
if $\phi(\lambda x_1+(1-\lambda)x_2)\le
\lambda \phi(x_1)+(1-\lambda)\phi(x_2)$
for all $x_1,x_2\in\Dom(\phi)$ and all $\lambda\in(0,1)$.
\item A real-valued function $\phi$ is \emph{concave}
if $\phi(\lambda x_1+(1-\lambda)x_2)\ge
\lambda \phi(x_1)+(1-\lambda)\phi(x_2)$
for all $x_1,x_2\in\Dom(\phi)$ and all $\lambda\in(0,1)$.
\item A function $\alpha\colon X\times Y\to\bR$ is \emph{saddle}
if $\alpha(\cdot, y)$ is convex as a function of
$x$ for each fixed $y\in Y$, and if $\alpha(x,\cdot)$
is concave as a function of $y$ for each fixed $x\in X$.
\item A real-valued function $\phi$ is \emph{upper semicontinuous}
at a point $x$ if for any $\epsilon>0$, there exists
a neighborhood $U$ of $x$ such that for all $x'\in U$,
we have $\phi(x')<\phi(x)+\epsilon$.
\item A real-valued function $\phi$ is \emph{lower semicontinuous}
at a point $x$ if for any $\epsilon>0$, there exists
a neighborhood $U$ of $x$ such that for all $x'\in U$,
we have $\phi(x')>\phi(x)-\epsilon$.
\item A function $\alpha\colon X\times Y\to\bR$
is \emph{semicontinuous} if $\alpha(\cdot,y)$ is lower semicontinuous
over all of $X$ for each $y\in Y$ and if $\alpha(x,\cdot)$
is upper semicontinuous over all of $Y$ for each $x\in X$.
\end{enumerate}

We now use Sion's minimax theorem to prove a minimax theorem
for $\QICZ$.

\begin{theorem}\label{thm:QICZ_minimax}
Fix a communication function $G$. Let $P$
be the set of all protocols which solve $G$ to worst-case
error $1/3$, let $\Delta_0$ be the set of probability
distributions over $0$-inputs to $G$, and let
$\Delta_1$ be the set of probability distributions
over $1$-inputs to $G$. Then
\[\frac12 
\max_{\substack{\mu_0\in\Delta_0\\\mu_1\in\Delta_1}}
\inf_{\Pi\in P}\QIC(\Pi,\mu_0)+\QIC(\Pi,\mu_1)
\le \QICZ(G)\le 
\max_{\substack{\mu_0\in\Delta_0\\\mu_1\in\Delta_1}}
\inf_{\Pi\in P}\QIC(\Pi,\mu_0)+\QIC(\Pi,\mu_1).\]
Moreover, the maximum is attained.
\end{theorem}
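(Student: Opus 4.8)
The plan is to apply Sion's minimax theorem (Fact~\ref{fct:Sion}) to the function $\alpha(\Pi,(\mu_0,\mu_1)) = \QIC(\Pi,\mu_0)+\QIC(\Pi,\mu_1)$, with $X = P$ (the protocol space) playing the role of the minimization variable and $Y = \Delta_0\times\Delta_1$ playing the role of the maximization variable. The upper bound $\QICZ(G)\le \max\inf(\QIC(\Pi,\mu_0)+\QIC(\Pi,\mu_1))$ should be the easy direction: given any fixed $\Pi$, we have $\sup_{\mu\in\Delta_0\cup\Delta_1}\QIC(\Pi,\mu) \le \sup_{\mu_0,\mu_1}\QIC(\Pi,\mu_0)+\QIC(\Pi,\mu_1)$ since $\QIC\ge 0$ and any single-class distribution is one of $\mu_0$ or $\mu_1$; taking $\inf_\Pi$ of both sides and observing that $\inf_\Pi$ of the right side can only increase if we first fix $(\mu_0,\mu_1)$ to the worst case and then optimize $\Pi$ — i.e.\ $\inf_\Pi\sup_{\mu_0,\mu_1}(\cdots)\ge\sup_{\mu_0,\mu_1}\inf_\Pi(\cdots)$ trivially, and the other inequality is exactly Sion — yields the claim. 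For the lower bound, given any $\Pi$, pick $\mu_0^*\in\Delta_0$ and $\mu_1^*\in\Delta_1$ achieving the respective suprema of $\QIC(\Pi,\cdot)$ over the two classes; then $\sup_{\mu\in\Delta_0\cup\Delta_1}\QIC(\Pi,\mu) = \max\{\QIC(\Pi,\mu_0^*),\QIC(\Pi,\mu_1^*)\}\ge \tfrac12(\QIC(\Pi,\mu_0^*)+\QIC(\Pi,\mu_1^*))\ge\tfrac12\inf_\Pi(\cdots)$ evaluated at that same pair — so $\QICZ(G)\ge\tfrac12\sup_{\mu_0,\mu_1}\inf_\Pi(\QIC(\Pi,\mu_0)+\QIC(\Pi,\mu_1))$, and then Sion converts the $\inf\sup$ on the right of the upper-bound chain into the $\sup\inf$ we want.

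The heart of the argument is checking the hypotheses of Sion's theorem for $\alpha$. First I would set up $P$ as a convex subset of a suitable real topological vector space: protocols can be viewed (via their channel/transcript representation) as tuples of unitaries, or more conveniently after convexification as distributions over protocols, which makes $P$ convex — mixing two correct-to-error-$1/3$ protocols keeps worst-case error $\le 1/3$. On this convexified $P$, $\QIC(\Pi,\mu)$ is convex in $\Pi$ for each fixed $\mu$ (this is the standard convexity of information cost under public mixtures, since conditioning on the mixture label only decreases conditional mutual information), so $\alpha(\cdot,(\mu_0,\mu_1))$ is convex. The set $\Delta_0\times\Delta_1$ of pairs of distributions is convex and compact (finite input alphabets, so these are simplices), and $\QIC(\Pi,\mu)$ is \emph{linear} — hence concave — in $\mu$ for fixed $\Pi$, because the global state $\ket{\Psi^t}$ and all the conditional mutual informations in the definition of $\QIC$ are... actually this requires care: $\QIC$ is a sum of conditional mutual information terms which are concave, not linear, in the input distribution $\mu$; I would cite the concavity/continuity properties of $\QIC$ from \cite{BGK+18} (already invoked in the excerpt for continuity) to get that $\alpha(\Pi,\cdot)$ is concave and upper semicontinuous on the compact set $\Delta_0\times\Delta_1$. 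Lower semicontinuity of $\alpha(\cdot,(\mu_0,\mu_1))$ on $P$ follows similarly (or one takes the standard lower-semicontinuous envelope). With $\Delta_0\times\Delta_1$ compact, Sion applies and gives $\inf_\Pi\sup_{(\mu_0,\mu_1)}\alpha = \sup_{(\mu_0,\mu_1)}\inf_\Pi\alpha$, and since $\alpha(\Pi,\cdot)$ is upper semicontinuous on a compact set, the outer supremum is attained, giving the ``moreover'' clause.

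The main obstacle I anticipate is the topology on the protocol space and the semicontinuity/convexity of $\QIC$ in $\Pi$: unlike the dependence on $\mu$, which lives on a finite-dimensional simplex where everything is clean, the space of quantum protocols is infinite-dimensional and one must either (i) restrict to protocols with bounded communication/dimension and take closures, arguing that this doesn't change the relevant infima, or (ii) work with the convex hull of protocols (distributions over protocols) and verify that $\QIC$ extends correctly to such mixtures and remains well-behaved. A clean way around this is to note that for the minimax we only need $P$ convex and $\alpha$ convex-in-$\Pi$, lower-semicontinuous-in-$\Pi$, with \emph{compactness on the other side} — so no compactness of $P$ is needed, and lower semicontinuity of $\QIC$ in $\Pi$ can be obtained by replacing $\QIC$ with its lower-semicontinuous hull without changing $\QICZ(G)$ (since $\QICZ$ is already an infimum over $\Pi$). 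I would isolate this point as the one technical lemma to nail down before the rest goes through routinely.
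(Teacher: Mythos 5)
Your route is essentially the paper's: convexify the protocol space into finite-support mixtures of protocols, apply Sion's minimax to $\alpha(\Pi,(\mu_0,\mu_1))=\QIC(\Pi,\mu_0)+\QIC(\Pi,\mu_1)$ against the compact convex set $\Delta_0\times\Delta_1$, use linearity/convexity of $\QIC$ in the protocol and concavity plus continuity in the distribution, compare the resulting $\sup\inf$ to $\QICZ(G)$ via ``the sum dominates any single term'' and ``the max is at least the average'' (giving the factor $\tfrac12$), and get attainment from upper semicontinuity on a compact set (strictly, you need that the pointwise infimum over $\Pi$ of upper semicontinuous functions is upper semicontinuous, which is the paper's phrasing of the same point).

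The one genuine gap is exactly the step you defer: lower semicontinuity of $\QIC(\cdot,\mu)$ in the protocol variable, and neither of your proposed workarounds is sound as stated. Replacing $\QIC$ by its lower-semicontinuous hull in $\Pi$ (for each fixed pair $(\mu_0,\mu_1)$) need not preserve concavity or upper semicontinuity in $(\mu_0,\mu_1)$, so Sion's hypotheses are not obviously restored; and even if Sion applies to the hulled function $\underline{\alpha}$, you only obtain $\inf\sup\underline{\alpha}=\sup\inf\underline{\alpha}=\sup\inf\alpha$, whereas the upper bound on $\QICZ(G)$ needs $\QICZ(G)\le\inf_\Pi\sup$ of a quantity that still dominates $\sup_{\mu\in\Delta_0\cup\Delta_1}\QIC(\Pi,\mu)$ pointwise in $\Pi$; since $\underline{\alpha}\le\alpha$, that domination can fail, and the remark that ``$\QICZ$ is already an infimum over $\Pi$'' does not rescue it because the hull does not commute with the inner supremum over $\mu$. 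Likewise, restricting to bounded-communication protocols changes the class $P$ over which $\QICZ$ is defined and would need its own approximation argument. The paper closes this step directly and cheaply, using the same linearity of $\QIC(\cdot,\mu)$ over protocol mixtures (\cite{Tou15}, Lemma~5) that gives you convexity: writing two nearby mixtures as $A=(1-p)B+pC$ and $A'=(1-p)B+pC'$ with $\norm{A-A'}_1<\delta$ (so $p<\delta/2$ and $C$ supported inside the finite support of $A$), linearity gives $\QIC(\Pi_A,\mu)-\QIC(\Pi_{A'},\mu)=p\left(\QIC(\Pi_C,\mu)-\QIC(\Pi_{C'},\mu)\right)$, which is at most $p$ times a finite quantity, and hence can be made smaller than $\epsilon$ by choosing $\delta$ small. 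With that lemma in place, the rest of your argument goes through as in the paper.
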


\begin{proof}
We will aim to use Sion's minimax theorem \cite{Sio58}.
To this end, we start with a bit of formalism.
The set $P$ of protocols is, of course, an infinite set,
and has somewhat complicated structure.
In order to apply a minimax theorem, however, we want
to switch over to a convex subset of a real topological
vector space. To do so, we first consider the
free real vector space over $P$, which we denote
by $V(P)$. This is the real vector space consisting
of all formal (finite) linear combinations of elements in $P$;
the set $P$ is a basis of this vector space. We further
consider the $1$-norm on this space, where we define
the $1$-norm of a formal (finite) linear combination
as the sum of absolute values of coefficients in the linear
combination. This norm induces a topology over $V(P)$,
making it a real topological vector space.

Our set of algorithms will be the subset of $V(P)$
consisting of vectors with norm $1$ that have
non-negative coefficients in the linear combination; we denote this
subset by $R$. It is not hard to see that the elements of $R$
are simply all the finite-support probability distributions
over protocols in $P$. We observe that $R$ is a convex set.
This will be the set over which we take the infimum in Sion's
minimax theorem.

Observe that since the input set $\Dom(G)$ of $G$ is finite,
the sets $\Delta_0$ and $\Delta_1$ are both convex, compact
subsets of the real vector space $\bR^{|\Dom(G)|}$, which
has a standard topology. It follows that the
set $\Delta_0\times\Delta_1$ is also convex and compact
(as a subset of the real topological vector space
$\bR^{2|\Dom(G)|}$). This will be the set over which we take the
supremum in Sion's minimax.

Let $A\in R$. This is a finite-support probability
distribution over protocols in $P$; however, it is always
possible to use shared randomness to construct a single
protocol $\Pi_A\in P$ whose behavior exactly matches
that of $A$ (that is, in $\Pi_A$, Alice and Bob will
sample a protocol from $A$ using shared randomness,
and then run that protocol).
Finally, we define
$\alpha\colon R\times(\Delta_0\times\Delta_1)\to[0,\infty)$
by setting
\[\alpha(A,(\mu_0,\mu_1))\coloneqq 
\QIC(\Pi_A,\mu_0)+\QIC(\Pi_A,\mu_1).\]
This will be the function on which we apply Sion's minimax.

It remains to show that $\alpha$ is semicontinuous and
quasisaddle. It is not hard to see that the sum of two
semicontinuous functions (on the same domain) is semicontinuous,
and that the sum of two saddle functions is saddle.
It will therefore be sufficient to show that $\QIC$
is semicontinuous and saddle.

In \cite{Tou15} (Lemma~5), it was shown that $\QIC(\cdot,\mu)$
is linear (and hence convex) for each $\mu$. In \cite{Tou15}
(Lemma 6), it was shown that $\QIC(\Pi,\cdot)$ is concave.
Hence $\QIC$ is saddle, and therefore so is $\alpha$.
In \cite{BGK+18} (Lemma~4.8), it was shown that $\QIC(\Pi,\cdot)$
is continuous.

It remains to show the lower semicontinuity of $\QIC(\cdot,\mu)$.
More explicitly, for each fixed distribution $\mu$, each $A\in R$
and each $\epsilon>0$, there exists $\delta>0$ such that
for all $A'\in R$ with $\|A-A'\|_1<\delta$, we have
$\QIC(\Pi_{A'},\mu)>\QIC(\Pi_A,\mu)-\epsilon$.

We can write $A=(1-p)B+pC$
and $A'=(1-p)B+pC'$ where $B,C,C'\in R$, and $(C,C')$
is a pair of distributions with disjoint support. In other words,
$B$ is the probability distribution consisting of the (normalized)
overlap between $A$ and $A'$, while $C$ and $C'$ are the probability
distributions we get from subtracting out the overlap from $A$
and from $A'$ respectively. If $\|A-A'\|_1<\delta$, we must have
$p<\delta/2$. Now, by the linearity of $\QIC(\cdot,\mu)$,
we have
$\QIC(\Pi_A,\mu)=(1-p)\QIC(\Pi_B,\mu)+p\QIC(\Pi_C,\mu)$ and
$\QIC(\Pi_{A'},\mu)=(1-p)\QIC(\Pi_B,\mu)+p\QIC(\Pi_{C'},\mu)$.
We want to choose $\delta$ so that
$\QIC(\Pi_{A'},\mu)>\QIC(\Pi_A,\mu)-\epsilon$,
or equivalently, so that
$\QIC(\Pi_{C'},\mu) > \QIC(\Pi_C,\mu) - \epsilon/p$.
This rearranges to wanting
$\epsilon/p > \QIC(\Pi_C,\mu)-\QIC(\Pi_{C'},\mu)$;
hence it is sufficient to choose $\delta$ so that
$2\epsilon/\delta > \QIC(\Pi_C,\mu)-\QIC(\Pi_{C'},\mu)$.
It is clear that such $\delta$ can always be chosen,
as $\QIC(\Pi_C,\mu)$ must be finite.

We conclude that $\QIC(\cdot,\mu)$ is lower semicontinuous.
Sion's minimax theorem (\fct{Sion}) then gives
\[\adjustlimits\inf_{A\in R}\sup_{(\mu_0,\mu_1)\in\Delta_0\times\Delta_1}
\QIC(\Pi_A,\mu_0)+\QIC(\Pi_A,\mu_1)
=
\adjustlimits\sup_{(\mu_0,\mu_1)\in\Delta_0\times\Delta_1}\inf_{A\in R}
\QIC(\Pi_A,\mu_0)+\QIC(\Pi_A,\mu_1).\]
Since $R$ contains $P$ as a subset, and since every protocol
in $R$ can be converted into an equivalent protocol in $P$, taking
an infimum over $A\in R$ is the same as taking an infimum over $\Pi\in P$.
It is then clear that the left hand side is at least $\QIC(G)$
(since the latter has only one $\QIC(\Pi,\mu_0)$ or $\QIC(\Pi,\mu_1)$
term instead of both), but no more than twice $\QIC(G)$
(since the maximum of $\QIC(\Pi,\mu_0)$ and $\QIC(\Pi,\mu_1)$
is at least the average of the two).
Hence the desired result follows. The attainment of the maximum
comes from the fact that an upper semicontinuous function
on a nonempty compact set attains is maximum, combined
with the fact that a pointwise infimum of upper semicontinuous
functions is upper semicontinuous.
\end{proof}

\subsection{Product-to-sum reduction for quantum information}
\label{sec:am_gm}

In order to prove \thm{Adv_lift_formal},
we will need a way to bound the
product of quantum information cost on the ``$0$-input'' side
and the quantum information cost on the ``$1$-input'' side.
We start with the following definition.

\begin{definition}
Let $G$ be a communication function. We say a distribution $\mu$
is \emph{nontrivial} for $G$ if for any protocol $\Pi$
computing $G$ (to bounded error against worst-case inputs),
it holds that $\QIC(\Pi,\mu)>1/\poly(r)$, where $r$ is the number
of rounds of $\Pi$.
(In particular, it should not be possible to achieve
$\QIC(\Pi,\mu)=0$ if $\mu$ is nontrivial.)
\end{definition}

Using this definition, we state the following theorem,
which is the main result of this subsection.

\begin{theorem}\label{thm:am_gm}
Let $G$ be a gadget, let $\mu_0$ and $\mu_1$
be nontrivial $0$- and $1$-distributions for $G$,
and let $\Pi$ be a protocol
solving $G$ (to bounded error against worst-case inputs).
Then there is a protocol $\Pi'$ which also solves $G$
(to bounded error against worst-case inputs) which satisfies
\[\QIC(\Pi',\mu_0)+\QIC(\Pi',\mu_1)
=O\left(\sqrt{\QIC(\Pi,\mu_0)\QIC(\Pi,\mu_1)}\cdot \polylog r\right),\]
where $r$ is the number of rounds of $\Pi$ and
where the constant in the big-$O$ is universal.
Moreover, the number of rounds of $\Pi'$ is polynomial in that
of $\Pi$.
\end{theorem}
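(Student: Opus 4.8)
The plan is to establish the bound via the three-step chain of reductions sketched in the introduction. Write $a=\QIC(\Pi,\mu_0)$ and $b=\QIC(\Pi,\mu_1)$ and assume without loss of generality that $a\le b$; if $b=O(a)$ then $\Pi$ itself works, since then $a+b=O(\sqrt{ab})$. We may also assume that $\sqrt{ab}\cdot\polylog r$ does not exceed the cost of the trivial protocol that exchanges both inputs (otherwise that protocol is $\Pi'$), which keeps all logarithmic overheads below $\polylog r$. Fix $n\approx\sqrt{b/a}$, which is finite since $\mu_0$ is nontrivial (so $a>0$). The first step is to build a cheap protocol for $\OR_n\circ G$: after amplifying $\Pi$ to error $1/\poly(n)$ (a $\polylog n$ overhead), run it on $n$ independent copies of $G$ in parallel --- optionally aborting once some copy outputs $1$ --- and output the $\OR$ of the results; call this $\Lambda$. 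Its quantum information cost is $O(\sqrt{ab}\,\polylog n)$ against the all-$0$ distribution, and $O(b\,\polylog n)$ against any distribution with at most one $1$-input coordinate. The choice $n\approx\sqrt{b/a}$ balances these two costs against each other, and this is the step where the hypothesis that one side is much smaller than the other enters.

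The second step, following \cite{BBGK18}, compiles a combinatorial group-testing algorithm (Belov \cite{Bel15}) into a communication protocol $M$ that outputs all $n$ values $G(x_i,y_i)$ on any instance whose value-string is $1$-sparse, implementing each ``group query'' --- the $\OR$ of the $G$-values over a chosen subset of the coordinates --- coherently by one amplified call to $\Lambda$, padded up to arity $n$ with dummy $0$-inputs. On a $1$-sparse instance, binary search for the defective uses $O(\polylog n)$ adaptive queries; a query missing the defective costs $O(\sqrt{ab}\,\polylog n)$ and one hitting it costs $O(b\,\polylog n)$, so the total quantum information cost of $M$ is $O(b\,\polylog n)$, while against the all-$0$ distribution it is $O(\sqrt{ab}\,\polylog n)$. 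Uncomputing each subroutine call between queries, and amplifying each call so that a union bound over the $n$ outputs holds, is what makes these per-query contributions add correctly, and this bookkeeping is the bulk of the work in this step.

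The third step invokes the direct-sum argument for quantum information cost of \cite{Tou15}: to obtain a protocol $\Pi'$ for $G$, plant the real input $(x,y)$ at a uniformly random one of $n$ coordinates, fill the other $n-1$ coordinates with the input-independent (hence free) shared-entangled version of $\mu_0$, run $M$, and output the planted coordinate's value. Every instance produced this way is $1$-sparse --- the dummy coordinates all evaluate to $0$ --- so $M$ is correct on it and therefore $\Pi'$ is worst-case correct for $G$. The direct sum gives $\QIC(\Pi',\mu_0)\le\tfrac1n\QIC(M,\mu_0^{\otimes n})+O(1)=O(a\,\polylog n)$ and $\QIC(\Pi',\mu_1)=O((b/n)\,\polylog n)$; with $n\approx\sqrt{b/a}$ both are $O(\sqrt{ab}\,\polylog r)$, and summing over the two distributions yields the claimed bound. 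The round count of $\Pi'$ grows only by a $\polylog n$ factor over that of $\Pi$ ($\Lambda$ adds no rounds, group testing adds $\polylog n$ adaptive layers), hence stays polynomial in the round count of $\Pi$.

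I expect the second step to be the main obstacle: running the $\OR\circ G$ subprotocol coherently inside a (possibly quantum) adaptive group-testing routine while keeping the quantum information cost under control requires care --- one must uncompute between queries so that the contributions of distinct, overlapping queries genuinely add, amplify each call to inverse-polynomial error without destroying this additivity, and verify that the effective sparsity seen on the distributions of interest is $O(1)$ so that only $\polylog n$ queries are issued. A secondary point is checking that the quantum direct sum of \cite{Tou15} goes through with the non-product distributions $\mu_0,\mu_1$, which is where it is convenient that they are nontrivial: this keeps the quantities being manipulated strictly positive and the parameter $n$ well defined.
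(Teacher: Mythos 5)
Your three-step skeleton (OR composition, group testing, direct sum) matches the paper's, but the way you instantiate steps 2 and 3 contains a gap that breaks the bound on the $\mu_1$ side. The direct-sum identity of \cite{Tou15} that both you and the paper rely on says that for a \emph{fixed} string $z$, the costs of the embedded protocols sum to the cost of the $n$-fold protocol: $\sum_{j}\QIC(\Pi_{j,z},\mu_{z_j})=\QIC(M,\mu_z)$. The $1/n$ savings therefore comes from averaging over all $n$ coordinates \emph{of the same string}. In your construction, when the real input is a $1$-input planted at coordinate $i$ with all other coordinates filled from $\mu_0$, the relevant quantity $\QIC(\Pi_{i,e_i},\mu_1)$ is a \emph{single} term of the sum $\sum_j\QIC(\Pi_{j,e_i},\mu_{(e_i)_j})=\QIC(M,\mu_{e_i})$, and nothing forces it to be a $1/n$ fraction of that sum. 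Indeed, for your binary-search-based $M$ it is not: locating the unique defective coordinate entails running the $\OR\circ G$ subroutine on nested sets whose only possible $1$ is coordinate $i$, which reveals $\Omega(b)$ information about that coordinate. Averaging over the planted position $i$ does not help, because each term $\QIC(\Pi_{i,e_i},\mu_1)$ is individually $\Theta(b\,\polylog n)$. So you get $\QIC(\Pi',\mu_1)=O(b\,\polylog n)$ rather than $O((b/n)\polylog n)$, and since $b\ge\sqrt{ab}$ no choice of $n$ recovers the theorem; the claimed bound $\QIC(\Pi',\mu_1)=O((b/n)\polylog n)$ is simply asserted, not derived.

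This is exactly why the paper cannot restrict to $1$-sparse instances. Its direct-sum lemma averages over \emph{all} $z\in\B^n$, using $\Pi_{i,z}=\Pi_{i,z^i}$ to pair $\QIC(\Pi_{i,z},\mu_0)+\QIC(\Pi_{i,z},\mu_1)$ with $\QIC(\Pi_{i,z},\mu_{z_i})+\QIC(\Pi_{i,z^i},\mu_{(z^i)_i})$ and then collapse each fixed-$z$ sum via the identity; this requires the $G^{\oplus n}$ protocol to have controlled cost on \emph{every} $z$, which is why Belovs' $O(\sqrt{n})$-query group-testing algorithm (correct on arbitrary hidden strings) is used instead of $O(\log n)$-query binary search for a promised-sparse string, and why the parameter is $n\approx(b/a)\log(b/a)$ rather than your $\sqrt{b/a}$ (the extra $\sqrt n$ from Belovs is absorbed by the larger $n$). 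Two smaller issues: the ``$+O(1)$'' you write in the direct-sum inequality would by itself already defeat the theorem, since $\sqrt{ab}$ can be $1/\poly(r)\ll 1$ (the identity is in fact exact, with no additive term); and ``run in parallel, optionally aborting at the first $1$'' should be ``run sequentially, stopping at the first $1$'' if one wants the OR protocol to be cheap on non-sparse distributions, which the corrected argument needs.
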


Before we prove this, we will need a few lemmas.
In the following, we will use $G^{\oplus n}$ to denote the
direct sum of $n$ copies of $G$; that is, if $G\colon\X\times\Y\to\B$,
then $G^{\oplus}\colon(\X^n)\times(\Y^n)\to\B^n$ is the function
that takes in $n$ separate copies to $G$ and outputs $n$
separate outputs from $G$.

\begin{lemma}\label{lem:dir_sum}
Let $(G,\mu_0,\mu_1)$ be any gadget, $0$-distribution,
and $1$-distribution, let $n\in\mathbb{N}^+$, and
let $\Pi$ be a protocol which solves $G^{\oplus n}$
(to bounded error against worst-case inputs). Then
there is a protocol $\Pi'$ which solves $G$
(to bounded error against worst-case inputs) which satisfies
\[\frac{\QIC(\Pi',\mu_0)+\QIC(\Pi',\mu_1)}{2}
\le \frac{1}{n}\cdot \max_{z\in\B^n}
\QIC\left(\Pi,\mu_z\right).\]
\end{lemma}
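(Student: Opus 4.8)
The plan is to use a standard direct-sum/averaging argument, where we pick a random coordinate and embed a single copy of $G$ into that coordinate of $G^{\oplus n}$, padding the other $n-1$ coordinates with correctly-distributed ``junk'' inputs that can be generated with no communication. First I would set up the embedding: given an input $(x,y)$ to $G$ and a target coordinate $j \in [n]$, and a target string $z\in\B^n$, Alice and Bob use shared (and private) randomness to sample inputs $(x^{(i)},y^{(i)})$ for $i\neq j$ such that $G(x^{(i)},y^{(i)})=z_i$ — crucially, sampling these pads locally, so $X^{(i)}$ and $Y^{(i)}$ are independent given the shared randomness — and place $(x,y)$ in coordinate $j$. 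Running $\Pi$ on the resulting length-$n$ input and reading off coordinate $j$ of the output solves $G$ on $(x,y)$, so correctness against worst-case inputs is immediate. The subtlety is that we want the protocol $\Pi'$ to work simultaneously for the $\mu_0$ and $\mu_1$ input distributions on coordinate $j$; since the padding of the other coordinates does not depend on the input to coordinate $j$, a single protocol $\Pi'$ (with $j$ and the padding randomness baked in as shared randomness) handles both.

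Next I would bound the information cost. Fix $z\in\B^n$. Feed coordinate $j$ an input distributed as $\mu_{z_j}$ and pad the remaining coordinates so that coordinate $i$ carries $\mu_{z_i}$; the global input distribution on $G^{\oplus n}$ is then exactly $\mu_z$. By the chain rule for (quantum) information cost across the $n$ coordinates, together with the fact that the padding is generated by local operations from shared randomness (so conditioning on it does not increase information cost, and the cross-coordinate terms telescope), one gets
\[
\sum_{i=1}^n \QIC(\Pi_{j=i},\mu_{z_i}\text{-embedding}) \le \QIC(\Pi,\mu_z),
\]
where $\Pi_{j=i}$ denotes the embedding protocol using coordinate $i$. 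Averaging over $j$ uniform in $[n]$ — which is part of the shared randomness of $\Pi'$, so $\QIC$ of the averaged protocol is the average of the $\QIC$'s by linearity of $\QIC(\cdot,\mu)$ (Lemma~5 of \cite{Tou15}) — yields
\[
\QIC(\Pi',\mu_{z_j}) = \frac1n \sum_{i=1}^n \QIC(\Pi_{j=i},\cdots) \le \frac1n \QIC(\Pi,\mu_z).
\]
I would apply this once with a $z$ whose $j$-th coordinate is $0$ (to bound $\QIC(\Pi',\mu_0)$) and once with a $z$ whose $j$-th coordinate is $1$ (to bound $\QIC(\Pi',\mu_1)$); in each case $\QIC(\Pi,\mu_z) \le \max_{z\in\B^n}\QIC(\Pi,\mu_z)$. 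Since the averaging distributes $j$ the same way on both sides, a single $\Pi'$ achieves both bounds, and adding them and dividing by $2$ gives the claimed inequality.

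The main obstacle I anticipate is making the information-cost chain rule genuinely tight in the quantum setting: one must be careful about what is conditioned on in the $\QIC$ definition (the purifying registers $\tX\tY$ and the memory registers), and ensure that the padded coordinates, together with their purifications, can be treated as extra shared randomness / ancilla without inflating the single-copy $\QIC$. Concretely, I would verify that (i) generating $\mu_{z_i}$-distributed pads unitarily on each side keeps the protocol ``safe'' in the sense of the preliminaries, and (ii) the superadditivity-type bound $\sum_i \QIC(\text{copy }i) \le \QIC(\Pi,\mu_z)$ for product distributions $\mu_z = \bigotimes_i \mu_{z_i}$ holds — this is where the independence of $X^{(i)},Y^{(i)}$ given the shared pad-randomness is used, exactly analogously to the correlation-breaking arguments appearing in the classical and bounded-round lifting proofs earlier in the paper. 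Everything else is bookkeeping: correctness follows since reading coordinate $j$ of $\Pi$'s output computes $G$ on the genuine input, and the round count of $\Pi'$ equals that of $\Pi$.
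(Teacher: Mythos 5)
Your setup (embed the real input at a random coordinate, pad the others locally, invoke the Touchette direct-sum decomposition $\sum_i \QIC(\Pi_{i,z},\mu_{z_i})\le\QIC(\Pi,\mu_z)$, and use linearity of $\QIC(\cdot,\mu)$ over mixtures) matches the paper, and correctness is fine. But the information-cost accounting has a genuine gap at the step where you ``fix $z$'' and then claim both $\QIC(\Pi',\mu_0)$ and $\QIC(\Pi',\mu_1)$ are bounded by $\frac1n\max_w\QIC(\Pi,\mu_w)$. For a fixed padding string $z$, the direct-sum decomposition of $\QIC(\Pi,\mu_z)$ controls only the \emph{matching} terms $\QIC(\Pi_{j,z},\mu_{z_j})$. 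The complementary term $\QIC(\Pi_{j,z},\mu_{1-z_j})$ equals $\QIC(\Pi_{j,z^j},\mu_{(z^j)_j})$, which is a single summand in the decomposition of a \emph{different} global distribution $\mu_{z^j}$ --- a different one for each $j$. Bounding each such term by the full $\QIC(\Pi,\mu_{z^j})$ and summing over $j$ loses the factor of $n$: you get $\frac1n\sum_j\QIC(\Pi,\mu_{z^j})\le\max_w\QIC(\Pi,\mu_w)$ with no $\frac1n$ savings. Taking $z=0^n$ for the $\mu_0$ bound and $z=1^n$ for the $\mu_1$ bound does not help either, since those are different paddings and hence different protocols, so there is no single $\Pi'$ achieving both.

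The paper's fix is to define $\Pi'$ as the uniform mixture over \emph{all} pairs $(i,z)\in[n]\times\B^n$, i.e., to randomize the padding bits as well as the coordinate. Since $\Pi_{i,z}=\Pi_{i,z^i}$ (the protocol ignores $z_i$), the map $z\mapsto z^i$ is an involution on the index set that swaps the roles of $\mu_0$ and $\mu_1$ at coordinate $i$; pairing $z$ with $z^i$ turns $\frac{1}{2}\bigl(\QIC(\Pi_{i,z},\mu_0)+\QIC(\Pi_{i,z},\mu_1)\bigr)$ summed over all $z$ into $\sum_z\QIC(\Pi_{i,z},\mu_{z_i})$, and then the direct sum collapses the inner sum over $i$ to $\QIC(\Pi,\mu_z)$ for each $z$, giving the average $\frac{1}{2^n}\sum_z\frac1n\QIC(\Pi,\mu_z)\le\frac1n\max_z\QIC(\Pi,\mu_z)$. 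You need this double averaging (or an equivalent symmetrization of the padding); averaging over the coordinate alone does not suffice.
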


\begin{proof}
Let for $z\in\B^n$, let $\Pi_{i,z}$ be the protocol which:
takes an input to $G$;
artificially generates $n-1$ inputs from $\mu_{z_j}$ for $j\ne i$
for all the gadgets $G$ except at position $i$; places
the true input at position $i$; runs $\Pi$ on the resulting
input to $G^{\oplus n}$; traces out all the outputs except for
position $i$; and returns the result. Note that $\Pi_{i,z}$
does not depend on the value of $z_i$, but depends on the rest of $z$.
If we use $z^i$ to denote the string $x$ with $i$ flipped, we have
$\Pi_{i,z}=\Pi_{i,z^i}$ for all $x$ and $i$.

\cite{Tou15} (Lemma 4) showed that for all $x\in\B^n$,
\[\sum_{i=1}^n\QIC(\Pi_{i,z},\mu_{z_i})
=\QIC\left(\Pi,\mu_{z}\right).\]
Let $\Pi'\coloneqq\frac{1}{n}\frac{1}{2^n}\sum_{i=1}^n\sum_{x\in\B^n}\Pi_{i,z}$.
Again by \cite{Tou15} (Lemma 5),
\begin{align*}
\frac{\QIC(\Pi',\mu_0)+\QIC(\Pi',\mu_1)}{2}
&=\frac{1}{n2^n}\sum_{i=1}^n\sum_{z\in\B^n}
	\frac{\QIC(\Pi_{i,z},\mu_0)+\QIC(\Pi_{i,z},\mu_1)}{2}\\
&=\frac{1}{n2^n}\sum_{i=1}^n\sum_{z\in\B^n}
	\frac{\QIC(\Pi_{i,z},\mu_{z_i})+\QIC(\Pi_{i,z},\mu_{z^i_i})}{2}\\
&=\frac{1}{n2^n}\sum_{i=1}^n\sum_{z\in\B^n}
	\frac{\QIC(\Pi_{i,z},\mu_{z_i})+\QIC(\Pi_{i,z^i},\mu_{z^i_i})}{2}\\
&=\frac{1}{n2^n}\sum_{i=1}^n\sum_{z\in\B^n}\QIC(\Pi_{i,z},\mu_{z_i})\\
&=\frac{1}{n2^n}\sum_{z\in\B^n}\QIC\left(\Pi,\mu_z\right)\\
&\le \frac{1}{n}\cdot\max_{z\in\B^n}\QIC\left(\Pi,\mu_{z_i}\right).\qedhere
\end{align*}
\end{proof}

\begin{lemma}\label{lem:query_QIC}
Let $G_1,G_2,\dots,G_n$ be any sequence of communication tasks,
and for each $i\in[n]$ let $\Pi_i$ be a protocol which
solves $G_i$ (to bounded error against worst-case inputs).
Let $F$ be a (possibly partial) query function on $n$ bits, and let $Q$
be a $T$-query quantum query algorithm computing $F$ (to bounded error
against worst-case inputs).
Then there is a protocol $\Pi'$ computing $F\circ \{G_i\}_i$ (to bounded
error against worst-case inputs)
such that for any $z\in\Dom(F)$ and any distribution $\mu_z$ supported on
$(G_1\oplus G_2\oplus\dots\oplus G_n)^{-1}(z)$, we have
\[\QIC(\Pi',\mu_z)=\tO\left(T\log\log n\cdot \max_{i\in[n]}\QIC(\Pi_i,\mu^i_z)\right),\]
where $\mu^i_z$ is the marginal of $\mu_z$ on gadget number $i$.
\end{lemma}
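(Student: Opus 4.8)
The plan is to have Alice and Bob jointly simulate the $T$-query algorithm $Q$ on the virtual input $a\in\B^n$ with $a_i=G_i(x_i,y_i)$, implementing each of the $T$ oracle calls of $Q$ by a coherent communication subroutine assembled from the $\Pi_i$. I would route all of $Q$'s workspace (including the query-index register $I$ and the query-target qubit) to Alice, so the $T+1$ input-independent unitaries of $Q$ cost no communication and all communication happens inside the oracle blocks. An oracle block implements $\ket{i}_I\ket{b}\mapsto\ket{i}_I\ket{b\oplus G_i(x_i,y_i)}$ by looping over $i=1,\dots,n$ and, at step $i$, running $\Pi_i$ \emph{controlled on the event $I=i$} (Alice forwarding a one-qubit flag for ``$I=i$'' so that Bob can act coherently as well), XOR-ing the value $G_i(x_i,y_i)$ computed by $\Pi_i$ into the target controlled on $I=i$, and then running $\Pi_i^{-1}$ controlled on $I=i$ to uncompute. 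Since each $\Pi_i$ is only correct to bounded error, I would first amplify it to error $o(1/T)$ by an $O(\log T)$-fold coherent majority, so that the accumulated simulation error over the $T$ queries stays $o(1)$ and the resulting protocol $\Pi'$ computes $F\circ\{G_i\}_i$ to bounded error; this costs only a $\polylog$ overhead, absorbed into the $\tO(\cdot)$.

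For the information bound, fix $z\in\Dom(F)$ and a distribution $\mu_z$ on $(G_1\oplus\dots\oplus G_n)^{-1}(z)$, with marginal $\mu_z^i$ on the $i$-th gadget. By construction every message of $\Pi'$ belongs to exactly one controlled-$\Pi_i$ run (in one of the $T$ queries) and hence depends on the inputs only through the $i$-th gadget's input $X_iY_i$; applying the chain rule over coordinates together with the independence of the different gadgets' purifiers --- this is the template already used in \lem{dir_sum} via \cite{Tou15} (Lemma~4) --- one gets that $\QIC(\Pi',\mu_z)$ decomposes as a sum over $i$ of the information that the controlled-$\Pi_i$ runs reveal about $\tX_i\tY_i$. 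Within a single query, I would bound the information leaked about coordinate $i$ by $w_i\cdot\QIC(\Pi_i,\mu_z^i)$ up to $\polylog$ factors, where $w_i$ is the amplitude weight that $Q$'s query state places on index $i$, so that $\sum_i w_i\le 1$. Then each query contributes at most $\max_i\QIC(\Pi_i,\mu_z^i)$ up to polylogs, and summing over the $T$ queries gives $\tO\!\left(T\cdot\max_i\QIC(\Pi_i,\mu_z^i)\right)$; the explicit $\log\log n$ and the remaining hidden polylogs absorb the overhead of the coherent controlled implementation (e.g.\ organizing the loop over $i\in[n]$ so as not to pay a full $\log n$ factor) and of the amplification above.

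The main obstacle is the per-coordinate bound $w_i\cdot\QIC(\Pi_i,\mu_z^i)$ in the quantum setting: when $\Pi_i$ is run \emph{controlled} on a register $I$ that is itself in superposition and possibly entangled with the purifiers $\tX\tY$, conditioning on ``$I=i$'' is not classical conditioning, so one cannot simply invoke linearity of $\QIC(\cdot,\mu)$ in the protocol to conclude that the controlled run costs $w_i\,\QIC(\Pi_i,\mu_z^i)$. To handle this I would combine the linearity of $\QIC(\Pi,\mu)$ in $\Pi$ and the concavity of $\QIC(\Pi,\cdot)$ in $\mu$ from \cite{Tou15} (Lemmas~5--6) with a joint-convexity bound for conditional mutual information applied to the decomposition of the global state into branches labelled by the value of $I$: on the ``$I\neq i$'' branches the transcript carries no information about $\tX_i\tY_i$, and on the ``$I=i$'' branch (of weight $w_i$) the transcript is the one produced by an honest run of $\Pi_i$, so the total information about $\tX_i\tY_i$ is $\le w_i\,\QIC(\Pi_i,\mu_z^i)$ up to lower-order terms. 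Everything else --- the amplification, the $O(\log T)$ and $O(\log\log n)$ bookkeeping, and verifying that the assembled object is a valid protocol computing $F\circ\{G_i\}_i$ --- is routine.
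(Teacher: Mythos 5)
Your construction is, at its core, the same as the paper's: Alice simulates $Q$, each oracle call is implemented by (amplified) gadget protocols run controlled on the query index, with the control made visible to Bob, and the per-query information is decomposed as $\sum_i w_i\,\QIC(\hat\Pi_i,\mu^i_z)\le\max_i\QIC(\hat\Pi_i,\mu^i_z)$; the paper just sends Bob a coherent copy $E_t$ of the whole index register and runs one index-controlled protocol per query instead of your loop of $n$ flag-controlled runs. The genuine gap is precisely at the step you flag as the main obstacle, and your proposed fix does not close it. ``Joint convexity of conditional mutual information'' is not the right tool (CMI is not jointly convex); what actually licenses the branch decomposition is that the flag (resp.\ the copy $E_t$) is a \emph{classical} register on the reduced state appearing in each QIC term, because its purifying copy remains on the sender's side and is traced out, so the CMI splits as an expectation over branch values.

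More importantly, your bound $w_i\,\QIC(\Pi_i,\mu^i_z)$ for the $I=i$ branch presupposes two things you never justify: that conditioned on $I=i$ the $i$-th gadget input is still distributed (essentially) as $\mu^i_z$, and that the flag messages themselves leak no information about $\tX\tY$. Neither is automatic, since the query register can become correlated with the inputs through the outcomes of earlier oracle calls. What saves the day is the promise: on $\supp(\mu_z)$ the ideal oracle string is the fixed string $z$, so in the zero-error simulation the entire query-algorithm state (hence every flag and every weight $w_i$) depends on the inputs only through $z$. The paper makes this quantitative by comparing the true state $\Psi^t_z$ to the ideal state $\Phi^t_z$ and invoking the Alicki--Fannes continuity bound, paying $O(\epsilon\log n + h(\epsilon))$ per query for the index register. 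This is also where your error budget is too weak: amplifying each $\Pi_i$ to error $o(1/T)$ suffices for correctness but not for this information accounting (you have $n$ flag-controlled runs per query); the paper amplifies to error $\delta/(T^{10}\log n)$, and it is exactly this $O(\log T+\log\log n+\log 1/\delta)$ amplification overhead that produces the $\log\log n$ factor in the statement. With the classicality-of-the-flag observation and the promise-based continuity step added (and the amplification retuned), your loop-over-$i$ variant does go through, but as written these essential pieces are missing.
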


\begin{proof}
Let $\hat{\Pi}_i$ be the amplified and purified version of $\Pi_i$,
reducing its worst-case error on $G$ to $\delta/T^{10}\log n$
and using the uncomputing trick to clean up garbage
($\delta$ will be chosen later).
Then the information cost of $\hat{\Pi}_i$
against any fixed distribution increases by a factor
of at most $O(\log T+\log\log n+\log1/\delta)$ compared to $\Pi_i$.
Next, $\Pi'$ be the protocol where Alice runs the query algorithm
for $F$, and whenever she needs to make a query $i$, she sends
$i$ to Bob and they compute gadget number $i$ using $\hat{\Pi}_i$.
Since $F$ succeeds with bounded error on worst-case inputs
and since $\hat{\Pi}_i$ has such a low probability of error,
the protocol $\Pi'$ correctly computes $F\circ G$ on
worst-case inputs.

Fix $z\in\Dom(F)$ and $\mu_z$ supported on $(G^{\oplus n})^{-1}(z)$.
We will expand out $\QIC(\Pi',\mu_z)$.
In round $t\le T$ of the query algorithm, there
are two types of messages between Alice and Bob: one
message from Alice to Bob containing a copy $E_t$ of the query
register $D_t$ for step $t\le T$, which Alice knows from her simulation
of the algorithm $Q$ for $F$; and
all the messages between Alice and Bob implementing $\hat{\Pi}_i$.
Denote those messages by $C_{t,j}$. Note that
$E_t$ also gets passed back from Bob to Alice at the end of each
round for cleanup purposes.

We name the rest of the registers.
Let the input registers be $X$ and $Y$, and let Alice hold
register $D_t$ specifying the position
to query at round $t$, a work register $\tilde{A}_t$
related to the implementation of the algorithm
$Q$ for $f$ (which stays untouched for all $j$),
and register $A_{t,j}$ related to the implementation
of the $\hat{\Pi}_i$ protocols
for round $t$. Bob holds query register $E_t$
(passed from Alice, untouched for all $j$)
as well as work register $B_{t,j}$
for the implementation of the $\hat{\Pi}_i$ protocols.
Let $R$ be the purification register. Then using $r$
to denote the index of the last round of the $\hat{\Pi}_i$, we have
\begin{align*}
\QIC(\Pi',\mu_z)&=\sum_{t=1}^T I(\tX\tY:E_t|YB_{t,0})_{\Psi^t_z}
+\sum_{t=1}^T I(\tX\tY:E_t|XA_{t,r}\tilde{A}_tD_t)_{\Psi^t_z}\\
&+\sum_{t=1}^T\sum_{j\text{ odd}} I(\tX\tY:C_{t,j}|YB_{t,j}E_t)_{\Psi^{t,j}_z}
+\sum_{t=1}^T\sum_{j\text{ even}} I(\tX\tY:C_{t,j}|XA_{t,j}D_t\tilde{A}_t)_{\Psi^{t,j}_z}.
\end{align*}

For the terms $I(\tX\tY:E_t|YB_{t,0})_{\Psi^t_z}$ and $I(\tX\tY:E_t|XA_{t,r}\tilde{A}_tD_t)_{\Psi^t_z}$, we note that $B_{t,0}$ and $A_{t,r}$ are the start state on Bob's side and the end state on Alice's side for $\hat{\Pi}$, and can be assumed to be independent of all other registers. Hence we shall ignore the registers $B_{t,0}$ and $A_{t,r}$ in the conditioning systems. Let $\ket{\Phi^t}$ that is obtained by replacing the $\tilde{A_t}D_tE_t$ registers of $\ket{\Psi^t_z}$ with the state of the query algorithm for $f$ after $t$ queries (with the query register $D_t$ duplicated). $\ket{\Phi^t}_{\tilde{A}_tD_tE_t|zxy}$ depends on $x$ and $y$ only through $z$, which is fixed. Hence $I(\tX\tY:E_t|Y)_{\Phi^t_z}$ and $I(\tX\tY:E_t|X\tilde{A}_t)_{\Phi^t_z}$ are both 0. Clearly, $\Phi^t_z$ is the state the protocol would have been in if $\hat{\Pi}_i$ were run with 0 error. Since the protocol runs of $\hat{\Pi}_i$
make very small error instead, we have instead $\norm{\ket{\Psi^t}_z - \ket{\Phi^t}_z}_1 \leq \epsilon$,
where $\epsilon=O(\delta/\poly(T)\log n)$. This implies
\begin{align*}
I(\tX\tY:E_t|Y)_{\Psi^t_z} & = H(E_t|Y)_{\Psi^t_z} - H(E_t|\tX\tY Y)_{\Psi^t_z} \\
 & \leq H(E_t|Y)_{\Phi^t_z} - H(E_t|\tX\tY Y)_{\Phi^t_z} + 8\epsilon\log|E_t| + 4h(\epsilon) \\
 & = 8\epsilon + 4h(\epsilon).
\end{align*}
The total sum of
$I(\tX\tY:E_t|YB_{t,0})_{\Psi^t_z}$ over all $t$ is therefore at most $\delta/2$, and the same applies to $I(\tX\tY:E_t|X\tilde{A}_t)_{\Psi^t_z}$.

We then have
\begin{align*}
\QIC(\Pi',\mu_z)&\le \delta+\sum_{t=1}^T
\left(\sum_{j\text{ odd}} I(\tX\tY:C_{t,j}|YB_{t,j}E_t)_{\Psi^t_z}
    +\sum_{j\text{ even}} I(\tX\tY:C_{t,j}|XA_{t,j}D_t\tilde{A}_t)_{\Psi^t_z}\right)\\
&=\delta+ \sum_{t=1}^T\sum_{i=1}^n\Pr[D_t=i]\left(
    \sum_{j\text{ odd}} I(\tX\tY:C_{t,j}|YB_{t,j})_{\Psi^t_{z,D_t=i}} \right. \\
&   \quad \left. +\sum_{j\text{ even}} I(\tX\tY:C_{t,j}|XA_{t,j}\tilde{A}_t)_{\Psi^t_{z,D_t=i}}\right)\\
&=\delta+\sum_{t=1}^T\sum_{i=1}^n \Pr[D_t=i]\QIC(\hat{\Pi}_i,\mu^i_z)\\
&\le\delta+ T\max_i\QIC(\hat{\Pi}_i,\mu^i_z)\\
&\le\delta+ O(T(\log T+\log\log n+\log 1/\delta)\max_i\QIC(\Pi_i,\mu^i_z)).
\end{align*}
Setting $\delta= T\max_i\QIC(\Pi_i,\mu^i_z)$, but ensuring
$\epsilon\le 1/3$ (since we can't amplify a negative amount), we get
\[\QIC(\Pi',\mu_z)=O\left(T\max_i\QIC(\Pi_i,\mu^i_z)
\log\left(2+\frac{T^{10}\log n}{\max_i\QIC(\Pi_i,\mu^i_z)}\right)\right).
\qedhere\]
\end{proof}

\begin{lemma}\label{lem:all_to_OR}
Let $G$ be a gadget, let $\mu_0$ and $\mu_1$ be a $0$-distribution
and a $1$-distribution for $G$, let $n\in\mathbb{N}^+$, and
let $\Pi$ be a protocol computing $\OR_n\circ G$ (to bounded
error against worst-case inputs). Then there is a protocol $\Pi'$
computing $G^{\oplus n}$
(to bounded error against worst-case inputs) such that
\[\max_{z\in\B^n}\QIC\left(\Pi',\mu_z\right)
=\tO\left(\sqrt{n}\cdot \max_{z\in\B^n}\QIC\left(\Pi,\mu_z\right)\right).\]
\end{lemma}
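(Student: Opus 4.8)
The plan is to build a subset-$\OR$ oracle for the string $z=G^{\oplus n}(x,y)$ out of the given protocol $\Pi$, feed that oracle to a quantum query algorithm for combinatorial group testing, and then bound the information cost of the resulting protocol exactly as in the proof of \lem{query_QIC}. For a set $S\subseteq[n]$, note that if Alice and Bob run $\Pi$ on the $n$-tuple of gadget inputs obtained by keeping their true inputs at the coordinates in $S$ and filling every coordinate outside $S$ with a fresh sample from $\mu_0$ (which they can produce without communicating, from shared randomness), then $\Pi$ outputs $\OR_n\circ G$ of that tuple, which is exactly $\bigvee_{i\in S}z_i$. Hence one invocation of this padded $\Pi$ realises one subset-$\OR$ query to $z$; and when the true inputs are drawn from $\mu_z$, the padded input is of the form $\mu_w$ for a string $w$ agreeing with $z$ on $S$ and equal to $0$ off $S$, so that invocation costs $\QIC(\Pi,\mu_w)\le\max_{w'\in\B^n}\QIC(\Pi,\mu_{w'})$.

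Next I would invoke Belovs's quantum algorithm for combinatorial group testing \cite{Bel15}: it identifies the set of defective items among $n$ items — equivalently, it computes the relation $G^{\oplus n}$, i.e.\ outputs the string $z$ — using only $T=\tO(\sqrt n)$ adaptive quantum queries of the subset-$\OR$ form. Let $Q$ be this algorithm, simulated on Alice's side. In $\Pi'$, whenever $Q$ issues its (superposition of) subset queries $S$, Alice sends the register naming $S$ to Bob, they run the padded $\Pi$ coherently to write $\bigvee_{i\in S}z_i$ into the answer register, and the query register is passed back for cleanup — precisely the way the register $E_t$ is treated in \lem{query_QIC}. To keep the simulation faithful I would first amplify $\Pi$ so that each oracle call errs with probability $\ll 1/T$ and uncompute its garbage; by the same accounting as in \lem{query_QIC} this inflates the per-call information cost by only a $\polylog$ factor, and then all $T$ calls succeed together with high probability, so $Q$ outputs $z$ and $\Pi'$ computes $G^{\oplus n}$ to bounded error on worst-case inputs.

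For the information cost I would mimic the expansion in the proof of \lem{query_QIC}. Conditioned on $z$, the query registers naming the sets $S$ are determined by $z$ together with the (correct) oracle answers, hence they are independent of the purification registers $\tX\tY$ given $z$; by the continuity bound for conditional entropy their contribution to $\QIC(\Pi',\mu_z)$ is only the small error term $O(\epsilon)$ coming from the imperfect oracle. The rest of $\QIC(\Pi',\mu_z)$ is the sum over the $T$ rounds of the information cost of the padded $\hat\Pi$-call made in that round, each of which is at most $\max_{w'\in\B^n}\QIC(\hat\Pi,\mu_{w'})=\tO(1)\cdot\max_{w'\in\B^n}\QIC(\Pi,\mu_{w'})$; summing over $T=\tO(\sqrt n)$ gives $\QIC(\Pi',\mu_z)=\tO\bigl(\sqrt n\cdot\max_{w'\in\B^n}\QIC(\Pi,\mu_{w'})\bigr)$ for every $z$, which is the claim (after one fixes the $\delta$-type amplification parameter as in \lem{query_QIC}).

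I expect the main obstacle to be the group-testing step: getting the $\sqrt n$ rather than a trivial $n$ rests entirely on the quantum speedup of \cite{Bel15}, and one has to be careful that (i) $Q$ is adaptive, so the errors of the amplified-but-imperfect oracle calls propagate through $Q$ and must be controlled — this is what forces the $\ll 1/T$ error per call and the accompanying $\polylog$ overhead — and (ii) the bound is demanded for $\max_{z\in\B^n}$, including dense $z$, so one needs Belovs's algorithm to make only $\tO(\sqrt n)$ queries with no promise on the size of the defective set, together with the (easy) fact that the subset-$\OR$ simulation above is oblivious to that size. Everything else — the $\mu_0$-padding, the register bookkeeping, and the amplification accounting — is a direct transcription of the proof of \lem{query_QIC}.
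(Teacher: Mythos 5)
Your proposal is correct and follows essentially the same route as the paper: pad the coordinates outside $S$ with shared-randomness samples from $\mu_0$ to turn $\Pi$ into a subset-$\OR$ oracle for $z$, run Belovs's $O(\sqrt n)$-query combinatorial group testing algorithm on top of it, and account for the information cost via the composition argument of \lem{query_QIC} (the paper invokes that lemma as a black box on the tasks $u(G^{\oplus n})_S$ rather than redoing its expansion, and bounds each call by $\QIC(\Pi_S,\mu_z)=\QIC(\Pi_S,\mu_{z_S})\le\QIC(\Pi,\mu_{z_S})\le\max_w\QIC(\Pi,\mu_w)$, exactly as you do). Your two flagged concerns are handled the same way in the paper: per-call amplification to error well below $1/T$ inside \lem{query_QIC}, and Belovs's bound holds with no promise on the number of defective items.
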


\begin{proof}
Consider the following task: the goal is to output
a hidden string $z\in\B^n$, and the allowed queries are subset-OR
queries, meaning that for each subset $S\subseteq[n]$ there is
a query which returns $\OR(z_S)$ (which equals $1$ if $z_i=1$ for
some $i\in S$, and returns $0$ otherwise). We can model
this task as a query function $F$ on a promise set $P\subseteq\B^{2^n}$.
Each string in $P$ is a long encoding $u(z)\in\B^{2^n}$
of some string $z\in\B^n$,
with the long encoding $u(z)$ being a string with $(u(z))_S=\OR(z_S)$
for all $S$. In other words, $u$ is a function $u\colon\B^n\to\B^{2^n}$.
The function $F$ is defined by $F(u(z))=z$ for all
$z\in\B^n$, where $\Dom(F)=\{u(z):z\in\B^n\}$.
It is not hard to verify that this function is well-defined.

The function $f$ is sometimes called the combinatorial group testing
problem. We have $\D(F)\le n$, since we can query $u(z)_{\{i\}}$
for all $i\in[n]$ to get the bits $z_i$ one by one and then output
all of $z$. (Note that the input size to $F$ is of length $N=2^n$,
so $n$ does not represent the input size here.)
Belovs \cite{Bel15} showed that $\Q(F)=O(\sqrt{n})$.
This result will play a key role in our analysis here,
which is motivated by \cite{BBGK18}
(where this algorithm of Belovs was similarly used to reduce
direct-sum computations to OR computations).

Now, observe that $F\circ u$ is the identity function on $n$ bit strings.
The protocol $\Pi'$ for $G^{\oplus}$ will be a protocol
for $F\circ u\circ G$. We use \lem{query_QIC}
on the query function $F$ and the communication tasks
$u(G^{\oplus n})_1,u(G^{\oplus n})_2,\dots,u(G^{\oplus n})_{2^n}$.
The query algorithm for $F$ makes $T=O(\sqrt{n})$ queries.
Each of the communication tasks is of the following form:
take as input $n$ copies to $G$, and output the OR
of a fixed subset $S$ of the copies of $G$. To solve this task,
which we denote $F_S$, we describe a protocol $\Pi_S$.
In this protocol, Alice and Bob will use their shared randomness
to replace the inputs in positions $i\notin S$ by independent
samples from $\mu_0$. They will then run $\Pi$ to compute the OR
of the $n$ copies of $G$.

The correctness of $\Pi'$ is clear, so we analyze its information cost.
Fix $z\in\B^n$, and denote by $z_S$ the string satisfying
$(z_S)_i=z_i$ if $i\in S$ and $(z_S)_i=0$ if $i\in S$.
In order to upper bound $\QIC(\Pi',\mu_z)$ using \lem{query_QIC},
we let $\mu'_z$ be the distribution on strings of length
$\B^{n 2^n}$ that we get by sampling a string from $\mu_z$
and making $2^n$ copies of it.
We observe that that the behavior of $\Pi'$ when acting on $\mu_z$
is exactly the composed behavior of the query algorithm for $F$
composed with the protocols $\Pi_S$ acting on the distribution
$\mu'_z$; \lem{query_QIC} therefore gives us
\[\QIC(\Pi',\mu_z)=O\left(\sqrt{n}\cdot
\max_S\QIC(\Pi_S,\mu_z)
\log\left(2+\frac{n^5\log N}{\max_S\QIC(\Pi_S,\mu_z)}\right)\right)\]
(where we used the more precise bound given in the proof
of \lem{query_QIC}).
Recall that $\Pi_S$ replaces the samples of $\mu_z$
that correspond to bits $i\notin S$ with freshly-generated
samples from $\mu_0$, and then runs $\Pi$; hence
$\QIC(\Pi_S,\mu_z)=\QIC(\Pi_S,\mu_{z_S})\le \QIC(\Pi,\mu_{z_S})$.
The maximum over sets $S$ of $\QIC(\Pi,\mu_{z_S})$
is clearly at most the maximum over $w\in\B^n$ of $\QIC(\Pi,\mu_w)$.
Using $\log N=n$, we can therefore write
\[\QIC(\Pi',\mu_z)=O\left(\sqrt{n}\cdot
\max_w\QIC(\Pi,\mu_w)
\log\left(2+\frac{n^6}{\max_w\QIC(\Pi,\mu_w)}\right)\right).\qedhere\]
\end{proof}

\begin{lemma}\label{lem:OR_protocol}
Let $G$ be a gadget, let $\mu_0$ and $\mu_1$ be a $0$-distribution
and a $1$-distribution for $G$, and let $\Pi$ be a protocol
computing $G$ (to bounded error against worst-case inputs).
Then for any $n\in\mathbb{N}^+$, there is a protocol $\Pi'$
computing $\OR_n\circ G$ (to bounded error against worst-case inputs)
such that
\[\max_{z\in\B^n}\QIC\left(\Pi',\mu_z\right)
=O(n\QIC(\Pi,\mu_0)+\log n\cdot \QIC(\Pi,\mu_1)).\]
\end{lemma}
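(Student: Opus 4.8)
I would build $\Pi'$ by running $\Pi$ repeatedly on each gadget in a biased-random-walk test, scanning the $n$ gadgets in the order $1,2,\dots,n$ and halting as soon as one gadget is confidently seen to be a $1$. First replace $\Pi$ by a clean protocol of constant-factor-amplified worst-case error, say $1/10$ (this changes $\QIC(\Pi,\mu_0)$ and $\QIC(\Pi,\mu_1)$ by only a constant factor), whose output bit is measured after each run. For gadget $i$: maintain a classical counter $W$ starting at $0$; run $\Pi$ on gadget $i$, set $W \mathrel{+}= 1$ on output $1$ and $W \mathrel{-}= 1$ on output $0$; stop the test for gadget $i$ when $W$ reaches $-1$ (declare gadget $i$ a ``$0$'', move to $i+1$) or when $W$ reaches $h:=C\log n$ (halt $\Pi'$ and output $1$), capping the number of runs at $\poly(n)$ for termination. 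If every gadget is declared ``$0$'', output $0$. This is a legitimate quantum protocol (each run of $\Pi$ is controlled on the classical counter and halt flags), with round count $\poly(n)$ times that of $\Pi$.

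\textbf{Correctness.} If $z=0^n$, each gadget's walk has negative drift, so it reaches $h$ with probability at most $e^{-\Omega(h)}=1/\poly(n)$ by Hoeffding; a union bound over the $n$ gadgets shows $\Pi'$ outputs $0$ with probability $1-1/\poly(n)$. If $z$ has a $1$, a $1$-gadget's walk has positive drift and reaches $h$ before $-1$ with probability $\Omega(1)$ by the gambler's-ruin estimate; a slightly larger constant amplification pushes this above $2/3$, so the first $1$-gadget that $\Pi'$ reaches halts it with the correct answer with probability $\ge 2/3$. Hence the worst-case error is below $1/3$.

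\textbf{Information cost.} Fix $z$. Since the inputs to the $n$ gadgets are independent under $\mu_z$, and gadget $i$'s test is executed only in the (input-independent) event that $\Pi'$ reaches gadget $i$,
\[
\QIC(\Pi',\mu_z)\ \le\ \sum_{i=1}^{n}\Pr[\text{reach }i]\cdot\QIC(\text{test}_i,\mu_{z_i}),\qquad
\QIC(\text{test}_i,\mu_{z_i})\ \le\ \sum_{r\ge 1}\Pr[L_i\ge r\mid\mu_{z_i}]\cdot\QIC(\Pi,\nu_{i,r}),
\]
where $L_i$ is the (random) number of runs of $\Pi$ on gadget $i$ and $\nu_{i,r}$ is $\mu_{z_i}$ conditioned on the walk surviving the first $r-1$ runs (the second bound uses concavity of $\QIC(\Pi,\cdot)$ from \cite{Tou15} to aggregate over the detailed outcomes). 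The key tool is again concavity: if $\nu\le\lambda\mu$ pointwise with $\lambda\ge 1$, then writing $\mu=\tfrac1\lambda\nu+(1-\tfrac1\lambda)\mu'$ and using $\QIC\ge0$ gives $\QIC(\Pi,\nu)\le\lambda\QIC(\Pi,\mu)$. If $z_i=0$, then for $r\ge 2$ the worst-case error of $\Pi$ forces $\Pr[\text{survive }r-1\text{ runs}\mid(x,y)]\le e^{-\Omega(r)}$ for \emph{every} $0$-input $(x,y)$ (surviving requires a majority of the first $r-1$ runs to output $1$, against an input on which each run outputs $1$ with probability $\le 1/10$); hence $\nu_{i,r}\le(e^{-\Omega(r)}/p^{(r)})\mu_0$ with $p^{(r)}=\Pr[L_i\ge r]\le e^{-\Omega(r)}$, so $\Pr[L_i\ge r]\cdot\QIC(\Pi,\nu_{i,r})\le e^{-\Omega(r)}\QIC(\Pi,\mu_0)$; adding the trivial $r=1$ term $\QIC(\Pi,\mu_0)$ gives $\QIC(\text{test}_i,\mu_{z_i})=O(\QIC(\Pi,\mu_0))$, and summing over the at most $n$ zero-gadgets yields $O(n\QIC(\Pi,\mu_0))$. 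If $z_i=1$, then for $r\le 4h$ I use only the crude bound $\Pr[L_i\ge r]\cdot\QIC(\Pi,\nu_{i,r})\le\QIC(\Pi,\mu_1)$, while for $r>4h$ the positive drift again forces $\Pr[\text{survive }r-1\text{ runs}\mid(x,y)]\le e^{-\Omega(r)}$ uniformly over $1$-inputs (the walk has reached $h$ by then with overwhelming probability), so those terms decay geometrically; hence $\QIC(\text{test}_i,\mu_{z_i})=O(h\QIC(\Pi,\mu_1))=O(\log n\cdot\QIC(\Pi,\mu_1))$. Finally each $1$-gadget, once reached, halts $\Pi'$ with probability $\ge 2/3$ independently of the others, so $\Pr[\text{the }j\text{-th }1\text{-gadget is reached}]\le 2^{-\Omega(j)}$, and the geometric sum over the $1$-gadgets leaves $O(\log n\cdot\QIC(\Pi,\mu_1))$. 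Adding the two contributions gives $\max_z\QIC(\Pi',\mu_z)=O(n\QIC(\Pi,\mu_0)+\log n\cdot\QIC(\Pi,\mu_1))$.

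\textbf{Main obstacle.} The delicate point is the information-cost bookkeeping under the adaptive structure: a priori, conditioning on ``the walk has survived so far'' could blow up the per-run $\QIC$ badly, because $\QIC(\Pi,\cdot)$ is concave (not convex), so conditioning can only be controlled through the $\nu\le\lambda\mu$ estimate. The resolution is that the \emph{same} worst-case-error property of $\Pi$ that gives the walk a definite drift also makes the survival probability decay \emph{uniformly over all inputs} at exactly the rate that cancels the factor $1/p^{(r)}$ lost in the concavity bound; one must split the $1$-gadget sum into an $O(\log n)$-length ``ballistic'' regime handled crudely and a geometrically decaying tail. A secondary point is phrasing $\Pi'$ as a bona fide bounded-round quantum protocol — capping run counts, controlling on classical counter/halt registers, and reducing to a clean $\Pi$ — which costs only polynomial factors in round count and constant factors in $\QIC$.
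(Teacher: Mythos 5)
Your proposal is correct and follows essentially the same route as the paper: a sequential scan over the gadgets with a per-gadget biased random walk that halts on the first confidently-detected $1$, which is exactly the ``monkey on a cliff'' refinement the paper uses to shave the log factor off the $\QIC(\Pi,\mu_0)$ term (the paper first gives the simpler amplify-each-gadget version, losing an extra $\log n$, and then sketches your strategy). Your bookkeeping of the conditional input distributions under adaptive stopping, via the pointwise bound $\nu\le\lambda\mu$ combined with concavity of $\QIC(\Pi,\cdot)$, supplies details the paper leaves implicit and is sound.
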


\begin{proof}
In order to compute $\OR_n\circ G$, the
protocol $\Pi'$ will simply compute each copy of $G$ one at a time,
stopping as soon as a $1$ has been found. The idea is that this
will ensure the number of computations of $0$-inputs to $G$
is at most $O(n)$ while the number of computations of
$1$-inputs to $G$ is $\tO(1)$.

To be more formal, we consider a cleaned up version
$\hat{\Pi}$ of $\Pi$, which will have error $O(1/n)$
and which cleans up all the garbage and resets Alice and Bob's
states to their initial states after the computation is complete.
The protocol $\Pi'$ will run $\hat{\Pi}$ on each input to $G$,
in sequence, stopping when an output $1$ has been found.
To implement this, we will name the registers:
suppose the protocol $\hat{\Pi}$ uses registers $A$ and $O_A$
on Alice's side and registers $B$ and $O_B$ on Bob's side,
where $O_A$ and $O_B$ store the final output of $\hat{\Pi}$.
At the beginning of $\hat{\Pi}$, the registers are expected
to be $\ket{0}_A\ket{0}_B\ket{0}_{O_A}\ket{0}_{O_B}$.
The guarantee of $\hat{\Pi}$ is that at the end of the algorithm,
the registers will be in the state
$\ket{0}_A\ket{0}_B\ket{b}_{O_A}\ket{b}_{O_B}$,
where $b$ is close to the output of $G$ on that input.
We now implement $\Pi'$ by adding an additional register
on each side, denoted $S_A$ and $S_B$, which stores the strings
of outputs of all the runs of $\hat{\Pi}$. These
registers are each initialized to $0^n$. At the
end of run $i$ of $\hat{\Pi}$ (which computes gadget $i$),
Alice and Bob will each swap the register $O_A$ with
the $i$-th bit of $S_A$; this resets the registers
used by $\hat{\Pi}$ to be all zero, and it stores the output
of the $i$-th run of $\hat{\Pi}$ so that $\Pi'$ has access
to it. It also preserves the property that $S_A=S_B$ throughout
the algorithm.

The final detail is that in $\Pi'$, Alice and Bob only run
$\hat{\Pi}$ on gadget $i$ if they see that all the previous runs
resulted in output $0$; that is, they control the implementation
of $\hat{\Pi}$ on the registers $S_A$ and $S_B$ being equal to $0^n$.
This will ensure that once a $1$ is found, no further information
will be exchanged between Alice and Bob. The final output
of $\Pi'$ will be $0$ if $S_A$ and $S_B$ are $0^n$, and it will be
$1$ otherwise.

The correctness of $\Pi'$ (to worst-case bounded error) is clear,
so we analyze its information cost against $\mu_z$ for a
fixed string $z\in\B^n$. The information cost $\QIC(\Pi',\mu_z)$
is a sum of information exchanged over all rounds;
let $\QIC_i(\Pi',\mu_z)$ denote the sum of information
exchanged only in the rounds corresponding to the computation
of the $i$-th copy of $G$, so that
$\QIC(\Pi',\mu_z)=\sum_{i=1}^n\QIC_i(\Pi',\mu_z)$.

Let $T$ be the number of rounds used by $\hat{\Pi}$.
Let $S_{A,i}$ and $S_{B,i}$ be the registers $S_A$ and $S_B$
during the computation of the $i$-th copy of $G$. 
Use $X$ and $Y$ to denote Alice and Bob's inputs respectively,
with $X_i$ and $Y_i$ being the inputs to copy $i$ of $G$
and with $\tilde{X}$ and $\tilde{Y}$ denoting their purifications,
and let $C$ be the register passed back and forth between
Alice and Bob in $\hat{\Pi}$. Then
\[\QIC_i(\Pi',\mu_z)=
\sum_{t\le T\text{ odd}} I(\tX\tY:C_t|YB_t S_{B,i})
+\sum_{t\le T\text{ even}}I(\tX\tY:C_t|XA_t S_{A,i}).\]
We note that the register $S_{B,i}$ in the odd terms is classical,
as is the register $S_{A,i}$. Hence the conditional
mutual information conditioned on $S_{B,i}$ is the expectation
of the conditional mutual information conditioned on the events
$S_{B,i}=w$ for each string $w\in\B^n$ (see, for example,
\cite{BGK+18}, end of Section 3.1). In other words,
\[I(\tX\tY:C_t|YB_t S_{B,i})=
\Pr[S_{B,i}=0^n] I(\tX\tY:C_t|YB_t)_{S_{B,i}=0^n}
+\Pr[S_{B,i}\ne 0^n] I(\tX\tY:C_t|YB_t)_{S_{B,i}\ne 0^n}.\]
Note that by the construction of $\Pi'$, in the second term
we actually have $I(\tX\tY:C_t|YB_t)_{S_{B,i}=\ne 0^n}=0$,
since the registers $C_t$ are all $0$ as Alice and Bob do not
run $\hat{\Pi}$ at all when $S_{B,i}\ne 0^n$.
The term $I(\tX\tY:C_t|YB_t)_{S_{B,i}=0^n}$
is just $I(\tX_i\tY_i:C_t|Y_iB_t)$, since
the run of $\hat{\Pi}$ ignores everything outside of the input
to the $i$-th copy of $G$. Hence we have
\begin{align*}
\QIC_i(\Pi',\mu_z)&=
\Pr[S_{B,i}=0^n]\sum_{t\le T\text{ odd}}I(\tX_i\tY_i:C_t|Y_iB_t)
+\Pr[S_{A,i}=0^n]
\sum_{t\le T\text{ even}}I(\tX_i\tY_i:C_t|X_iA_t)\\
&=\Pr[S_{A,i}=0]\QIC(\hat{\Pi},\mu_{z_i}).
\end{align*}
From this, it follows that
\[\QIC(\Pi',\mu_z)=\sum_{i=1}^n \Pr[S_{A,i}=0^n]\QIC(\hat{\Pi},\mu_{z_i}).\]

To upper bound this, we note that the total sum of
all the terms $\Pr[S_{A,i}=0^n]\QIC(\hat{\Pi},\mu_{z_i})$ for
$i$ such that $z_i=0$ is at most $n\QIC(\hat{\Pi},\mu_0)$,
where we've upper bounded $\Pr[S_{A,i}=0^n]\le 1$.
For $i$ such that $z_i=1$, we split into two cases:
in the case where $i$ is the first index such that $z_i=1$,
we upper bound
$\Pr[S_{A,i}=0^n]\QIC(\hat{\Pi},\mu_{z_i})\le \QIC(\hat{\Pi},\mu_1)$.
In contrast, for all $i$ such that $z_i=1$ and for which
there was a previous index $j<i$ with $z_j=1$, we note that
the $1/n$ error guarantee of $\hat{\Pi}$ ensures that
$\Pr[S_{A,i}=0^n]\le 1/n$; hence these terms are individually
at most $(1/n)\QIC(\hat{\Pi},\mu_1)$, and the sum of all of them
is at most $\QIC(\hat{\Pi},\mu_1)$. We conclude that
\[\QIC(\Pi',\mu_z)\le n\QIC(\hat{\Pi},\mu_0)+2\QIC(\hat{\Pi},\mu_1).\]
Finally, we note that $\hat{\Pi}$ simply repeats $\Pi$
$O(\log n)$ times and takes a majority votes in order to amplify
(and then runs this in reverse to clean up garbage). Hence
we have
\begin{align*}
\QIC(\hat{\Pi},\mu_0)&=O(\log n\cdot \QIC(\Pi,\mu_0)),\\
\QIC(\hat{\Pi},\mu_1)&=O(\log n\cdot \QIC(\Pi,\mu_1)).
\end{align*}
This gives the upper bound on $\QIC(\Pi',\mu_z)$ of
$O(n\log n\cdot \QIC(\Pi,\mu_0)+\log n\cdot \QIC(\Pi,\mu_1))$.

Finally, we sketch how to shave the log factor from the $\QIC(\Pi,\mu_0)$
term. To do so, we avoid amplifying $\hat{\Pi}$. Instead,
we simply run $\hat{\Pi}$ on each input. If the output is $1$,
we run $\hat{\Pi}$ again on the same copy of $G$. We do so until
the number of $0$ outputs outnumbers the number of $1$ outputs.
If $O(\log n)$ repetitions happened and the number of $1$-outputs
is still larger than the number of $0$ inputs, we finally
``believe'' that this gadget evaluates to $1$ and halt.
Otherwise, if the $0$s outnumber the $1$s before that point,
then we assume the gadget evaluated to $0$ and move on to the
next one.

By analyzing this as the ``monkey on a cliff'' problem,
it is not hard to see that a $1$ gadget is correctly
labelled as such with constant probability.
The total number of runs of $\hat{\Pi}$ on $0$-inputs
will, on expectation, be at most $O(n)$, while the total
number of runs of $\hat{\Pi}$ on $1$-inputs will be
at most $O(\log n)$ on expectation; since we avoided
the $O(\log n)$ loss from amplification, this protocol
is more efficient, and we shave a log factor from the
$\QIC(\Pi,\mu_0)$ dependence.\footnote{We thank
Thomas Watson and Mika G{\"o}{\"o}s for pointing out
this ``monkey on a cliff'' strategy for computing OR on a noisy oracle.}
\end{proof}

We are now ready to prove \thm{am_gm}.

\begin{proof} (of \thm{am_gm}.)
Using \lem{OR_protocol}, we get a protocol $\Pi_2$
computing $\OR_n\circ G$ such that for any $z\in\B^n$,
$\QIC(\Pi_2,\mu_z)=O(n\QIC(\Pi,\mu_0)+\log n\cdot\QIC(\Pi,\mu_1))$.
Using \lem{all_to_OR}, we get a protocol $\Pi_3$
computing $G^{\oplus n}$ such that for any $z\in\B^n$,
\begin{multline*}
\QIC(\Pi_3,\mu_z)=\\
O\left((n^{3/2}\cdot\QIC(\Pi,\mu_0)+
\sqrt{n}\log n\cdot\QIC(\Pi,\mu_1))\log\left(2+\frac{n^6}
{n\QIC(\Pi,\mu_0)+\log n\cdot\QIC(\Pi,\mu_1)}\right)\right).
\end{multline*}
Finally, using \lem{dir_sum}, we get a protocol $\Pi_4$
computing $G$ such that
\begin{multline*}
\QIC(\Pi_4,\mu_0)+\QIC(\Pi_4,\mu_1)=\\
O\left(\left(\sqrt{n}\cdot\QIC(\Pi,\mu_0)+
\frac{\log n}{\sqrt{n}}\cdot\QIC(\Pi,\mu_1)\right)
\log\left(2+\frac{n^6}
{n\QIC(\Pi,\mu_0)+\log n\cdot\QIC(\Pi,\mu_1)}\right)\right).
\end{multline*}
Moreover, by negating the output of $G$, such a protocol
$\Pi_4$ also exists with the $\mu_0$ and $\mu_1$ reversed.

Now, assume without loss of generality that
$\QIC(\Pi,\mu_0)\le\QIC(\Pi,\mu_1)$.
Let $\ell$ be the ratio $\QIC(\Pi,\mu_1)/\QIC(\Pi,\mu_0)\ge 1$
(here we use the assumption that $\QIC(\Pi,\mu_0)>0$
and that $\QIC(\Pi,\mu_1)>0$).
Let $n\in\mathbb{N}^+$ be $\lceil2\ell\log2\ell\rceil$. Note that $n$
is at most $3\ell\log 2\ell$, so $n=\Theta(\ell\log 2\ell)$
and $\log n=\Theta(\log 2\ell)$. Using this value of $n$,
we get $\Pi'$ such that
\begin{multline*}
\QIC(\Pi',\mu_0)+\QIC(\Pi',\mu_1)=\\
O\left(\sqrt{\QIC(\Pi,\mu_0)\QIC(\Pi,\mu_1)}\log^{1/2}
\frac{\QIC(\Pi,\mu_0)+\QIC(\Pi,\mu_1)}
    {\sqrt{\QIC(\Pi,\mu_0),\QIC(\Pi,\mu_1)}}
    \cdot\log(2+\alpha)\right),
\end{multline*}
where 
\[\alpha=\frac{(\QIC(\Pi,\mu_0)+\QIC(\Pi,\mu_1))^{11}}
    {\QIC(\Pi,\mu_0)^6\QIC(\Pi,\mu_1)^6}.\]

If $\mu_0$ and $\mu_1$ are nontrivial, so that we have (say)
$\QIC(\Pi,\mu_0)>1/r^{10}$ and $\QIC(\Pi,\mu_1)>1/r^{10}$,
this can be simplified to
\[\QIC(\Pi',\mu_0)+\QIC(\Pi,\mu_1)
=O(\sqrt{\QIC(\Pi,\mu_0)\QIC(\Pi,\mu_1)}\log^{3/2} r).\]
Finally, since $n$ is at most polynomial in $r$, it is not
hard to check that each of these reductions increases the number
of rounds by only a polynomial factor in $r$, so the final
protocol $\Pi'$ has number of rounds $\poly(r)$.
\end{proof}

\subsection{Proving the lifting theorem}

\Advliftformal*

\begin{proof}
Let $\mu_0'$ and $\mu_1'$ be the distributions
for $G$ provided by \thm{QICZ_minimax}. Let $\mu_0$
be the equal mixture of $\mu_0'$ and the uniform distribution
over $0$-inputs to the $\AND_2$ gadget inside of $G$,
and let $\mu_1$ be the equal mixture of $\mu_1'$ and the
uniform distribution over the $1$-inputs to the $\OR_2$ gadget
inside of $G$. We note that for any protocol $\Pi$,
$\QIC(\Pi,\mu_0)\ge \QIC(\Pi,\mu_0')/2$ and
$\QIC(\Pi,\mu_1)\ge \QIC(\Pi,\mu_1')/2$. By \cite{BGK+18}, if $\Pi$ has $r$ rounds,
$\QIC(\Pi,\mu_0),\QIC(\Pi,\mu_1)=\Omega(1/r)$. So $\mu_0$ and $\mu_1$ are nontrivial for $G$.

Let $\Pi$ be a protocol computing $f\circ G$ to error $\epsilon$,
and let $r$ be the number of rounds used by $\Pi$,
and let $T$ be the communication cost of $\Pi$.
For $z\in\B^n$, we define 
\[ q'(z,i)\coloneqq \sum_{t \text{ odd}}I(\tX_i\tY_i:C_t|\tX_{<i}\tY_{<i}B'_t)_{\mu_z} + \sum_{t\text{ even}}I(\tX_i\tY_i:C_t|\tX_{>i}\tY_{>i}A'_t)_{\mu_z}\]
where $C_t$ is the message in the $t$-th round of $\Pi$ and $A'_t, B'_t$ are Alice and Bob's memory registers (which don't necessarily have safe copies of their inputs). By the chain rule of mutual information, we have
\[\sum_{i=1}^n q'(z,i) = \QIC(\Pi, \mu_z) \leq T\]
for all $z\in\B^n$. A feasible weight scheme $q(z,i)$ for $\Adv_1(f)$ will be defined by normalizing $q'(z,i)$.

Let $z,w\in\B^n$ be such that $f(z)$ and $f(w)$ are disjoint,
and such that their Hamming distance is $1$. Let
$i\in[n]$ be the bit on which they disagree, so that
$z^i=w$ (where $z^i$ denotes the string $z$ with bit $i$ flipped).
Suppose without loss of generality that $z_i=1$ and $w_i=0$.
In order to lower bound $q'(z,i)\cdot q'(w,i)$, we will use the protocol $\Pi$ for $f\circ G$ to construct a protocol $\Pi'$ for $G$.

The protocol $\Pi'$ is given by \cite{Tou15} (Lemma 4). Alice and Bob start with the shared entangled state of $\Pi$, as well the $\tX_{-i}X_{-i}\tY_{-i}Y_{-i}$ registers of their inputs and purification according to $\mu_{z_{-i}}$ (=$\mu_{w_{-i}}$) in $\Pi$, with Alice holding $A_0\tX_{<i}\tY_{<i}X_{-i}$ and Bob holding $B_0\tX_{>i}\tY_{>i}Y_{-i}$ (here $X_{-i}$ denotes $X_1\ldots X_{i-1}X_{i+1}\ldots X_n$ and the same is true for other variables). They will embed their inputs for $\Pi'$, which we call $X',Y'$ (with purifications $\tX'\tY'$), into the $i$-th input register for $\Pi$ (with $\tX', \tY'$ being embedded as $\tX_i, \tY_i$), and use their shared entanglement for the rest of the input registers, to run $\Pi$. After running $\Pi$, they will output $1$ if
$\Pi$ outputs a symbol in $f(z)$ (outputting $0$ otherwise).
Note that since $\Pi$ outputs a symbol in $f(z)$
with probability at least $1-\epsilon$ when given an input
from $(G^{\oplus n})^{-1}(z)$ and with probability at most $\epsilon$
when given an input from $(G^{\oplus n})^{-1}(w)$
(since $f(w)\cap f(z)=\varnothing$), it follows that
$\Pi'$ succeeds to error $\epsilon$ on worst-case inputs
to $G$.

We now analyze the information cost of $\Pi'$. Against the distribution $\mu_0$,
\begin{align*}
\QIC(\Pi', \mu_0) & = \sum_{t \text{ odd}}I(\tX_i\tY_i:C_t|\tX_{<i}\tY_{<i}B'_t)_{\mu_{w_{-i}}\otimes \mu_{w_i}} + \sum_{t\text{ even}}I(\tX_i\tY_i:C_t|\tX_{>i}\tY_{>i}A'_t)_{\mu_{w_{-i}}\otimes\mu_{w_i}} \\
 & = \sum_{t \text{ odd}}I(\tX_i\tY_i:C_t|\tX_{<i}\tY_{<i}B'_t)_{\mu_w} + \sum_{t\text{ even}}I(\tX_i\tY_i:C_t|\tX_{>i}\tY_{>i}A'_t)_{\mu_w} = q'(w,i).
\end{align*}
Similarly, $\QIC(\Pi',\mu_1)=q(z,i)$,
so we have
\[\sqrt{q'(z,i)q'(w,i)}
=\sqrt{\QIC(\Pi',\mu_0)\QIC(\Pi',\mu_1)}.\]

By \thm{am_gm}, there is a protocol $\Pi''$ such that
\[\sqrt{q'(z,i)q'(w,i)}= \Omega\left(\frac{\QIC(\Pi'',\mu_0)+\QIC(\Pi'',\mu_1)}
{\polylog r}\right).\]
By the choice of $\mu_0$ and $\mu_1$, we therefore have
\[\sqrt{q'(z,i)q'(w,i)}=\Omega(\QICZ(G)/\polylog r),\]
and hence by taking $q(z,i)=O(\polylog r/\QICZ(G))\cdot q'(z,i)$,
we get $q(z,i)q(w,i)\ge 1$. If we start with a protocol $\Pi$
with number of rounds $r$ at most $\QCC_{\epsilon}(f\circ G)$,
we conclude
\[\QCC_\epsilon(f\circ G)=\tOmega(\Adv_1(f)\QICZ(G)),\]
as desired.
\end{proof}

\section{New query relations}\label{sec:query}

In this section, we prove our new relationships in
query complexity. We start by showing that $\cfbs(f)$
is equivalent to $\CAdv(f)$ for partial functions.
To do so, we will first need the well-known dual
form for the fractional block sensitivity
at a specific input, $\fbs(x,f)$. This dual form can
be derived by writing the weight scheme defining
$\fbs(x,f)$ as a linear program, and taking the dual;
this gives a minimization program in which
$\fbs(x,f)$ is the minimum, over weight schemes
$q(i)\ge0$ assigned to each $i\in[n]$ that satisfy
$\sum_{i\in B} q(i)\ge 1$ for each sensitive block
$B\subseteq[n]$ of $x$ (with respect to $f$),
of the sum $\sum_{i\in[n]} q(i)$.
See any of \cite{Aar08,Tal13,KT16} for a formal proof.

\begin{lemma}
$\cfbs(f)\le 2\CAdv(f)$.
\end{lemma}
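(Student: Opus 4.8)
The plan is to take an optimal feasible weight scheme $q(x,i)$ for the $\CAdv(f)$ program (so $\sum_{i\in[n]}q(x,i)\le\CAdv(f)$ for every $x\in\Dom(f)$) and use it simultaneously to \emph{build} a completion $f'$ of $f$ and to \emph{certify}, via the dual form of $\fbs$ recalled above, that $\fbs(x,f')\le 2\CAdv(f)$ for every $x\in\Dom(f)$. Concretely, I would define $f'\colon\B^n\to\B$ by: $f'(w)=f(w)$ for $w\in\Dom(f)$; and for $w\notin\Dom(f)$, if there exist $x\in\Dom(f)$ and a block $B\subseteq[n]$ with $x^B=w$ and $\sum_{i\in B}q(x,i)<1/2$, set $f'(w)=f(x)$ for any such witness $x$, and otherwise set $f'(w)=0$.

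The first thing to verify is that $f'$ is well defined, i.e.\ that two witnesses cannot force conflicting values. Suppose $x,x'\in\Dom(f)$ and blocks $B,B'$ satisfy $x^B=x'^{B'}$, $\sum_{i\in B}q(x,i)<1/2$ and $\sum_{i\in B'}q(x',i)<1/2$, but $f(x)\ne f(x')$. Since $x^B=x'^{B'}$, the set of coordinates on which $x$ and $x'$ disagree is exactly the symmetric difference $B\triangle B'$, so the $\CAdv$ feasibility constraint for the pair $(x,x')$ reads $\sum_{i\in B\triangle B'}\min\{q(x,i),q(x',i)\}\ge 1$. Bounding $\min\{q(x,i),q(x',i)\}$ by $q(x,i)$ on $B\setminus B'$ and by $q(x',i)$ on $B'\setminus B$, this sum is at most $\sum_{i\in B}q(x,i)+\sum_{i\in B'}q(x',i)<1$, a contradiction. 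Taking $x'=w$, $B'=\varnothing$ in the same computation shows the definition is also consistent with $f'(w)=f(w)$ for $w\in\Dom(f)$, so $f'$ is a genuine completion of $f$.

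It then remains to bound $\fbs(x,f')$ for each $x\in\Dom(f)$, and here I would apply the dual characterization of $\fbs(x,f')$ with the weights $\tilde q(i):=2q(x,i)$. Dual feasibility of $\tilde q$ amounts to $\sum_{i\in B}q(x,i)\ge 1/2$ for every block $B$ that is sensitive for $x$ with respect to $f'$; but if instead $\sum_{i\in B}q(x,i)<1/2$, then by construction $f'(x^B)=f(x)=f'(x)$, so $B$ is not sensitive --- a contradiction. Hence $\tilde q$ is dual-feasible, giving $\fbs(x,f')\le\sum_{i}\tilde q(i)=2\sum_i q(x,i)\le 2\CAdv(f)$; taking the maximum over $x\in\Dom(f)$ and then the minimum over completions yields $\cfbs(f)\le 2\CAdv(f)$.

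I expect the only genuinely delicate point to be the well-definedness of $f'$, and it is precisely there that the threshold $1/2$ is forced: a general threshold $\theta$ would give the dual bound with factor $1/\theta$, but well-definedness needs $\theta+\theta\le 1$, so $\theta=1/2$ is optimal for this approach. (The argument should go through for relations as well, replacing $\Dom(f)$ by $\crit(f)$ and using the disjointness form $f(x)\cap f(x')=\varnothing$ of the $\CAdv$ constraint.)
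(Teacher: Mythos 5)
Your proposal is correct and follows essentially the same strategy as the paper: use an optimal $\CAdv(f)$ weight scheme both to build a completion $f'$ and to certify, via the dual form of $\fbs(x,f')$ with weights $2q(x,i)$, that $\fbs(x,f')\le 2\CAdv(f)$ for every $x\in\Dom(f)$. The only difference is the rule for extending $f$ off its domain -- the paper assigns each $z\notin\Dom(f)$ the value of the domain point $z'$ minimizing $\sum_{i:z'_i\ne z_i}q(z',i)$ and then runs a ``$z'$ is closer than $x$'' inequality chain, whereas your threshold-$1/2$ rule replaces that chain with the well-definedness check -- and both yield the same factor $2$.
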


\begin{proof}
Let $q(x,i)$ be a feasible weight scheme for $\CAdv(f)$
with objective value equal to $\CAdv(f)$.
We construct a completion $f'$ of $f$ as follows.
For each $z\notin\Dom(f)$, let $z'\in\Dom(f)$ be the
input in the domain of $f$ which minimizes
$\sum_{i:z'_i\ne z_i} q(z',i)$. Set $f'(z)=f(z')$.
Now let $x$ be any input in $\Dom(f)$;
we wish to upper bound $\fbs(x,f')$.

To this end, we pick weights $q(i)=2q(x,i)$, and claim
that they are a feasible solution to the fractional
block sensitivity for $f'$ at $x$. Let $B$ be any
sensitive block for $x$ with respect to $f'$.
Then $x^B$ is some input $z$ which disagrees with $x$
exactly on the bits in $B$, and which
satisfies $f'(z)\ne f(x)$. Let $z'$ be the input in $\Dom(f)$
which minimizes $\sum_{i:z'_i\ne z_i} q(z',i)$, so that
$f'(z)=f(z')$. Then $f(z')\ne f(x)$, and in fact $z'$ must be
closer to $z$ than to $x$; hence
\begin{align*}
\sum_{i\in B} q(i) & = \sum_{i:x_i\ne z_i} 2q(x,i) \\
 & \ge \sum_{i:x_i\ne z_i} q(x,i) + \sum_{i:z'_i\ne z_i} q(z',i) \\
 & \ge \sum_{i:x_i\ge z_i} \min\{q(x,i),q(z',i)\}
+\sum_{i:z'_i\ne z_i}\min\{q(x,i),q(z',i)\} \\
 & \ge \sum_{i:x_i\ne z'_i}\min\{q(x,i),q(z',i)\}\ge 1.
\end{align*}
We conclude that $q(i)$ is feasible. Its objective value
is $\sum_{i\in[n]} q(i)=\sum_{i\in[n]} 2q(x,i)\le 2\CAdv(f)$,
and hence $\cfbs(f)\le 2\CAdv(f)$, as desired.
\end{proof}

\begin{lemma}[Kri{\v s}j{\=a}nis Pr{\=u}sis, personal communication]\label{lem:CAdv_cfbs}
$\CAdv(f)\le \cfbs(f)$.
\end{lemma}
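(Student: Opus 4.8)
The plan is to turn the defining (primal) program of $\cfbs(f)$ directly into a feasible weight scheme for the $\CAdv(f)$ program. Fix a completion $f'$ of $f$ achieving $\cfbs(f)$, so that $\fbs(x,f')\le\cfbs(f)$ for every $x\in\Dom(f)$. Recall the dual form of $\fbs$ at a point: $\fbs(x,f')$ equals the minimum of $\sum_{i\in[n]}p(i)$ over nonnegative weights $p$ satisfying $\sum_{i\in B}p(i)\ge 1$ for every sensitive block $B$ of $x$ with respect to $f'$ (and one may cap entries at $1$ without losing feasibility or increasing the $\ell_1$-norm). For each $x$ choose such a $p_x$; then the candidate scheme $q(x,i):=p_x(i)$ has objective value $\max_x\sum_i q(x,i)\le\cfbs(f)$, so everything reduces to checking the $\min$-feasibility of $\CAdv$. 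Note that there is extra slack available: one may inflate each $p_x$ up to $\ell_1$-norm exactly $\cfbs(f)$ without changing the objective, and this slack will be needed.

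The feasibility constraint to verify is $\sum_{i:x_i\ne y_i}\min\{q(x,i),q(y,i)\}\ge 1$ whenever $x,y\in\Dom(f)$ and $f(x)\ne f(y)$. Write $B=\{i:x_i\ne y_i\}$, so $y=x^B$ and $f'(x^B)\ne f'(x)$; hence $B$ is a sensitive block of both $x$ and $y$, which gives $\sum_{i\in B}p_x(i)\ge 1$ and $\sum_{i\in B}p_y(i)\ge 1$. The extra leverage comes from a coordinate-wise dichotomy: for each $i\in B$ consider the point $x^{\{i\}}=y^{B\setminus\{i\}}$; either $f'(x^{\{i\}})\ne f'(x)$, in which case $\{i\}$ is itself a sensitive block of $x$ and so $p_x(i)\ge 1$, or $f'(x^{\{i\}})= f'(x)\ne f'(y)$, in which case $B\setminus\{i\}$ is a sensitive block of $y$ and so $\sum_{j\in B\setminus\{i\}}p_y(j)\ge 1$; symmetrically, examining $y^{\{i\}}=x^{B\setminus\{i\}}$, for each $i\in B$ either $p_y(i)\ge 1$ or $\sum_{j\in B\setminus\{i\}}p_x(j)\ge 1$. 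If some $i\in B$ has both $p_x(i)\ge1$ and $p_y(i)\ge1$ we are done immediately, and for $|B|\le 3$ the dichotomy already forces $\sum_{i\in B}\min\{p_x(i),p_y(i)\}\ge 1$ by a short case analysis on where each coordinate-wise minimum is attained (using in the last step that $\sum_{i\in B}p_x(i)\ge1$ or $\sum_{i\in B}p_y(i)\ge1$).

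The hard part — the step for which getting the "right" definitions matters — is the case of large blocks $B$: a careless choice of the covers $p_x$ need not be $\min$-compatible across pairs, since the individually minimal covers can be supported on disjoint coordinate subsets of a shared block. The resolution is to choose the family $\{p_x\}$ in a coordinated fashion — making $p_x$ and $p_y$ agree as much as possible on each shared block $B=x\oplus y$, and distributing the leftover budget $\cfbs(f)-\|p_x\|_1$ over the coordinates — and to verify, using the dichotomy above together with the existence of the good completion $f'$, that this can be arranged simultaneously for all pairs. (Equivalently, one can argue on the LP-dual side: from a dual-optimal solution of the $\CAdv$ program one extracts, for some input $x$, a fractional matching of the $f'$-sensitive blocks of $x$ whose value equals the dual objective, which is therefore at most $\fbs(x,f')\le\cfbs(f)$; this bound holds for every completion, so $\CAdv(f)\le\cfbs(f)$.) Making this coordination (or the dual extraction) precise is where the real work lies; the argument as described is for partial functions, and combined with the previous lemma $\cfbs(f)\le 2\,\CAdv(f)$ it yields $\CAdv(f)=\Theta(\cfbs(f))$.
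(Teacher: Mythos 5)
Your setup is the same as the paper's first step (fix a completion $f'$ with $\fbs(x,f')\le\cfbs(f)$ for all $x\in\Dom(f)$, take the dual/covering form of $\fbs(x,f')$ to get nonnegative weights $p_x$ with $\sum_i p_x(i)\le\cfbs(f)$ and $\sum_{i\in B}p_x(i)\ge 1$ for every $f'$-sensitive block $B$ of $x$, and set $q(x,i)=p_x(i)$), but the actual content of the lemma --- verifying $\sum_{i:x_i\ne y_i}\min\{q(x,i),q(y,i)\}\ge 1$ for \emph{every} pair with $f(x)\ne f(y)$ --- is not established. Your coordinate-wise dichotomy only yields the claim for small disagreement blocks, and for large blocks you defer to an unspecified ``coordinated choice'' of the family $\{p_x\}$ (or an unspecified LP-dual extraction), conceding yourself that making this precise ``is where the real work lies.'' Neither route is carried out, and it is not evident that the proposed coordination can be arranged simultaneously for all pairs, so as written this is a genuine gap rather than a proof.

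The missing idea --- which also shows that no coordination is needed --- is a hybrid-input argument that exploits the totality of $f'$. Given $x,y\in\Dom(f)$ with $f(x)\ne f(y)$ and disagreement block $B$, define $z$ to agree with $x$ and $y$ outside $B$, and for $i\in B$ set $z_i=x_i$ if $p_x(i)\ge p_y(i)$ and $z_i=y_i$ otherwise. Since $f'$ is total and $f'(x)\ne f'(y)$, the value $f'(z)$ differs from at least one of them; say $f'(z)\ne f'(x)$. Then $C=\{i:x_i\ne z_i\}$ is a nonempty sensitive block of $x$ with respect to $f'$, so $\sum_{i\in C}p_x(i)\ge 1$, and by construction $\min\{p_x(i),p_y(i)\}=p_x(i)$ for every $i\in C$; hence $\sum_{i:x_i\ne y_i}\min\{p_x(i),p_y(i)\}\ge\sum_{i\in C}p_x(i)\ge 1$. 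The case $f'(z)\ne f'(y)$ is symmetric, using $D=\{i:y_i\ne z_i\}$, on which the minimum equals $p_y(i)$. This works for arbitrary, independently chosen optimal covers $p_x$, so the ``hard part'' you identify (and the slack you reserve by inflating the covers) is a red herring; with this one observation your argument closes and coincides with the paper's proof, giving objective value $\max_x\sum_i p_x(i)\le\cfbs(f)$ and hence $\CAdv(f)\le\cfbs(f)$.
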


\begin{proof}
Let $f'$ be a completion of $f$ for which
$\fbs(x,f')\le \cfbs(f)$ for all $x\in\Dom(f)$.
For each $x\in\Dom(f)$, let $q_x(i)$ be a feasible
weight scheme for the minimization problem of $\fbs(x,f')$
which satisfies $\sum_{i\in[n]} q_x(i)\le \fbs(x,f')\le \cfbs(f)$
and for each sensitive block $B$ of $f'$,
$\sum_{i\in B} q_x(i)\ge 1$.

We construct a weight scheme for $\CAdv(f)$ by setting
$q(x,i)=2q_x(i)$ for all $x\in\Dom(f)$.
We claim this weight scheme is feasible. To see this,
let $x,y\in\Dom(f)$ be such that $f(x)\ne f(y)$.
Define the input $z\in\B^n$ such that $z_i=x_i$
if $x_i=y_i$, and otherwise, $z_i=x_i$ if $q(x,i)\ge q(y,i)$
and $z_i=y_i$ if $q(y,i)> q(x,i)$.
Suppose that $f'(z)\ne f(x)$. Then
\begin{align*}
\sum_{i:x_i\ne y_i}\min\{q(x,i),q(y,i)\}
& =\sum_{i:x_i\ne z_i}\min\{q(x,i),q(y,i)\}
+\sum_{i:y_i\ne z_i}\min\{q(x,i),q(y,i)\} \\
& \ge\sum_{i:x_i\ne z_i}q(x,i)
+\sum_{i:y_i\ne z_i}q(y,i)\ge 1. \qedhere
\end{align*}
\end{proof}

\begin{lemma}\label{lem:adeg_fbs}
For any (possibly partial) Boolean function $f$, we have
\[\adeg_\epsilon(f)\ge
\frac{\sqrt{2}}{\pi} \sqrt{(1-2\epsilon)\fbs(f)}.\]
\end{lemma}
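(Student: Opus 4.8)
The plan is to run the usual ``polynomial method'' lower bound, but to route it through the \emph{dual} (covering) form of $\fbs(x,f)$ recalled just above the statement, since that form localizes the whole argument to a single input in $\Dom(f)$ and so never cares whether $f$ is total or partial. Start by cleaning up the approximating polynomial: let $p$ be a real polynomial of degree $d=\adeg_\epsilon(f)$ with $|p(x)-f(x)|\le\epsilon$ for $x\in\Dom(f)$ and $p(x)\in[0,1]$ on all of $\B^n$. Replacing $p$ by its multilinearization does not change its values on $\B^n$ and does not increase its degree, so we may assume $p$ is multilinear; then for $y$ in the solid cube $[0,1]^n$ the value $p(y)$ is a convex combination of the values $p(x)$, $x\in\B^n$, hence $p(y)\in[0,1]$ for every $y\in[0,1]^n$. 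Fix $x\in\Dom(f)$ with $\fbs(x,f)=\fbs(f)$.

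The core step is to manufacture a feasible point of the dual (minimization) program of $\fbs(x,f)$ out of $p$. For each sensitive block $B$ of $x$, the segment $\gamma_B(t):=(1-t)x+tx^B$ stays inside $[0,1]^n$ for $t\in[0,1]$ (in each coordinate it either stays at $x_i$ or moves monotonically from $x_i$ to $1-x_i$), so $g_B(t):=p(\gamma_B(t))$ is a univariate polynomial of degree $\le d$, bounded in $[0,1]$ on $[0,1]$, with $g_B(0)=p(x)$, $g_B(1)=p(x^B)$, and therefore $|g_B(0)-g_B(1)|\ge 1-2\epsilon$ because $B$ is sensitive. By the mean value theorem $|g_B'(t^*)|\ge 1-2\epsilon$ for some $t^*\in[0,1]$, and since $g_B'(t)=\sum_{i\in B}(1-2x_i)\,\partial_i p(\gamma_B(t))$ with $1-2x_i=\pm1$, we get $\sum_{i\in B}\sup_{y\in[0,1]^n}|\partial_i p(y)|\ge 1-2\epsilon$. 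Hence $q(i):=\frac{1}{1-2\epsilon}\sup_{y\in[0,1]^n}|\partial_i p(y)|\ge 0$ satisfies $\sum_{i\in B}q(i)\ge 1$ for every sensitive block $B$ of $x$, so by the dual characterization $\fbs(f)=\fbs(x,f)\le\sum_i q(i)=\frac{1}{1-2\epsilon}\sum_i\sup_{y\in[0,1]^n}|\partial_i p(y)|$.

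It remains to bound $\sum_i\sup_{y\in[0,1]^n}|\partial_i p(y)|$ by $O(d^2)$ with a sharp constant. This is a Markov-brothers-type inequality: it asserts that the total coordinatewise variation of a degree-$d$ polynomial bounded by $1$ on the cube cannot exceed $O(d^2)$, no matter how many variables there are, and it is exactly here that the constant $\pi^2/2$ enters, coming from the extremal Chebyshev polynomial (a single-variable restriction that spends all of its ``Markov budget'' on one axis). Granting $\sum_i\sup_{[0,1]^n}|\partial_i p|\le\frac{\pi^2}{2}d^2$, we conclude $\fbs(f)\le\frac{\pi^2}{2(1-2\epsilon)}\,\adeg_\epsilon(f)^2$, which rearranges to $\adeg_\epsilon(f)\ge\frac{\sqrt2}{\pi}\sqrt{(1-2\epsilon)\fbs(f)}$. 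Note that totality of $f$ was never used: the whole argument lives at the single input $x$ and its sensitive blocks.

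The step I expect to be the main obstacle is this last Markov-type inequality --- both pinning down the right constant and, more importantly, choosing the per-coordinate quantity (here $\sup_{[0,1]^n}|\partial_i p|$) so that it is simultaneously large enough on every sensitive block and small enough in total. An alternative is the classical Minsky--Papert symmetrization over a maximum family of \emph{disjoint} sensitive blocks of $x$, followed by an Ehlich--Zeller estimate (to turn a bound on a grid of $\fbs$-many points into a bound on the whole interval) and then Markov's inequality applied near the endpoint; that route, however, only produces $\bs(f)$ directly, and trying to push it up to $\fbs(f)$ runs into the difficulty that overlapping blocks in the fractional packing blow up the degree of the symmetrized polynomial. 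Avoiding symmetrization altogether, as in the dual-form argument above, sidesteps this entirely, which is the main reason I would take that route for the partial-function statement.
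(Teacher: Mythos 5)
Your reduction up to the last step is fine: multilinearizing $p$, applying the mean value theorem along the segment from $x$ to $x^B$ for each sensitive block $B$, and feeding $q(i)=\frac{1}{1-2\epsilon}\sup_{y\in[0,1]^n}|\partial_i p(y)|$ into the dual form of $\fbs(x,f)$ are all correct, and they do yield $\fbs(f)\le\frac{1}{1-2\epsilon}\sum_i\sup_{y\in[0,1]^n}|\partial_i p(y)|$. The gap is the step you ``grant'': the inequality $\sum_i\sup_{y\in[0,1]^n}|\partial_i p(y)|\le\frac{\pi^2}{2}d^2$ is not a Markov--brothers-type theorem --- it is false. Take $H$ an $n\times n$ Hadamard matrix and the multilinear degree-$2$ polynomial $q(z,w)=n^{-3/2}\sum_{i,j}H_{ij}z_iw_j$ on $[-1,1]^{2n}$. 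Since $\|H\|_{2\to2}=\sqrt{n}$, we have $|q|\le 1$ on the cube, yet $\sup_w|\partial_{z_i}q|=n^{-3/2}\sum_j|H_{ij}|=n^{-1/2}$ for every $i$ (and likewise for the $w_j$), so $\sum_i\sup|\partial_i q|=2\sqrt{n}$ while $d^2=4$. The affine change to domain $[0,1]^{2n}$ and range $[0,1]$ changes nothing. The root cause is exactly the issue you flagged: the suprema for different coordinates are attained at different points of the cube, and once you decouple them by taking a global supremum per coordinate, the sum can grow with $n$ rather than with $d^2$. (No restriction to ``polynomials approximating Boolean functions'' rescues the step either: $(1+q)/2$ approximates a partial function of small $\fbs$, showing your chain of inequalities is unboundedly lossy, not that the polynomial inequality secretly holds.)

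The paper avoids this by never leaving the neighbourhood of the single critical input: it composes $f$ with the promise function $\PrOR_k$ on the inside, which converts an optimal \emph{fractional} block-sensitivity packing at $x$ (after clearing denominators) into $k\fbs(f)$ genuinely \emph{disjoint} sensitive blocks of $f\circ\PrOR_k$ at $0^{kn}$, at the cost of multiplying the degree by only $O(\sqrt{k})$ (an exact Chebyshev-based polynomial for $\PrOR_k$). Then the standard bound $\adeg_\epsilon(g)\ge\sqrt{\tfrac{1-2\epsilon}{2(1-\epsilon)}\bs(g)}$ applies, and letting $k\to\infty$ gives the constant. If you want to salvage your route, you would need the derivative bounds for all blocks at a \emph{common} point (where $\sum_i|\partial_i p|\le d^2$ does hold, by Markov along an $\ell_\infty$-direction), but the mean value theorem gives you no control over where on each segment the large derivative occurs; handling the overlap between blocks is the whole difficulty, and it is what the $\PrOR$ composition is for.
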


\begin{proof}
Let $x\in\Dom(f)$ be such that $\fbs(x,f)=\fbs(f)$.
By negating the input bits of $f$ if necessary,
we may assume that $x=0^n$ (note that negating
input bits does not affect $\fbs(f)$ or $\adeg(f)$).
By negating the output of $f$ if necessary, we can
further assume that $f(0^n)=0$. Let $p$ be a polynomial
of degree $\adeg_\epsilon(f)$ which approximates
$f$ to error $\epsilon$.

Let $\PrOR_k$ be the promise problem on $k$
bits whose domain contains all the strings
of Hamming weights $0$ or $1$, and which
outputs $0$ given $0^k$ and outputs $1$ given an input
of Hamming weight $1$.

We give an exact polynomial representation of this function.
To do so, let $T_d$ be the Chebyshev polynomial of degree $d$;
this is the single-variate real polynomial satisfying
$T_d(\cos\theta)=\cos(d\theta)$.
This polynomial is bounded in $[-1,1]$ on the interval
$[-1,1]$. Moreover, it satisfies $T_d(1)=1$
and $T_d(\cos(\pi/d))=-1$. Hence the polynomial
$r(t)=(1-T_d(1-(1-\cos(\pi/d))t))/2$ evaluates to $0$
at $t=0$ and to $1$ at $t=1$. Moreover, since
this $T_d$ is bounded in $[-1,1]$ on the interval $[-1,1]$,
we conclude that $r(t)$ is bounded in $[0,1]$
on the interval $[0,2/(1-\cos(\pi/d))]$.
Since $\cos(z)\ge 1-z^2/2$, we have
$2/(1-\cos(\pi/d))\ge 4d^2/\pi^2$.
Hence $r(t)$ is bounded in $[0,1]$ on the interval
$[0,4d^2/\pi^2]$. If we pick $d$ such that
$4d^2/\pi^2\ge k$, that is, $d$ at least
$\lceil \pi \sqrt{k}/2\rceil$, then we would know
that $r(t)$ is bounded on $[0,k]$.
In that case, the $k$-variate
polynomial $q(x)=r(x_1+x_2+\dots+x_k)$
would exactly compute $\PrOR_k$,
and it would have degree at most 
$\lceil \pi \sqrt{k}/2\rceil\le \pi \sqrt{k}/2+1$.

Next, consider the function $f\circ \PrOR_k$.
We can approximate this function to error $\epsilon$
simply by plugging in $n$ independent copies of the polynomial
$q$ into the variables of the polynomial $p$.
This means that the approximate degree of $f\circ\PrOR_k$
to error $\epsilon$ is at most
$\adeg_\epsilon(f)\cdot(\pi \sqrt{k}/2+1)$.

On the other hand, we now claim that for
appropriate choice of $k$, we have
$\bs(0^{kn},f\circ\PrOR_k)\ge k\fbs(0^n,f)$,
and hence $\bs(f\circ \PrOR_k)\ge k\fbs(f)$.
To see this, let $\{w_B\}_B$ be an optimal
weight scheme for the fractional block sensitivity
of $0^n$ with respect to $f$, so that
$\sum_{B:i\in B} w_B\le 1$ and $\sum_B w_B=\fbs(f)$.
Note that since fractional block sensitivity
is a linear program, the optimal solution can be taken
to be rational; let $L$ be a common denominator
of all the weights, so that $Lw_B$ is an integer for
each sensitive block $B$. Now take $k$ to be an
integer multiple of $L$. For each sensitive block $B$
of $0^n$ with respect to $f$, we define $kw_B$
different sensitive blocks of $0^{kn}$ with respect
to $f\circ\PrOR_k$, such that all of the new blocks
are mutually disjoint. To do so, we simply use
a different bit in each copy of $\PrOR_k$
for each block. Since the sum of weights $w_B$
for blocks that use bit $i$ of the input to $f$
is at most $1$, the total number of new blocks we will
generate that use copy $i$ of $\PrOR_k$ is at most
$k$, and hence we can give each block a different
bit of that copy of $\PrOR_k$. The total number
of disjoint blocks will then be $k\sum_B w_B=k\fbs(f)$.

We conclude that $\bs(f\circ \PrOR_k)\ge k\fbs(f)$
as long as $k$ is a multiple of a certain integer $L$.
Now, by a standard result \cite{BBC+01,BdW02},
we know that the approximate degree to error
$\epsilon$ of a (possibly partial)
Boolean function is at least the square root of its
block sensitivity; more explicitly, we have
\[\adeg_\epsilon(f\circ\PrOR_k)\ge
\sqrt{\frac{1-2\epsilon}{2(1-\epsilon)}\bs(f\circ \PrOR_k)}
\ge \sqrt{\frac{1-2\epsilon}{2(1-\epsilon)}k\fbs(f)}.\]
Combined with our upper bound on this degree, we have
\[\adeg_\epsilon(f)\cdot(\pi \sqrt{k}/2+1)
\ge \sqrt{\frac{1-2\epsilon}{2(1-\epsilon)}k\fbs(f)},\]
and since $k$ can go to infinity, we must have
\[\adeg_\epsilon(f)\ge
\frac{\sqrt{2}}{\pi}
\sqrt{\frac{(1-2\epsilon)}{1-\epsilon}\fbs(f)},\]
from which the desired result follows.
\end{proof}

\begin{theorem}\label{thm:adeg_cfbs}
For all (possibly partial) Boolean functions $f$, we have
\[\adeg_\epsilon(f)\ge\frac{\sqrt{(1-2\epsilon)\cfbs(f)}}{\pi}.\]
\end{theorem}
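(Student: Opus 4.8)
The plan is to establish the equivalent bound $\cfbs(f)\le \pi^2\adeg_\epsilon(f)^2/(1-2\epsilon)$ by \emph{exhibiting} a completion of $f$ whose fractional block sensitivity is small at every point of $\Dom(f)$. Assume $\epsilon<1/2$ (otherwise the claim is vacuous) and fix a multilinear polynomial $p$ of degree $d=\adeg_\epsilon(f)$ that $\epsilon$-approximates $f$ on $\Dom(f)$ and is bounded in $[0,1]$ on $\B^n$; multilinearity is harmless and, via the identity $p(y)=\E_X[p(X)]$ for $X$ a product-of-Bernoullis sample with marginals $y$, it forces $p(y)\in[0,1]$ for \emph{all} $y\in[0,1]^n$. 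Define the completion $f'$ by $f'(z)=1$ if $p(z)\ge 1/2$ and $f'(z)=0$ otherwise; since $\epsilon<1/2$ it agrees with $f$ on $\Dom(f)$, so $\cfbs(f)\le\max_{x\in\Dom(f)}\fbs(x,f')$. Let $x$ attain this maximum. Negating input bits of $f$ and, if necessary, its output (which transforms $p$ and $f'$ consistently and leaves $\adeg_\epsilon$, $\cfbs$, and the value of this maximum unchanged), I may assume $x=0^n$ and $f(0^n)=0$; then $p(0^n)\le\epsilon$, and every block $B$ that is sensitive for $0^n$ with respect to $f'$ satisfies $p\big((0^n)^B\big)\ge 1/2$. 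It therefore suffices to prove $\fbs(0^n,f')\le\pi^2 d^2/(1-2\epsilon)$.

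For this I would reuse the $\PrOR$-composition argument from the proof of \lem{adeg_fbs}, but with the single polynomial $p$ serving both at the base point $0^n$ and at all the sensitive blocks. Pick an optimal rational fractional block scheme $\{w_B\}$ for $0^n$ with respect to $f'$, let $L$ be a common denominator of the $w_B$, and for a multiple $k$ of $L$ integerize the scheme into $m:=k\,\fbs(0^n,f')$ pairwise disjoint ``lifted'' blocks $\widetilde B_1,\dots,\widetilde B_m\subseteq[kn]$ of the function $f'\circ\PrOR_k$: each of the $kw_B$ copies of a sensitive block $B$ flips a single, distinct bit inside each $\PrOR_k$-block indexed by $B$, which is possible and keeps the copies disjoint because $\sum_{B\ni i}w_B\le 1$ means at most $k$ copies touch any one $\PrOR_k$-block. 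Let $q_k$ be the degree-$\lceil\pi\sqrt k/2\rceil$ polynomial from the proof of \lem{adeg_fbs} that computes $\PrOR_k$ exactly and is bounded in $[0,1]$ on $\B^k$, and set $P(X):=p\big(q_k(X^{(1)}),\dots,q_k(X^{(n)})\big)$. Then $P$ has degree at most $d\lceil\pi\sqrt k/2\rceil$; it is bounded in $[0,1]$ on all of $\B^{kn}$ (each $q_k(X^{(i)})\in[0,1]$, and $p$ is bounded on $[0,1]^n$); and $P(0^{kn})=p(0^n)\le\epsilon$, while $P\big((0^{kn})^{\widetilde B_j}\big)=p\big((0^n)^{B(j)}\big)\ge 1/2$ for each $j$, since $(0^{kn})^{\widetilde B_j}$ lies in the promise set of $\PrOR_k^{\oplus n}$ and projects down to $(0^n)^{B(j)}$.

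Finally I would restrict $P$ to the $m$-dimensional subcube obtained by flipping subsets of the pairwise disjoint blocks $\widetilde B_1,\dots,\widetilde B_m$ (an affine embedding of $\B^m$ into $\B^{kn}$), symmetrize over $S_m$, and pass to the associated univariate real polynomial $r$ of degree $D\le d\lceil\pi\sqrt k/2\rceil$ with $|r(t)|\le 1$ on $[0,m]$, $r(0)\le\epsilon$, and $r(1)=\tfrac1m\sum_j P\big((0^{kn})^{\widetilde B_j}\big)\ge 1/2$. The Markov brothers' inequality gives $\max_{[0,m]}|r'|\le 2D^2/m$, while the mean value theorem on $[0,1]$ yields $|r'(t)|\ge 1/2-\epsilon$ for some $t\in(0,1)$; hence $m\le 4D^2/(1-2\epsilon)$. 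Dividing by $k$ and letting $k\to\infty$ through multiples of $L$, so that $\lceil\pi\sqrt k/2\rceil^2/k\to\pi^2/4$, gives $\fbs(0^n,f')\le \pi^2 d^2/(1-2\epsilon)$, which is what was needed. The step I expect to require the most care is routine but load-bearing: keeping $P$, and hence $r$, bounded on the \emph{whole} interval, i.e.\ also at inputs outside the promise set of $\PrOR_k^{\oplus n}$ — which is exactly where multilinearity of $p$ is used — together with the standard symmetrization bookkeeping. Conceptually the only new feature relative to \lem{adeg_fbs} is that a block sensitive for the rounding completion $f'$ is merely a place where $p$ reaches $1/2$ rather than $1-\epsilon$, so the usable jump of $r$ shrinks from $1-2\epsilon$ to $1/2-\epsilon$; this single factor of $2$ is precisely why the constant degrades from $\sqrt2/\pi$ in \lem{adeg_fbs} to $1/\pi$ here.
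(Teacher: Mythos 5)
Your high-level plan is the same as the paper's: round $p$ at $1/2$ to get the completion $f'$, reduce to bounding $\fbs$ at a single worst-case input, and control that via composition with $\PrOR_k$ and the Chebyshev-based polynomial from \lem{adeg_fbs}. The paper differs only at the final step: instead of inlining the symmetrization-plus-Markov argument, it applies the affine rescaling $q=(2p+1-2\epsilon)/(3-2\epsilon)$, which turns the asymmetric gap ($p\le\epsilon$ at $x$, $p\ge 1/2$ at sensitive flips) into a genuine $\tfrac{1}{3-2\epsilon}$-approximation of the partial function $f'_x$ on $\{x\}\cup\{y:f'(y)\ne f'(x)\}$, and then invokes \lem{adeg_fbs} (which itself cites the standard $\adeg_\delta(g)\ge\sqrt{\tfrac{1-2\delta}{2(1-\delta)}\bs(g)}$ bound of \cite{BBC+01}) as a black box. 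Your decision to keep the asymmetric gap and redo the argument by hand is legitimate, and your multilinearity observation correctly handles the boundedness of the composed polynomial $P$ on all of $\B^{kn}$.

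The step that fails as written is exactly the one you flagged: the assertion that the symmetrized univariate polynomial $r$ satisfies $|r(t)|\le 1$ on the whole real interval $[0,m]$. Symmetrization of a $[0,1]$-bounded function on $\B^m$ only gives $r(t)\in[0,1]$ at the integers $t=0,1,\dots,m$; between integers $r$ can be large, and Markov's inequality needs boundedness on the interval. The naive repair (self-referential Markov: bound $\max_{[0,m]}|r|\le 1+\tfrac12\max|r'|$ and solve) yields $m\le D^2\bigl(1+\tfrac{4}{1-2\epsilon}\bigr)$ rather than $m\le \tfrac{4D^2}{1-2\epsilon}$, which after taking $k\to\infty$ falls short of the stated constant $1/\pi$ for every $\epsilon<1/2$. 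The correct tool is the Ehlich--Zeller/Rivlin--Cheney lemma: a degree-$D$ polynomial bounded in $[b_1,b_2]$ at the integers of $[0,m]$ with $|r'(\xi)|\ge c$ somewhere has $D\ge\sqrt{cm/(c+b_2-b_1)}$. With $c=1/2-\epsilon$ and $b_2-b_1=1$ this gives $m\le (3-2\epsilon)D^2/(1-2\epsilon)$, and since $3-2\epsilon\le 4$ the limit $k\to\infty$ still yields $\fbs(0^n,f')\le \pi^2 d^2/(1-2\epsilon)$ (in fact slightly better). So either import that lemma explicitly in place of the bare Markov step, or adopt the paper's rescaling trick and quote the $\adeg$-versus-$\bs$ bound, whose stated constant already absorbs this issue; everything else in your argument (the block integerization, disjointness bookkeeping, and exactness of $q_k$ on the $\PrOR_k$ promise set) is correct and matches the paper.
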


\begin{proof}
Let $p$ be a polynomial which approximates $f$ to error
$\epsilon$. Then $p(x)\in[0,1]$ for all $x\in\B^n$,
so define $f'(x)$ by $f'(x)=1$ if $p(x)\ge 1/2$ and
$f'(x)=0$ if $p(x)<1/2$. It is clear that $f'(x)=f(x)$
for all $x\in\Dom(f)$, so $f'$ is a completion of $f$.
Let $x\in\Dom(f)$ be an input so that
$\fbs(x,f')\ge\cfbs(f)$. To complete
the proof, it will suffice to lower bound
the degree of $p$ by $\Omega(\sqrt{\fbs(x,f')})$.

Suppose without loss of generality that $f(x)=0$
(otherwise, negate $f$ and $f'$ and replace $p$ with $1-p$).
Then we know that $p(x)\in[0,\epsilon]$,
and that for any $y\in\B^n$ such that $f'(x)\ne f'(y)$,
we have $p(y)\in[1/2,1]$. This means that the
polynomial $q(z)=(2p(z)+1-2\epsilon)/(3-2\epsilon)$
has the same degree as $p$, is bounded in $[0,1]$
on $\B^n$, and approximates $f'$ to error
$1/(3-2\epsilon)$ on the input $x$ and on all
inputs $y\in\B^n$ such that $f'(x)\ne f'(y)$.
In other words, consider the partial function $f'_x$
which is the restriction of $f'$ to the promise
set $\{x\}\cup\{y\in\B^n:f'(y)\ne f'(x)\}$.
Then $q$ approximates $f'_x$ to error $1/(3-2\epsilon)$,
and has the same degree as $p$. Now, it is not
hard to see that $\fbs(f'_x)=\fbs(x,f')$.
Hence it suffices to lower bound the degree of $q$
by $\Omega(\sqrt{\fbs(f'_x)})$. Such a lower bound follows
from \lem{adeg_fbs}; indeed, we conclude that
the degree of $p$ is at least
\[\frac{1}{\pi}\sqrt{\frac{1-2\epsilon}{1-\epsilon}\cfbs(f)}.
\qedhere\]
\end{proof}

\begin{theorem}
For all (possibly partial) Boolean functions $f$,
\[\CAdv(f)\le 2\Adv(f)^2.\]
\end{theorem}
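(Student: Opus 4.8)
The plan is to take a feasible weight scheme for the (minimization) program defining $\Adv(f)$ and rescale it into a feasible weight scheme for $\CAdv(f)$. Fix a feasible weight scheme $q(\cdot,\cdot)$ for $\Adv(f)$ and write $A\coloneqq\max_{x\in\Dom(f)}\sum_{i\in[n]}q(x,i)$ for its objective value. If $A=0$ then $q\equiv 0$, so there is no pair $x,y\in\Dom(f)$ with $f(x)\ne f(y)$ (otherwise the constraint $\sum_{i:x_i\ne y_i}\sqrt{q(x,i)q(y,i)}\ge 1$ would fail), hence $\CAdv(f)=0$ and we are done; so assume $A>0$. I will show $\CAdv(f)\le 2A^2$; since this holds for every feasible $q$ and $t\mapsto 2t^2$ is non-decreasing on $[0,\infty)$, taking the infimum over $q$ then yields $\CAdv(f)\le 2\Adv(f)^2$.

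The only non-bookkeeping step is the following claim: for all $x,y\in\Dom(f)$ with $f(x)\ne f(y)$,
\[\sum_{i:x_i\ne y_i}\min\{q(x,i),q(y,i)\}\ \ge\ \frac{1}{2A}.\]
To prove it, fix such a pair, let $S=\{i:x_i\ne y_i\}$, and for $i\in S$ set $m_i=\min\{q(x,i),q(y,i)\}$ and $M_i=\max\{q(x,i),q(y,i)\}$, so that $q(x,i)q(y,i)=m_iM_i$. By the $\Adv$ feasibility constraint for the pair $(x,y)$ and the Cauchy--Schwarz inequality,
\[1\ \le\ \sum_{i\in S}\sqrt{q(x,i)q(y,i)}\ =\ \sum_{i\in S}\sqrt{m_i}\,\sqrt{M_i}\ \le\ \Bigl(\sum_{i\in S}m_i\Bigr)^{1/2}\Bigl(\sum_{i\in S}M_i\Bigr)^{1/2}.\]
Since $\sum_{i\in S}M_i\le\sum_{i\in S}\bigl(q(x,i)+q(y,i)\bigr)\le\sum_{i\in[n]}q(x,i)+\sum_{i\in[n]}q(y,i)\le 2A$, squaring and dividing by $2A$ gives the claim.

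Given the claim, define $q'(x,i)\coloneqq 2A\cdot q(x,i)$ for each $x\in\Dom(f)$ and $i\in[n]$. Then for any $x,y\in\Dom(f)$ with $f(x)\ne f(y)$ we have $\sum_{i:x_i\ne y_i}\min\{q'(x,i),q'(y,i)\}=2A\sum_{i:x_i\ne y_i}\min\{q(x,i),q(y,i)\}\ge 1$, so $q'$ is a feasible weight scheme for $\CAdv(f)$; and its objective value is $\max_x\sum_i q'(x,i)=2A\max_x\sum_i q(x,i)\le 2A^2$, so $\CAdv(f)\le 2A^2$, as needed. I do not anticipate a real obstacle; the one point requiring care is to decompose $\sqrt{q(x,i)q(y,i)}=\sqrt{m_i}\sqrt{M_i}$ and bound $\sum_{i\in S}M_i$ by the sum of the two objective values \emph{before} invoking Cauchy--Schwarz — naive alternatives such as $q'(x,i)=q(x,i)^2$ do not keep the $\CAdv$ constraint dimension-free. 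The factor $2$ in the statement is exactly the bound $\sum_{i\in S}M_i\le 2A$. Finally, the argument is insensitive to whether $f$ is a partial function or a relation: replacing the hypothesis $f(x)\ne f(y)$ by $f(x)\cap f(y)=\varnothing$ throughout gives the same conclusion for relations.
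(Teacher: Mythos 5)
Your proposal is correct and follows essentially the same route as the paper's proof: the same split of each term $\sqrt{q(x,i)q(y,i)}$ into $\sqrt{\min}\cdot\sqrt{\max}$, the same Cauchy--Schwarz step with $\sum\max\le 2\Adv(f)$, and the same rescaling by $2\Adv(f)$ to obtain a feasible $\CAdv$ weight scheme. Your extra handling of the degenerate case and of the infimum over feasible schemes is fine but not needed beyond what the paper does.
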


\begin{proof}
Let $f$ be a (possibly partial) Boolean function,
and $q(x,i)$ be a feasible weight scheme for $\Adv(f)$ that has
$\sum_{i\in[n]} q(x,i)\le\Adv(f)$ for all $i$.
Fix any $x,y\in\Dom(f)$ such that $f(x)\ne f(y)$. Then
\[1\le \sum_{i:x_i\ne y_i}\sqrt{q(x,i)q(y,i)}
=\sum_{i:x_i\ne y_i}
\sqrt{\min\{q(x,i),q(y,i)\}\max\{q(x,i),q(y,i)\}}\]
\[\le \sqrt{\sum_{i:x_i\ne y_i}\min\{q(x,i),q(y,i)\}
\cdot\sum_{i:x_i\ne y_i}\max\{q(x,i),q(y,i)\}}.\]
Note that $\max\{q(x,i),q(y,i)\}\le q(x,i)+q(y,i)$,
and we know the sum over $i$ of $q(x,i)$ and $q(y,i)$
are each at most $\Adv(f)$. Hence we get
\[\sum_{i:x_i\ne y_i}\max\{q(x,i),q(y,i)\}\le 2\Adv(f),\]
and hence
\[\sum_{i:x_i\ne y_i}\min\{q(x,i),q(y,i)\}
\ge \frac{1}{2\Adv(f)}.\]
This means that if we scale the weights $q(x,i)$ up by
a uniform factor of $2\Adv(f)$, the resulting weight
scheme $q'(x,i)$ will be feasible for $\CAdv(f)$.
The objective value of this new weight scheme will then be
the maximum over $x$ of
\[\sum_{i\in[n]}q'(x,i)=2\Adv(f)\sum_{i\in[n]} q(x,i)
\le 2\Adv(f)^2,\]
so $\CAdv(f)\le 2\Adv(f)^2$, as desired.
\end{proof}

\section*{Acknowledgements}

We thank Rahul Jain and Dave Touchette
for helpful discussions related to
the $\QICZ(G)>0$ conjecture. We thank Robin Kothari
for helpful discussions related to the
adversary bounds. We thank Anne Broadbent
for helpful discussions related to quantum secure
2-party computation. We thank Mika G{\"o}{\"o}s
for helpful discussions regarding critical block
sensitivity and its lifting theorem. We thank
Jevg{\=e}nijs Vihrovs and
the other authors of \cite{AKPV18} for helpful discussions
regarding the classical adversary method, and
particularly Kri{\v s}j{\=a}nis Pr{\=u}sis
for the proof of \lem{CAdv_cfbs}.

A.A. is supported by the NSF QLCI Grant No. 2016245. S.B. is supported in part by the Natural Sciences and Engineering Research Council of Canada (NSERC), DGECR-2019-00027 and RGPIN-2019-04804\footnote{Cette recherche a été financée par le Conseil de recherches en sciences naturelles et en génie du Canada (CRSNG),
DGECR-2019-00027 et RGPIN-2019-04804.}. S.K. is supported by the National Research Foundation, including under NRF RF Award No. NRF-NRFF2013-13, the Prime Minister’s Office, Singapore; the Ministry of Education, Singapore, under the Research Centres of Excellence program and by Grant No. MOE2012-T3-1-009; and in part by the NRF2017-NRF-ANR004 VanQuTe Grant.
Part of this work was done when S.K. was visiting the Institute of Quantum Computing, University of Waterloo and when A.A. was affiliated to the Department of Combinatorics $\&$ Optimization and Institute for Quantum Computing, University of Waterloo, as well as to the Perimeter Institute for Theoretical Physics.

\phantomsection\addcontentsline{toc}{section}{References} 
\renewcommand{\UrlFont}{\ttfamily\small}
\let\oldpath\path
\renewcommand{\path}[1]{\small\oldpath{#1}}
\emergencystretch=1em 
\printbibliography

\end{document}